\newtheorem{theorem}{Theorem}[section]
\newtheorem{corollary}{Corollary}[section]
\newtheorem{proposition}{Proposition}[section]
\newtheorem{lemma}{Lemma}[section]
\newtheorem*{proof}{Proof.}
\newtheorem{example}{Example}[section]
\theoremstyle{remark}
\title{\bf Linear Codes from Projective Linear Anticodes Revisited}
\author{Hao Chen and Conghui Xie\thanks{Hao Chen and Conghui Xie are with the College of Information Science and Technology/Cyber Security, Jinan University, Guangzhou, Guangdong Province, 510632, China, haochen@jnu.edu.cn, conghui@stu2021.jnu.edu.cn. This research was supported by NSFC Grant 62032009.
}}
\begin{document}
\maketitle
\begin{abstract}
An anticode ${\bf C} \subset {\bf F}_q^n$ with the diameter $\delta$ is a code in ${\bf F}_q^n$ such that the distance between any two distinct codewords in ${\bf C}$ is at most $\delta$.  The famous Erd\"{o}s-Kleitman bound for a binary anticode ${\bf C}$ of the length $n$ and the diameter $\delta$ asserts that $$|{\bf C}| \leq \Sigma_{i=0}^{\frac{\delta}{2}} \displaystyle{n \choose i}.$$ In this paper, we give an antiGriesmer bound for $q$-ary projective linear anticodes, which is stronger than the above Erd\"{o}s-Kleitman bound for binary anticodes. The antiGriesmer bound is a lower bound on diameters of projective linear anticodes. From some known projective linear anticodes, we construct some linear codes with optimal or near optimal minimum distances. A complementary theorem constructing infinitely many new projective linear $(t+1)$-weight code from a known $t$-weight linear code is presented. Then many new optimal or almost optimal few-weight linear codes are given and their weight distributions are determined. As a by-product, we also construct several infinite families of three-weight binary linear codes, which lead to $l$-strongly regular graphs for each odd integer $l \geq 3$.\\

{\bf Index terms:} Projective linear anticode. The Erd\"{o}s-Kleitman bound. AntiGriesmer bound. AntiGriesmer defect. Optimal or almost optimal few-weight linear code. Complementary Reed-Solomon code. $l$-strongly walk-regular graph.
\end{abstract}

\newpage

\section{Introduction}\label{sec-1}

\subsection{Preliminaries}

The Hamming weight $wt({\bf a})$ of a vector ${\bf a}=(a_0, \ldots, a_{n-1}) \in {\bf F}_q^n$ is the cardinality of its support, $$supp({\bf a})=\{i: a_i \neq 0\}.$$ The Hamming distance $d({\bf a}, {\bf b})$ between two vectors ${\bf a}$ and ${\bf b}$ is $d({\bf a}, {\bf b})=wt({\bf a}-{\bf b})$. Then ${\bf F}_q^n$ is a finite Hamming metric space. The minimum (Hamming) distance of a code ${\bf C} \subset {\bf F}_q^n$ is, $$d({\bf C})=\min_{{\bf a} \neq {\bf b}} \{d({\bf a}, {\bf b}),  {\bf a} \in {\bf C}, {\bf b} \in {\bf C} \}.$$  One of the main goals in the theory of error-correcting codes is to construct codes ${\bf C} \subset {\bf F}_q^n$ with large cardinalities and minimum distances. In general, there are some upper bounds on cardinalities or minimum distances of codes.  Optimal codes attaining these bounds are particularly interesting, see \cite{MScode}. The Singleton bound for general $(n,M,d)_q$ codes asserts
$M \leq q^{n-d+1}$ and codes attaining this bound is called (nonlinear) maximal distance separable
(MDS) codes. Reed-Solomon codes are well-known linear MDS codes, see \cite{MScode}. Constructions of
non-Reed-Solomon type linear MDS codes have attracted many attentions, see, e.g.,  \cite{Chen1} and references therein. For counting MDS codes, we refer to \cite{Ghorpade}.\\

On the other hand, the diameter of a code ${\bf C} \subset {\bf F}_q^n$ is $$\delta({\bf C})=\max_{{\bf a} \neq {\bf b}} \{d({\bf a}, {\bf b}),  {\bf a} \in {\bf C}, {\bf b} \in {\bf C} \}.$$ It is also interesting to construct large anticodes with a fixed diameter, see \cite{AK,EKR,Kleitman,Farrell}. The following code-anticode bound in \cite{Delsarte,AAK} asserts that for a code ${\bf C} \subset {\bf F}_q^n$ with the minimum distance $d$ and an anticode ${\bf A} \subset {\bf F}_q^n$ with the diameter $d-1$, we have $$|{\bf C}| \cdot |{\bf A}| \leq q^n.$$

A linear code ${\bf C} \subset {\bf F}_q^n$ of the dimension $k$ and the minimum distance $d$ is denoted as a linear $[n,k,d]_q$ code.  It is clear that for a linear code, its minimum distance is just the minimum weight of nonzero codewords. Its diameter is the maximum weight of nonzero codewords. A linear code is called projective if the dual distance is at least three. In this paper, we only consider projective linear anticodes. For fixed $q,n,k$, if there is an linear $[n,k,d]_q$ code, and there is no $[n,k,d+1]_q$ code. We call this code optimal. If there is a linear $[n,k,d+1]_q$ code, but there is no linear $[n, k, d+2]_q$ code, we call this code almost optimal. We treat maximum weights of projective linear codes in this paper. The binary simplex $[2^k-1, k, 2^{k-1}]_2$ code can be considered as an anticode with the diameter $\delta=2^{k-1}$. For a linear code ${\bf C} \subset {\bf F}_q^n$, we denote by $A_i({\bf C})$ the number of codewords with the weight $i$, $0 \leq i \leq n$.

\subsection{Griesmer bound}

The Griesmer bound was proved in \cite{Griesmer}. It asserts that, for a linear $[n, k, d]_q$ code, $$n \geq \Sigma_{i=0}^{k-1} \lceil \frac{d}{q^i} \rceil,$$
or see \cite[Theorem 2.7.4, page 81]{HP}. Set $g_q(k,d)=\Sigma_{i=0}^{k-1} \lceil \frac{d}{q^i} \rceil$. The Griesmer defect of the linear code ${\bf C}$ is $g({\bf C})=n-g_q(k,d)$. A linear $[n, k, d]_q$ code attaining this bound, that is, the Griesmer defect zero code, is called a Griesmer code, see \cite{Solomon,Hu,Hu1}. A Griesmer defect one linear code is called an almost Griesmer code. It was proved in \cite{BM} that for fixed $q$ and $k$, there are Griesmer codes for any given fixed large minimum distance $d$. Therefore constructions of Griesmer codes have attracted many authors, see \cite{Solomon,Ding,Sihem2,Hu,Hu1,Xu,LL}. In \cite{Solomon}, many binary linear Griesmer codes were constructed and these codes are called Solomon-Stiffler codes later. These Solomon-Stiffler codes motivated the construction of binary linear anticodes in \cite{Farrell}, or see \cite[pp. 547-556]{MScode}. \\

\subsection{Linear codes with few nonzero weights}

Weight distributions are important in theory and practice. It is closely related to error probability of decoding algorithms, see \cite[Chapter 2]{Lint}. The famous Assmus-Mattson theorem in \cite{Assmus,DingTang} gives a construction of good designs from linear codes. Many constructions of designs from linear codes, see \cite{DingTang}, depend on weight distributions of dual codes. Therefore it is important and challenging to determine even partial weight distributions $A_i({\bf C})$ for some weights $i$ in the range $d\leq i \leq n$. Exact weight distributions $A_d({\bf C}), \ldots A_n({\bf C})$ are only determined for very few linear codes, such as MDS codes, some BCH codes and some linear codes with few nonzero weights, see \cite[Chapter 11]{MScode} and \cite{CG,Ding,DD,DD1,HengYue,HengDing,Kasami,carlet,DingTang,Xu}. Linear codes with few nonzero weights, or few-weight linear codes, have
been studied by many authors, see \cite{CG,Ding,DD,DD1,HengYue,Kasami,HengDing,carlet,DingTang}. Notice that for many few-weight linear codes constructed in previous papers \cite{Kasami,HengYue,Ding,DD,HengDing,DingTang,Xu}, their dual codes are known and have minimum distances at least three. Hence these few-weight linear codes can be thought as projective linear anticodes.\\

\subsection{Minimal codes}

Minimal codewords were first introduced in \cite{Huang} for decoding linear block codes. Then minimal binary linear codes were studied in \cite{CL} in the name of linear interesting codes. In \cite{AB}, minimal codes and minimal codewords were introduced for general $q$-ary linear codes and studied systematically.  A nonzero codeword ${\bf c}$ of a linear $[n,k,d]_q$ code ${\bf C}$ is called minimal, if for any codeword ${\bf c}'$ satisfying $supp({\bf c}') \subset supp({\bf c})$, then there is a nonzero $\lambda \in {\bf F}_q$, such that, ${\bf c}'=\lambda {\bf c}$. A linear code ${\bf C}$ is called minimal if each nonzero codeword is minimal. In \cite{Alfarano} combinatorial and geometric properties of minimal codes were studied.\\

In \cite{AB}, the Ashikhmin-Barg criterion $\frac{d({\bf C})}{\delta({\bf C})} > \frac{q-1}{q}$ was proposed as a sufficient condition for a $q$-ary linear code to be minimal. However, this is certainly not a necessary condition for minimality of linear codes. In \cite{CH} the first family of minimal binary linear codes violating the Ashikhmin-Barg criterion was constructed. Then several families of minimal binary linear codes violating the Ashikhmin-Barg criterion were constructed in \cite{DHZ} from binary linear codes with three nonzero weights, also see \cite{Alon,Sihem,Sihem2,Yan}.\\

\subsection{$l$-strongly walk-regular graphs}

We recall some basic facts about the $l$-strongly walk-regular graph (SWRG or $l$-SWRG).
A strongly regular graph (SRG) is a type of regular graph in which the number of common neighbors of two distinct vertices is determined only by their adjacency.
And $l$-SWRGs are the natural generalization of SRGs where length two paths are replaced by paths of length $l\geq2$, see \cite{Dam}.
The definition of an $l$-SWRG (defined as $\Gamma$) with parameters $(\lambda_{l}, \mu_{l}, \nu_{l})$ is that there are $\lambda_{l}$, $\mu_{l}$ and $\nu_{l}$ walks of length $l$ between every two adjacent, every two non-adjacent, and every two identical vertices, respectively.\\

SWRGs have wide applications in finite groups, codes, and design, see \cite{Dam,KKSS}. In \cite{KKSS}, it was proved that a binary three-weight projective linear code satisfying some conditions leads to a $3$-strongly walk-regular graph. In a recent paper \cite{ML},  an infinite family of binary three-weight projective linear codes with new parameters was constructed, which produces $l$-SWRGs for every odd $l \geq 3$, see \cite[Section 5]{ML}.\\

\subsection{Related works}

The intuition behind code-anticode bound is that all sets ${\bf C}+a$, $a \in {\bf A}$, are disjoint in ${\bf F}_q^n$. Sphere-packing bound can be thought as a special case of the code-anticode bound, see \cite{Delsarte,AAK,Byrne}. For the diametric theorem on $q$-ary anticodes, we refer to \cite{AK,AAK}. The Erd\"{o}s-Kleitman bound was proved in \cite{EKR} and \cite{Kleitman}, when Kleitman was with the Department of Physics in Brandies University. It asserts that an anticode in ${\bf F}_2^n$ with the diameter $\delta$ has at most $\Sigma_{i=0}^{\frac{\delta}{2}} \displaystyle{n \choose i}$ elements. It is clear that when $\delta=2$, the anticode consisting of $n+1$ elements ${\bf 0}, {\bf e}_1, \ldots, {\bf e}_n$ is an anticode attaining the Erd\"{o}s-Kleitman bound. Here ${\bf e}_i \in {\bf F}_2^n$ is the vector with the nonzero coordinate at the position $i$.\\

In \cite{Farrell}, the lower bound $$\delta\geq \frac{2^{k-1}n}{2^k-1}$$ for a binary linear projective anticode of the dimension $k$ and the diameter $\delta$ was proved. A Gilbert-like bound on anticode was proved in \cite{Reddy}.  From a similar idea of constructing Solomon-Stiffler codes, Farrell gave a construction of linear codes with optimal parameters or near optimal parameters from linear anticodes, see \cite{Farrell} and \cite[pp. 547-556]{MScode}, and vice versa. Optimal locally repairable codes were constructed from anticodes in \cite{SZ}. In \cite[Conclusion]{Zhou}, it was mentioned that the codes generated by the matrices deleting some columns in the generator matrices of the binary simplex codes, could be attractive. Similar idea motivated the paper \cite{LDT}. However the idea in \cite{Farrell,Solomon,Zhou,LDT} was not fully developed. We show that this relation between projective linear codes and anticodes can lead to numerous minimal linear codes and linear codes with two, three, and four nonzero weights, from our complementary theorem Corollary \ref{C-3-1}. More interestingly, some of these minimal linear codes and linear codes with few nonzero weights are optimal, almost optimal and near optimal. The basic point is as follows. Considering few-weight linear codes constructed in \cite{Ding,DD,DD1,HengYue,HengDing,Kasami,carlet,DingTang,Xu} as projective anticodes, since their maximum weights were determined, then complementary few-weight linear codes can be constructed immediately. The only remaining point is antiGrismer defect of these anticodes should be calculated. If the antiGriesmer defect is small, then the complementary code has a small Griesmer defect, thus optimal, almost optimal or near optimal.\\

It is well-known that a binary three-weight projective linear code with three weights $w_1<w_2<w_3$ and the length $n$ satisfying $$w_1+w_2+w_3=\frac{3n}{2}$$ lead to a $3$-strongly walk-regular graphs, see \cite{Dam,KKSS,ML}. Moreover if $w_2=\frac{n}{2}$, the binary three-weight projective linear code leads to $l$-strongly walk-regular graphs with each odd $l\geq 3$, see \cite{Dam,KKSS,ML}. Since several such binary three-weight projective linear codes are constructed in Section 5. They lead to many such $3$-strongly walk-regular and $l$-strongly walk-regular graphs, see Section 9.\\

\subsection{Our contributions}

Our main results are as follows. \\

\begin{theorem}[AntiGriesmer bound]\label{T-1-1}
Let $q$ be a prime power and $n$ be a positive integer satisfying $n<q^{k-1}$. Let ${\bf C} \subset {\bf F}_q^n$ be a projective linear code of the dimension $k$ and the maximum weight (diameter) $\delta$. Then
$$\Sigma_{i=0} \lfloor \frac{\delta}{q^i} \rfloor \geq n.$$
\end{theorem}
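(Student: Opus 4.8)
The plan is to reduce the antiGriesmer bound to the classical Griesmer bound by means of a \emph{complementary code} argument inside a simplex code. Since $\mathbf{C}$ is projective, a generator matrix $G$ of $\mathbf{C}$ has $n$ pairwise non‑proportional nonzero columns, so these columns represent $n$ distinct points of $\mathrm{PG}(k-1,q)$, and as $\dim \mathbf{C}=k$ they span $\mathbf{F}_q^k$. I would embed these $n$ columns as $n$ of the $N:=\frac{q^k-1}{q-1}$ columns of a generator matrix of the $q$-ary simplex code $\mathbf{S}$ of dimension $k$, and let $\bar{\mathbf{C}}$ be the linear code of length $\bar n:=N-n$ generated by the remaining columns.

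The key observation is that for every nonzero $v\in\mathbf{F}_q^k$ the three codewords $\mathbf{c}_v\in\mathbf{C}$, $\bar{\mathbf{c}}_v\in\bar{\mathbf{C}}$ and $\mathbf{s}_v\in\mathbf{S}$ obtained from the linear functional $v$ satisfy $\wt(\mathbf{c}_v)+\wt(\bar{\mathbf{c}}_v)=\wt(\mathbf{s}_v)=q^{k-1}$, because the simplex code $\mathbf{S}$ has all nonzero codewords of weight $q^{k-1}$. Hence $\wt(\bar{\mathbf{c}}_v)=q^{k-1}-\wt(\mathbf{c}_v)\ge q^{k-1}-\delta$, with equality attained for a $v$ realizing $\wt(\mathbf{c}_v)=\delta$. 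The hypothesis $n<q^{k-1}$ forces $\delta\le n<q^{k-1}$, so $\wt(\bar{\mathbf{c}}_v)>0$ for all nonzero $v$; in particular the codewords $\bar{\mathbf{c}}_v$ are pairwise distinct, so $\dim\bar{\mathbf{C}}=k$. Thus $\bar{\mathbf{C}}$ is a genuine $[\,\bar n,\,k,\,q^{k-1}-\delta\,]_q$ code.

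Applying the Griesmer bound to $\bar{\mathbf{C}}$ gives $\bar n\ge\sum_{i=0}^{k-1}\lceil (q^{k-1}-\delta)/q^i\rceil$. The proof then finishes by an elementary rearrangement: writing $N=\sum_{i=0}^{k-1}q^{k-1-i}$ and using $\lceil m-x\rceil=m-\lfloor x\rfloor$ for an integer $m$ (here $m=q^{k-1-i}$, $x=\delta/q^i$), each term $q^{k-1-i}-\lceil (q^{k-1}-\delta)/q^i\rceil$ equals $\lfloor \delta/q^i\rfloor$, whence $n=N-\bar n\le\sum_{i=0}^{k-1}\lfloor \delta/q^i\rfloor=\sum_{i\ge 0}\lfloor \delta/q^i\rfloor$ (the terms with $i\ge k-1$ vanish since $\delta<q^{k-1}$), which is the claimed bound.

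The main obstacle is conceptual rather than computational: one has to identify the right dual object, namely the complement of $\mathbf{C}$ inside the simplex code, and verify that it is a bona fide linear code of full dimension $k$ with minimum distance exactly $q^{k-1}-\delta$ — and it is precisely at this point that the hypothesis $n<q^{k-1}$ is indispensable, both to keep $q^{k-1}-\delta$ positive and to guarantee $\dim\bar{\mathbf{C}}=k$. Once the complementary code is set up, the remainder is routine bookkeeping with floor/ceiling functions and the standard Griesmer bound.
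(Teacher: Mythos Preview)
Your proposal is correct and follows essentially the same route as the paper: construct the complementary code inside the simplex code, observe it is a $[\,\frac{q^k-1}{q-1}-n,\,k,\,q^{k-1}-\delta\,]_q$ code, apply the Griesmer bound to it, and rewrite the resulting inequality via $\lceil\frac{q^{k-1}-\delta}{q^i}\rceil=q^{k-1-i}-\lfloor\frac{\delta}{q^i}\rfloor$. Your justification that $\dim\bar{\mathbf{C}}=k$ (via $q^{k-1}-\delta>0$) is a slight variant of the paper's argument (which counts columns: $\frac{q^k-1}{q-1}-n>\frac{q^{k-1}-1}{q-1}$, so the remaining columns cannot all lie in a hyperplane), but both are valid and the overall strategy is identical.
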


First of all, this bound can be thought as lower bounds on maximum weights of projective linear codes. This bound is stronger than the Erd\"{o}s-Kleitman bound. We construct many linear projective anticodes from fixed weight vectors, dual of BCH codes and linear codes with few nonzero weights, such that they are close this bound. From the proof of Theorem 2.1, if the complementary code attains the Griesmer bound, then the code attains the above antiGriesmer bound for projective linear anticodes.\\

\begin{theorem}\label{T-1-2}
Let $q$ be a prime power and $n$ be a positive integer satisfying $n<q^{k-1}$. Let ${\bf C} \subset {\bf F}_q^n$ be a projective linear code with the dimension $k_1$ and the diameter $\delta$. Let $k$ be a positive integer satisfying $k \geq k_1$ be a positive integer. Then a linear projective $[\frac{q^k-1}{q-1}-n, k, q^{k-1}-\delta]_q$ code is constructed.
\end{theorem}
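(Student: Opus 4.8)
The plan is to realize $\mathbf{C}'$ as a punctured $q$-ary simplex code, in the spirit of the Solomon--Stiffler and Farrell constructions. First I would fix a generator matrix $G_1$ of $\mathbf{C}$ of size $k_1\times n$. Since $\mathbf{C}$ is projective (dual distance at least three), no column of $G_1$ is zero and no two columns are scalar multiples of one another, so the columns of $G_1$ represent $n$ distinct points of $\mathrm{PG}(k_1-1,q)$. Append $k-k_1$ zero rows to obtain a $k\times n$ matrix $\tilde G_1$; its columns are then $n$ distinct points of $\mathrm{PG}(k-1,q)$, because distinct points of the subspace $\mathrm{PG}(k_1-1,q)$ stay distinct in $\mathrm{PG}(k-1,q)$. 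Let $G'$ be the $k\times\bigl(\frac{q^k-1}{q-1}-n\bigr)$ matrix whose columns run over one representative of each of the remaining $\frac{q^k-1}{q-1}-n$ points of $\mathrm{PG}(k-1,q)$, and set $\mathbf{C}'$ to be the row space of $G'$. By construction $\mathbf{C}'$ has length $\frac{q^k-1}{q-1}-n$ and is projective, its columns being pairwise non-proportional and nonzero.

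Next I would pin down the dimension. The matrix $G'$ has rank $k$ unless its columns all lie on some hyperplane $H$ of $\mathrm{PG}(k-1,q)$; but $H$ contains exactly $\frac{q^{k-1}-1}{q-1}$ points, so the complement of $H$ consists of $q^{k-1}$ points, and for all of them to be among the $n$ deleted columns one would need $n\geq q^{k-1}$, contradicting the hypothesis $n<q^{k-1}$. Hence $\rank(G')=k$ and $\dim\mathbf{C}'=k$.

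For the minimum distance I would compute the weight of an arbitrary nonzero codeword $\mathbf{v}G'$ with $\mathbf{v}\in\mathbf{F}_q^k\setminus\{\mathbf{0}\}$. Its weight equals the number of columns of $G'$ not annihilated by $\mathbf{v}$, which is the number of points of $\mathrm{PG}(k-1,q)$ off the hyperplane $\mathbf{v}^{\perp}$ minus the number of columns of $\tilde G_1$ off that hyperplane; the former count is $q^{k-1}$ and the latter is $\wt(\mathbf{v}\tilde G_1)$. Now $\mathbf{v}\tilde G_1$ depends only on the first $k_1$ coordinates of $\mathbf{v}$, and it is either $\mathbf{0}$, of weight $0$, or a nonzero codeword of $\mathbf{C}$, of weight at most the diameter $\delta$. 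Therefore $\wt(\mathbf{v}G')=q^{k-1}-\wt(\mathbf{v}\tilde G_1)\geq q^{k-1}-\delta$, with equality exactly when $\mathbf{v}$ selects a maximum-weight codeword of $\mathbf{C}$, which exists by definition of $\delta$. Since $\delta\leq n<q^{k-1}$ this value is positive, so $d(\mathbf{C}')=q^{k-1}-\delta$, and $\mathbf{C}'$ is the desired projective $[\frac{q^k-1}{q-1}-n,\,k,\,q^{k-1}-\delta]_q$ code.

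I expect the only genuinely delicate point to be the dimension step: the hypothesis $n<q^{k-1}$ must be invoked in precisely the right place, namely as the statement that one cannot delete the entire complement of an affine hyperplane, for otherwise the complementary columns would fail to span $\mathbf{F}_q^k$. Everything else is the standard dictionary between projective systems and linear codes, applied to the simplex code together with the observation that the weight of a codeword in the complementary code is $q^{k-1}$ minus the weight of the corresponding codeword in $\mathbf{C}$.
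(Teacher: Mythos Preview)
Your proposal is correct and follows essentially the same approach as the paper: take the columns of a generator matrix of $\mathbf{C}$, pad them with $k-k_1$ zero coordinates to view them as points of $\mathrm{PG}(k-1,q)$, and delete these points from the generator matrix of the $q$-ary simplex $[\frac{q^k-1}{q-1},k,q^{k-1}]_q$ code. The paper's proof is much terser---it simply states the construction---but the dimension argument it gives elsewhere (that $\frac{q^k-1}{q-1}-n>\frac{q^{k-1}-1}{q-1}$ forces the remaining columns to span $\mathbf{F}_q^k$) is exactly your hyperplane count, and the minimum-distance computation via $\wt(\mathbf{v}G')=q^{k-1}-\wt(\mathbf{v}\tilde G_1)$ is the same projective-system reasoning the paper invokes informally.
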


\begin{theorem}[Complementary theorem]\label{T-1-3}
Let $q$ be a prime power and $n$ be a positive integer satisfying $n<q^{k-1}$. Let ${\bf C} \subset {\bf F}_q^n$ be a projective linear code with the dimension $k$ and $t$ nonzero weights $w_1<\cdots<w_t$. Then we construct an explicit $[\frac{q^k-1}{q-1}-n, k, q^{k-1}-\delta({\bf C})]_q$ code ${\bf C}'$ with $t$ nonzero weights $q^{k-1}-w_t<\cdots<q^{k-1}-w_1$. Moreover $$A_{w_j}({\bf C})=A_{q^{k-1}-w_j}({\bf C}').$$ Let $K$ be a positive integer satisfying $K>k$, then we construct an explicit $[\frac{q^K-1}{q-1},K, q^{K-1}-\delta({\bf C})]_q$ code ${\bf C}''$ with $t+1$ weights $q^{K-1}-w_t<\cdots<q^{K-1}-w_1<q^{K-1}$. Moreover $$q^{K-k}A_{w_j}({\bf C})=A_{q^{K-1}-w_j}({\bf C}''),$$ and $$A_{q^{K-1}}({\bf C}'')=q^{K-k}-1.$$
\end{theorem}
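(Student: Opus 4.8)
The plan is to realize the complementary code geometrically via generator matrices, exploiting the standard correspondence between projective linear $[n,k]_q$ codes and multisets of $\frac{q^k-1}{q-1}$ points in $\mathrm{PG}(k-1,q)$, where each column of a generator matrix $G$ of $\mathbf C$ corresponds (projectively) to a point, and since $\mathbf C$ is projective these points are pairwise distinct. First I would fix a $k\times n$ generator matrix $G$ of $\mathbf C$ whose columns are $n$ distinct points of $\mathrm{PG}(k-1,q)$; let $G'$ be the $k\times(\frac{q^k-1}{q-1}-n)$ matrix whose columns are exactly the remaining points of $\mathrm{PG}(k-1,q)$ (one representative per projective point not used by $G$). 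Then the concatenated matrix $[G\mid G']$ is, up to column scaling, a generator matrix of the full simplex code $S_k$ of length $\frac{q^k-1}{q-1}$, so for every message vector $\mathbf a\in\mathbf F_q^k$, $\mathbf a\ne\mathbf 0$, one has $\wt(\mathbf a G)+\wt(\mathbf a G')=q^{k-1}$, because every nonzero codeword of $S_k$ has weight exactly $q^{k-1}$. Hence $\mathbf C'$, the code generated by $G'$, has nonzero weights $\{q^{k-1}-w : w \in \{w_1,\dots,w_t\}\}$ with exactly matching multiplicities, i.e.\ $A_{w_j}(\mathbf C)=A_{q^{k-1}-w_j}(\mathbf C')$; this immediately gives dimension $k$ (the weight-matching shows $\mathbf a\mapsto\mathbf a G'$ is injective), length $\frac{q^k-1}{q-1}-n$, and minimum distance $q^{k-1}-\delta(\mathbf C)$ since $\delta(\mathbf C)=w_t$. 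One should note $n<q^{k-1}\le\frac{q^k-1}{q-1}$ guarantees the length is positive and $G'$ is nonempty; projectivity of $\mathbf C'$ follows because distinct columns of $G'$ are distinct points of $\mathrm{PG}(k-1,q)$, hence no two coordinates are proportional and no coordinate is zero.

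For the second code $\mathbf C''$ with $K>k$, the plan is to embed $\mathbf F_q^k$ into $\mathbf F_q^K$ as the first $k$ coordinates and build a $K\times\frac{q^K-1}{q-1}$ generator matrix listing all points of $\mathrm{PG}(K-1,q)$, but organized so that the structure of $\mathbf C$ is visible. Concretely, I would take as columns of the generator matrix $H$ of $\mathbf C''$ all nonzero vectors $(\mathbf x,\mathbf y)\in\mathbf F_q^k\times\mathbf F_q^{K-k}$ up to scalar, grouped by the value of $\mathbf y$. The points with $\mathbf y=\mathbf 0$ form a copy of the simplex $S_k$ (all of $\mathrm{PG}(k-1,q)$); for a message $(\mathbf a,\mathbf b)$ with $\mathbf a\ne\mathbf 0$ this block contributes weight $q^{K-k}\cdot q^{k-1}=q^{K-1}$ after accounting for the blow-up factor $q^{K-k}$ coming from the $K-k$ extra affine directions — more precisely, each affine coset in the $\mathbf y$-direction contributes, and summing the contributions of the $\frac{q^{K-k}-1}{q-1}$ nonzero-$\mathbf y$ directions plus the $\mathbf y=\mathbf 0$ block yields total weight $q^{K-1}$ for every codeword with $\mathbf a\ne\mathbf 0$. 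The subtlety is to instead arrange $H$ so that $\mathbf C''$ visibly contains, as a shortening/restriction, the code $\mathbf C'$ scaled by $q^{K-k}$: the standard trick is to append to $G'$ a $(K-k)\times(\frac{q^k-1}{q-1}-n)$ zero block and then add all columns supported on the last $K-k$ coordinates (which form $S_{K-k}$) together with the appropriate "mixed" columns so that $[G'\text{-part}\mid\cdots]$ becomes $S_K$ minus the $n$ columns of $G$ lifted trivially; a clean way is to say $\mathbf C''$ is the complementary code (first construction applied with $K$ in place of $k$) of the length-$n$ code $\mathbf C$ viewed inside $\mathbf F_q^K$ via the trivial embedding of generator matrices, i.e.\ $G\mapsto\begin{pmatrix}G\\\mathbf 0\end{pmatrix}$.

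Adopting that second viewpoint, the proof of the $\mathbf C''$ part becomes a corollary of the $\mathbf C'$ part: let $\widetilde G=\begin{pmatrix}G\\ \mathbf 0_{(K-k)\times n}\end{pmatrix}$, a $K\times n$ matrix whose columns are $n$ distinct points of $\mathrm{PG}(K-1,q)$ (all lying in the subspace $\{\mathbf y=\mathbf 0\}$), and let $H$ list the remaining $\frac{q^K-1}{q-1}-n$ points. By the same simplex identity in dimension $K$, for every nonzero $\mathbf z=(\mathbf a,\mathbf b)\in\mathbf F_q^K$ we get $\wt(\mathbf z\widetilde G)+\wt(\mathbf z H)=q^{K-1}$. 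Now $\wt(\mathbf z\widetilde G)=\wt(\mathbf a G)$, which equals $0$ if $\mathbf a=\mathbf 0$ and equals $w_j$ (for the appropriate $j$) otherwise. Therefore $\mathbf C''$ has weight $q^{K-1}$ exactly when $\mathbf a=\mathbf 0$ and $\mathbf b\ne\mathbf 0$ — there are $q^{K-k}-1$ such messages, giving $A_{q^{K-1}}(\mathbf C'')=q^{K-k}-1$ — and weight $q^{K-1}-w_j$ when $\mathbf a$ lies in the weight-$w_j$ part of $\mathbf C$; since $\mathbf b$ ranges freely over $\mathbf F_q^{K-k}$ this multiplies each count by $q^{K-k}$, i.e.\ $q^{K-k}A_{w_j}(\mathbf C)=A_{q^{K-1}-w_j}(\mathbf C'')$. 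The length, dimension, minimum distance, and projectivity claims follow exactly as before. The main obstacle, and the only place real care is needed, is verifying the simplex identity $\wt(\mathbf z\widetilde G)+\wt(\mathbf z H)=q^{K-1}$ in the presence of repeated or scaled columns — one must check that deleting a set of $n$ \emph{distinct} projective points from the full point set of $\mathrm{PG}(K-1,q)$ and choosing arbitrary nonzero scalar representatives for the columns does not disturb the Hamming weights (scaling a column permutes nonzero entries among themselves, so weights are unchanged), and that the hypothesis $n<q^{K-1}$ (a fortiori $n<\frac{q^K-1}{q-1}$) keeps the complementary point set nonempty; everything else is bookkeeping on weight enumerators.
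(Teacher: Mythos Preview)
Your proposal is correct and follows essentially the same approach as the paper: form the complementary code by deleting the $n$ columns of a generator matrix of $\mathbf C$ from the full set of points of $\mathrm{PG}(k-1,q)$ (resp.\ $\mathrm{PG}(K-1,q)$ after the trivial zero-padding embedding $G\mapsto\begin{pmatrix}G\\ \mathbf 0\end{pmatrix}$), and read off all weight information from the simplex identity $\wt(\mathbf a G)+\wt(\mathbf a G')=q^{k-1}$. Your write-up is in fact more careful than the paper's (you check projectivity, column scaling, and nonemptiness explicitly), and your ``clean'' second viewpoint for $\mathbf C''$ is exactly the paper's argument; the exploratory block-decomposition paragraph preceding it can be dropped.
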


Similarly, we introduce the antiGriesmer defect of a linear code as $$\Sigma_{i=0}^{k-1} \lfloor \frac{\delta({\bf C})}{q^i} \rfloor-n.$$ This defect can reflect the Griesmer defect of its complementary code. Projective linear anticode meeting the above bound are called antiGriesmer codes. From the proof of Theorem 2.1, the complementary code of a projective Griesmer code is an antiGriesmer code, ans vice versa. In general, complementary codes of small antiGriesmer defect anticodes have small Griesmer defect. Therefore, we can consider these few-weight linear codes constructed in \cite{Ding,DD,DD1,HengYue,Kasami,HengDing,carlet,DingTang,Sihem,Sihem2,Xu} as projective linear anticodes, and calculate their antiGriesmer defects. If their antiGriesmer defects are small, complementary codes are close to the Griemser bound.\\

From this complementary theorem, infinitely many projective linear codes with $t+1$ nonzero weights close to the optimal parameters are constructed, see Section 5 below. Notice that these codes in the following table are minimal binary linear codes, since the Ashikhmin-Barg criterion is satisfied. As far as we know, no such anticode-based construction of few-weight linear codes has been reported in the literature. In the following Tables 1 and 2, we list some optimal, almost optimal, or near optimal few-weight linear codes constructed in this paper.\\

\begin{longtable}{|l|l|l|l|}
\caption{\label{tab-1-1} Minimal two-weight and four-weight codes}\\ \hline    
$q$ &Parameters     &Weights                &Optimal parameters                   \\ \hline
$2$ &$[9,4,4]_2$    &$4,5,6,8$              &optimal                              \\ \hline
$2$ &$[10,4,4]_2$   &$4,6$                  &optimal                              \\ \hline
$2$ &$[11,4,5]_2$   &$5,6,7,8$              &optimal                              \\ \hline
$2$ &$[23,5,10]_2$  &$10,12,14,16$          &almost optimal                       \\ \hline
$2$ &$[25,5,12]_2$  &$12,13,14,16$          &optimal                              \\ \hline
$2$ &$[27,5,13]_2$  &$13,14,15,16$          &optimal                              \\ \hline
$2$ &$[30,8,8]_2$   &$8,16$                 &$[30,8,12]_2$                        \\ \hline
$2$ &$[35,6,16]_2$  &$16,20$                &optimal                              \\ \hline
$2$ &$[43,6,20]_2$  &$20,22,24,32$          &optimal                              \\ \hline
$2$ &$[46,6,20]_2$  &$20,21,22,23,24,25,26$ &optimal                              \\ \hline
$2$ &$[47,6,22]_2$  &$22,24,26,32$          &almost optimal                       \\ \hline
$2$ &$[48,6,22]_2$  &$22,24,26,32$          &$[48,6,24]_2$                        \\ \hline
$2$ &$[51,6,24]_2$  &$24,26,28,32$          &optimal                              \\ \hline
$2$ &$[51,8,24]_2$  &$24,32$                &optimal                              \\ \hline
$2$ &$[53,6,22]_2$  &$22,26,28,32$          &$[53,6,26]_2$                        \\ \hline
$2$ &$[55,6,26]_2$  &$26,28,30,32$          &almost optimal                       \\ \hline
$2$ &$[57,6,28]_2$  &$28,29,30,32$          &optimal                              \\ \hline
$2$ &$[59,6,29]_2$  &$29,30,31,32$          &optimal                              \\ \hline
$2$ &$[88,7,40]_2$  &$40,44,48,64$          &$[88,7,42]_2$                        \\ \hline
$2$ &$[95,7,44]_2$  &$44,48,52,64$          &$[95,7,47]_2$                        \\ \hline
$2$ &$[96,7,44]_2$  &$44,48,52,64$          &$[96,7,48]_2$                        \\ \hline
$2$ &$[107,7,52]_2$ &$52,54,56,64$          &optimal                              \\ \hline
$2$ &$[111,7,54]_2$ &$54,56,58,64$          &almost optimal                       \\ \hline
$2$ &$[112,7,54]_2$ &$54,56,58,64$          &$[112,7,56]_2$                       \\ \hline
$2$ &$[115,7,56]_2$ &$56,58,60,64$          &optimal                              \\ \hline
$2$ &$[117,7,54]_2$ &$54,58,60,64$          &$[117,7,58]_2$                       \\ \hline
$2$ &$[119,7,58]_2$ &$58,60,62,64$          &almost optimal                       \\ \hline
$2$ &$[120,7,58]_2$ &$58,60,62,64$          &$[120,7,60]_2$                       \\ \hline
$2$ &$[121,7,60]_2$ &$60,61,62,64$          &optimal                              \\ \hline
$2$ &$[123,7,61]_2$ &$61,62,63,64$          &optimal                              \\ \hline
$2$ &$[126,8,56]_2$ &$56,60,66,70$          &$[126,8,62]_2$                       \\ \hline
$2$ &$[183,8,88]_2$ &$88,92,96,128$         &$[183,8,90]_2$                       \\ \hline
$2$ &$[191,8,92]_2$ &$92,96,100,128$        &$[191,8,95]_2$                       \\ \hline
$2$ &$[192,8,92]_2$ &$92,96,100,128$        &$[192,8,96]_2$                       \\ \hline
$2$ &$[199,8,96]_2$ &$96,100,104,128$       &$[199,8,98]_2$                       \\ \hline
$2$ &$[204,8,96]_2$ &$96,104$               &$[204,8,100]_2$                      \\ \hline
$2$ &$[216,8,104]_2$&$104,108,112,128$      &$[216,8,107]_2$                      \\ \hline
$2$ &$[223,8,108]_2$&$108,112,116,128$      &$[223,8,111]_2$                      \\ \hline
$2$ &$[224,8,108]_2$&$108,112,116,128$      &$[224,8,112]_2$                      \\ \hline
$2$ &$[225,8,112]_2$&$112,120$              &optimal                              \\ \hline
$2$ &$[235,8,116]_2$&$116,118,120,128$      &optimal                              \\ \hline
$2$ &$[239,8,118]_2$&$118,120,122,128$      &almost optimal                       \\ \hline
$2$ &$[240,8,118]_2$&$118,120,122,128$      &$[240,8,120]_2$                      \\ \hline
$2$ &$[245,8,118]_2$&$118,122,124,128$      &$[245,8,122]_2$                      \\ \hline
$2$ &$[243,8,120]_2$&$120,122,124,128$      &optimal                              \\ \hline
$2$ &$[247,8,122]_2$&$122,124,126,128$      &almost optimal                       \\ \hline
$2$ &$[248,8,122]_2$&$122,124,126,128$      &$[248,8,124]_2$                      \\ \hline
$2$ &$[249,8,124]_2$&$124,125,126,128$      &optimal                              \\ \hline
$2$ &$[251,8,125]_2$&$125,126,127,128$      &optimal                              \\ \hline
$3$ &$[30,4,18]_3$  &$18,21$                &almost optimal                       \\ \hline
$3$ &$[32,4,21]_3$  &$21,24$                &optimal                              \\ \hline
$4$ &$[68,4, 48]_4$ &$48,52$                &$[68,4,50]_4$                        \\ \hline
$4$ &$[75,4,56]_4$  &$56,69$                & optimal                             \\ \hline
$5$ &$[130,4,100]_5$&$100,105$              &$[130,4,103]_5$                      \\ \hline
\end{longtable}

\begin{longtable}{|l|l|l|l|}
\caption{\label{tab-1-3} Minimal three-weight codes}\\ \hline
$q$ &Parameters     &Weights&Optimal parameters     \\ \hline
$2$ &$[16,5,6]_2$   &$6,8,10$       &$[16,5,8]_2$   \\ \hline
$2$ &$[19,5,8]_2$   &$8,10,12$      &optimal        \\ \hline
$2$ &$[21,5,6]_2$   &$6,10,12$      &$[21,5,10]_2$   \\ \hline
$2$ &$[22,5,10]_2$  &$10,12,16$     &optimal        \\ \hline
$2$ &$[25,5,10]_2$  &$10,12,14$     &$[25,5,12]_2$  \\ \hline
$2$ &$[32,6,12]_2$  &$12,16,20$     &$[32,6,16]_2$  \\ \hline
$2$ &$[48,6,22]_2$  &$22,24,26$     &$[46,6,24]_2$  \\ \hline
$2$ &$[54,6,26]_2$  &$26,28,32$     &optimal        \\ \hline
$2$ &$[56,6,26]_2$  &$26,28,30$     &$[56,6,28]_2$  \\ \hline
$2$ &$[64,7,28]_2$  &$28,32,36$     &$[64,7,32]_2$  \\ \hline
$2$ &$[70,7,32]_2$  &$32,35,40$     &almost optimal \\ \hline
$2$ &$[71,7,32]_2$  &$32,36,40$     &$[71,7,34]_2$  \\ \hline
$2$ &$[92,7,44]_2$  &$44,48,64$     &almost optimal \\ \hline
$2$ &$[118,7,58]_2$ &$58,60,64$     &optimal        \\ \hline
$2$ &$[144,8,64]_2$ &$64,72,80$     &$[144,8,70]_2$ \\ \hline
$2$ &$[220,8,108]_2$&$108,112,128$  &almost optimal \\ \hline
$2$ &$[246,8,122]_2$&$122,124,128$  &optimal        \\ \hline
$3$ &$[81,5,51]_3$  &$51,54,57$     &$[81,5,54]_3$  \\ \hline
$4$ &$[18,3,13]_4$  &$13,14,15$     &optimal        \\ \hline
$5$ &$[25,3,19]_5$  &$19,20,21$     &almost optimal \\ \hline
$5$ &$[28,3,22]_5$  &$22,23,24$     &optimal        \\ \hline
$7$ &$[49,3,41]_7$  &$41,42,43$     &almost optimal \\ \hline
$7$ &$[54,3,46]_7$  &$46,47,48$     &optimal        \\ \hline
$8$ &$[70,3,61]_8$  &$61,62,63$     &optimal        \\ \hline
$9$ &$[81,3,71]_9$  &$71,72,73$     &almost optimal \\ \hline
$9$ &$[88,3,78]_9$  &$78,79,80$     &optimal        \\ \hline
\end{longtable}

Many minimal binary linear codes satisfying the Ashikhmin-Barg criterion are constructed from the above complementary theorem, see Section 7 below. In the following table some minimal linear codes with optimal or near optimal parameters are listed.\\

\begin{longtable}{|l|l|l|l|l|}
\caption{\label{tab-1-4} Minimal codes}\\ \hline
$q$&Length&Dimension&Distance&Optimal parameters\\ \hline
$2$ &$16$& $5$ &$7$  &almost optimal\\ \hline
$2$ &$28$& $6$ &$12$ &optimal\\ \hline
$2$ &$35$& $6$ &$16$ &optimal\\ \hline
$2$ &$42$& $6$ &$20$ &optimal \\ \hline
$2$ &$48$& $6$ &$23$ &almost optimal \\ \hline
$2$ &$57$& $7$ &$24$ &$[57,7,26]_2$ \\ \hline
$2$ &$70$& $7$ & $32$ &almost optimal \\ \hline
$2$ &$92$& $7$ & $44$ &almost optimal \\ \hline
$2$ &$99$& $7$ &$48$ &optimal \\ \hline
$2$ &$106$& $7$ &$52$ &optimal \\ \hline
$2$ &$112$& $7$ &$55$ &almost optimal \\ \hline
$2$ &$185$& $8$ &$88$ &$[185,8,91]_2$ \\ \hline
$2$ &$220$& $8$ &$108$ &almost optimal \\ \hline
$2$ &$227$& $8$ &$112$ &optimal \\ \hline
$2$ &$234$& $8$ &$116$ &optimal \\ \hline
$2$ &$240$& $8$ &$119$ &almost optimal \\ \hline
\end{longtable}

We also construct complementary Reed-Solomon codes as in the following result, see Section 8.\\

\begin{theorem}[Complementary Reed-Solomon codes]\label{T-1-4}
A complementary Reed-Solomon $[\frac{q^k-1}{q-1}-q,k,q^{k-1}-q]_q$ code is a minimal $k$-weight almost Griesmer code. As an anticode, it has a difference $k-1$ to the antiGriesmer bound.
\end{theorem}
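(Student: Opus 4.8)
The plan is to realize the complementary Reed-Solomon code as the code produced by the complementary theorem (Theorem~\ref{T-1-3}) from a full-length Reed-Solomon code, and then to verify each assertion --- exactly $k$ nonzero weights, minimality, Griesmer defect $1$, and a gap of $k-1$ to the antiGriesmer bound --- by root counting together with routine floor and ceiling estimates. Throughout I take $q\ge 3$ and $3\le k\le q$, both forced by the hypotheses since the base code has length $q$ and Theorems~\ref{T-1-1}--\ref{T-1-3} require $q<q^{k-1}$. For the base anticode I would use the Reed-Solomon code $\mathbf{C}\subset\mathbf{F}_q^{q}$ whose codewords are the evaluation vectors $(f(\alpha))_{\alpha\in\mathbf{F}_q}$ of the polynomials $f\in\mathbf{F}_q[x]$ with $\deg f\le k-1$; this is a $[q,k,q-k+1]_q$ MDS code, and its dual is a generalized Reed-Solomon code, hence MDS of minimum distance $k+1\ge 4$, so $\mathbf{C}$ is projective. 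The weight of the codeword attached to a nonzero $f$ is $q$ minus the number of roots of $f$ in $\mathbf{F}_q$, and over the nonzero polynomials of degree $\le k-1$ this number attains every value in $\{0,1,\dots,k-1\}$ (value $0$ for a nonzero constant, value $j$ for $\prod_{i=1}^{j}(x-\alpha_i)$ with the $\alpha_i$ distinct). Hence $\mathbf{C}$ has exactly the $k$ nonzero weights $w_j=q-k+j$ for $1\le j\le k$, and diameter $\delta(\mathbf{C})=q$.

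Feeding $\mathbf{C}$ into Theorem~\ref{T-1-3} returns a projective $\bigl[\tfrac{q^k-1}{q-1}-q,\,k,\,q^{k-1}-q\bigr]_q$ code $\mathbf{C}'$ with the $k$ nonzero weights $q^{k-1}-w_k<\cdots<q^{k-1}-w_1$, i.e. the consecutive integers $q^{k-1}-q,\,q^{k-1}-q+1,\,\dots,\,q^{k-1}-q+(k-1)$; in particular $d(\mathbf{C}')=q^{k-1}-q$ and $\delta(\mathbf{C}')=q^{k-1}-q+(k-1)$, which already gives the stated parameters and the $k$-weight property. For minimality it suffices to check the Ashikhmin--Barg criterion $d(\mathbf{C}')/\delta(\mathbf{C}')>(q-1)/q$; clearing denominators this becomes $(q-1)(k-1)<q^{k-1}-q$, and since $2\le k-1\le q-1$ we have $(q-1)(k-1)\le(q-1)^2<q^2-q\le q^{k-1}-q$.

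For the almost Griesmer claim I would compute $g_q(k,q^{k-1}-q)=\sum_{i=0}^{k-1}\lceil(q^{k-1}-q)/q^i\rceil$ term by term: the summand equals $q^{k-1}-q$ for $i=0$, equals $q^{k-2}-1$ for $i=1$ (an exact division), and equals $q^{k-1-i}$ for $2\le i\le k-1$, since there the number $q^{k-1-i}-q^{1-i}$ has fractional part $1-q^{1-i}\in(0,1)$. Summing, $g_q(k,q^{k-1}-q)=q^2+q^3+\cdots+q^{k-1}=\tfrac{q^k-1}{q-1}-q-1$, one less than the length $\tfrac{q^k-1}{q-1}-q$; hence $\mathbf{C}'$ has Griesmer defect $1$, an almost Griesmer code. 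Consistently, $\mathbf{C}$ itself is a projective Griesmer code ($g_q(k,q-k+1)=(q-k+1)+(k-1)=q$, the length of $\mathbf{C}$), so by the complement principle behind the proof of Theorem~\ref{T-1-1} its complement must lie one step from the Griesmer bound.

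For the last assertion I would regard $\mathbf{C}'$ as an anticode and use the averaging bound $\delta\ge\tfrac{q^{k-1}(q-1)}{q^k-1}\,n$ --- the maximum weight is at least the average weight $\tfrac{n\,q^{k-1}(q-1)}{q^k-1}$, valid for every projective code and the continuous form underlying Theorem~\ref{T-1-1}. Substituting $n=\tfrac{q^k-1}{q-1}-q$ gives the value $q^{k-1}-q+\tfrac{q^k-q}{q^k-1}$, whose integer part is $q^{k-1}-q$, i.e. exactly the minimum weight $d(\mathbf{C}')$; since $\delta(\mathbf{C}')=q^{k-1}-q+(k-1)$, the diameter lies $k-1$ above this bound. (One can also verify directly that $\sum_{i=0}^{k-1}\lfloor\delta(\mathbf{C}')/q^i\rfloor=\tfrac{q^k-1}{q-1}-q$, so $\mathbf{C}'$ meets the floored antiGriesmer bound with equality, as the complement of a Griesmer code should.) I expect the only real work to be the floor and ceiling bookkeeping in the last two paragraphs --- getting the Griesmer and antiGriesmer sums exactly, with care at the end index $i=k-1$ and in the smallest case $k=3$; the single conceptual step, taking a full-length Reed-Solomon code as the base anticode, is immediate.
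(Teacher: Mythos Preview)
Your approach is exactly the paper's: start from the Reed--Solomon $[q,k,q-k+1]_q$ code, form the complement via Theorem~\ref{T-1-3}, and then verify in turn the $k$-weight property, minimality via Ashikhmin--Barg, and the Griesmer defect by direct summation. Section~\ref{sec-8} of the paper does each of these in the same way but more tersely; your explicit root-counting for the $k$ weights, your parameter constraints $3\le k\le q$, and your inequality $(q-1)(k-1)<q^{k-1}-q$ for the Ashikhmin--Barg check fill in details the paper leaves implicit. Your Griesmer-sum computation $g_q(k,q^{k-1}-q)=\tfrac{q^k-1}{q-1}-q-1$ matches the paper's.

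The divergence is on the final clause. The paper intends ``difference $k-1$ to the antiGriesmer bound'' to mean antiGriesmer defect $k-1$, and in Section~\ref{sec-8} computes $\sum_{i=0}^{k-1}\lfloor\delta(\mathbf{C}')/q^i\rfloor=\tfrac{q^k-1}{q-1}-q+k-1$. Your parenthetical computation that this sum equals $n=\tfrac{q^k-1}{q-1}-q$ (defect $0$) is in fact the correct one: e.g.\ for $q=3$, $k=3$ one has $\delta=8$, $n=10$, and $8+\lfloor 8/3\rfloor+\lfloor 8/9\rfloor=10$, not $12$. The paper's value arises from the slip $\lfloor\delta/q^i\rfloor=q^{k-1-i}$ instead of $q^{k-1-i}-1$ for $i\ge 1$; and indeed, since the Reed--Solomon code is a Griesmer code, the duality behind Theorem~\ref{T-2-1} (which the paper itself states in Section~\ref{sec-1}) forces its complement to have antiGriesmer defect $0$. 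So your attempt to recover the ``$k-1$'' by switching to the averaging bound of Corollary~\ref{C-2-1} is not what the paper means, and is in any case off by one (that bound rounds \emph{up} to $q^{k-1}-q+1$, giving a gap of $k-2$). A small related slip: your ``consistently'' sentence has the complementarity backwards --- that the base code is Griesmer gives that $\mathbf{C}'$ is antiGriesmer, not almost Griesmer; the latter follows instead from the base code having antiGriesmer defect $1$ (since $\sum_{i}\lfloor q/q^i\rfloor=q+1$).
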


Complementary Reed-Solomon codes over small fields are optimal minimal codes, as listed in the following table.\\

\begin{longtable}{|l|l|l|l|}
\caption{\label{tab-1-5} Minimal complementary RS codes over small fields}\\ \hline
$q$&Parameters&Weights&Optimality\\ \hline
$2$ &$[29,5,14]_2$&$14,15, 16,17,18$&optimal\\ \hline
$2$ &$[61,6,30]_2$&$30,31,32,33,34,35$&optimal\\ \hline
$2$ &$[125,7,62]_2$&$62, 63,64,65,66,67,68$&optimal\\ \hline
$2$ &$[253,8,126]_2$&$126,127,128,129,130,131,132,133$&optimal\\ \hline
$3$ &$[37,4,24]_3$&$24,25,26,27$&optimal\\ \hline
$3$ &$[118,5,78]_3$&$78,79,80,81,82$&optimal\\ \hline
$4$ &$[17,3,12]_4$& $12,13,14$& optimal \\ \hline
$4$ &$[81,4,60]_4$&$60,61,62,63$& optimal \\ \hline
$5$ &$[26,3,20]_5$&$20,21,22$&optimal\\ \hline
$7$ &$[50,3,42]_7$&$42,43,44$&optimal\\ \hline
$8$ &$[65,3,56]_8$&$56,57,58$&optimal\\ \hline
$9$ &$[82,3,72]_9$&$72,73,74$&optimal\\ \hline
\end{longtable}	

Notice that it was proved in \cite[Theorem 1 and 5]{Hell3}, some binary and $q$-ary Griesmer codes are actually Solomon-Stiffler codes or Belov codes. Therefore, some optimal codes meeting the Griesmer bound constructed in this paper are not new.\\

\section{Bounds for anticodes}\label{sec-2}

We first recall the following code-anticode bound due to \cite{Delsarte,AAK}, or see \cite{Byrne}.\\

{\bf Code-anticode bound.} {\em Let ${\bf C} \subset {\bf F}_q^n$ be a code with the minimum distance $d$ and ${\bf A} \subset {\bf F}_q^n$ be an anticode with the diameter $d-1$. Then $$|{\bf C}| \cdot |{\bf A}| \leq q^n.$$}

\begin{proof}{\rm
We consider all sets of the ${\bf C}+a$, where $a$ takes all elements of the anticode ${\bf A}$. Then these sets are disjoint subsets of ${\bf F}_q^n$. Otherwise, there are two elements $a$ and $a'$ of the anticode ${\bf A}$ such that there is an common element $c+a$ and $c'+a'$ in ${\bf C}+a$ and ${\bf C}+a'$.  Therefore $$c-c'=a'-a,$$ we have $$d(a,a')=d(c,c') \geq d.$$ This is a contradiction.\\

Let us observe the implication of code-anticode bound in the Hamming metric space ${\bf F}_2^n$. From the diameter two anticode with $n+1$ codewords described in Subsection 1.6, we have the upper bound $$M \leq \frac{2^n}{n+1},$$ for a binary $(n,M,3)_2$ code. Binary Hamming $[2^m-1, 2^m-m-1,3]_2$ code is certainly such a code attaining this upper bound.}
\end{proof}

Let ${\bf C}\subset {\bf F}_q^n$ be a projective linear code with one generator matrix ${\bf G}$ consisting of $n$ distinct columns ${\bf g}_1, \ldots, {\bf g}_n$ in ${\bf F}_q^k$. Let ${\bf G}$ also be the set of these $n$ vectors in ${\bf F}_q^k$. Let $H \subset {\bf F}_q^k$ be an arbitrary dimension $k-1$ subspace, it is clear that the maximum weight of ${\bf C}$ is $$\delta({\bf C})=n-\min_{H \subset {\bf F}_q^k, \dim H=k-1} |H \bigcap {\bf G}|.$$ and the minimum weight  of ${\bf C}$ is $$d({\bf C})=n-\max_{H \subset {\bf F}_q^k, \dim H=k-1}|H \bigcap {\bf G}|.$$ \\
The following proposition is clear.\\

\begin{proposition}\label{P-2-1}
Let ${\bf C} \subset {\bf F}_q^n$ be a linear $[n,k]_q$ code. Then its maximum weight is at least $k$.
\end{proposition}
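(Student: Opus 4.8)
The plan is to prove this via the geometric (projective) description of linear codes stated just before the proposition: a projective linear $[n,k]_q$ code $\bf C$ corresponds to a set $\bf G$ of $n$ distinct points in the projective space $\mathrm{PG}(k-1,q)$ (equivalently, $n$ pairwise non-proportional nonzero vectors in $\bf F_q^k$), and for a hyperplane $H \subset \bf F_q^k$ of dimension $k-1$ the weight of the corresponding codeword is $n - |H \cap \bf G|$. So the maximum weight is $\delta(\bf C) = n - \min_H |H \cap \bf G|$, and it suffices to show that some hyperplane $H$ meets $\bf G$ in at most $n-k$ points, i.e. that $\min_H |H \cap \bf G| \le n-k$.

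First I would reduce to a counting/averaging statement. Since the projectivity hypothesis forces $\bf G$ to span $\bf F_q^k$ (otherwise $\bf G$ lies in a proper subspace, hence in some hyperplane, and there would be a zero codeword of weight $0$, but more relevantly the code would not even have dimension $k$; in fact $\bf G$ spans because the generator matrix has rank $k$), we can extract a basis $\bf g_{i_1}, \dots, \bf g_{i_k}$ of $\bf F_q^k$ from $\bf G$. Then for each $j \in \{1,\dots,k\}$ let $H_j = \langle \bf g_{i_1},\dots,\widehat{\bf g_{i_j}},\dots,\bf g_{i_k}\rangle$, the hyperplane spanned by all but the $j$-th basis vector. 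Each $H_j$ contains $k-1$ of the chosen basis points but not $\bf g_{i_j}$, so $|H_j \cap \bf G| \le n - |\{$points of $\bf G$ not in $H_j\}| $; the cleanest route is to observe that $\bf g_{i_j} \notin H_j$, so $|H_j \cap \bf G| \le n-1$ — but that only gives maximum weight $\ge 1$, which is too weak. So instead I would argue more carefully: consider the $k$ hyperplanes $H_1,\dots,H_k$ and note that each point $\bf g_{i_\ell}$ of the chosen basis lies in exactly $k-1$ of them (all $H_j$ with $j \ne \ell$), hence lies outside exactly one of them. The key point is then: there must exist some $H_j$ that misses at least $k$ points of $\bf G$. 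One way to see this is a double-counting argument: sum $\sum_{j=1}^k |\bf G \setminus H_j|$; each of the $k$ basis points contributes $1$ to this sum, giving a contribution of $k$, and all other points contribute $\ge 0$, so $\sum_{j=1}^k |\bf G \setminus H_j| \ge k$... again this only yields an average of $1$. I would therefore switch to the dual/rank viewpoint, which is the natural one.

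The rank viewpoint: it is a standard fact that for a linear $[n,k]_q$ code, the minimum weight of the \emph{dual-to-some-coordinate-deletion} — more simply, that the supports of codewords have size at least $k$ when combined with the structure of a generator matrix. Actually the cleanest proof: take the systematic generator matrix $\bf G = (I_k \mid A)$ for $\bf C$ (possible after a permutation of coordinates, which preserves weights). The $j$-th row $\bf r_j$ has a $1$ in position $j$ and $0$ in positions $1,\dots,k$ except position $j$; its weight is $1 + \wt(A_j)$ where $A_j$ is the $j$-th row of $A$. Now if $A$ had a zero row, the corresponding codeword would have weight $1$, contradicting — no wait, we want a \emph{lower} bound on the max weight, not all weights. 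Consider instead the all-ones combination $\bf c = \sum_{j=1}^k \bf r_j$: in the first $k$ coordinates it equals $(1,1,\dots,1)$, so $\wt(\bf c) \ge k$. Hence the maximum weight of $\bf C$ is at least $k$. That is the whole argument, and it is indeed "clear" as the paper says.

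I expect the main (and only) subtlety is the reduction to systematic form: one must note that $\bf G$ has rank $k$ (which holds since $\dim \bf C = k$), so after permuting columns we may assume the first $k$ columns are linearly independent, and a column permutation is a Hamming isometry so it changes neither the dimension nor the weight distribution of $\bf C$. After that, exhibiting the sum of the $k$ systematic rows as a codeword of weight $\ge k$ is immediate. An alternative, coordinate-free phrasing would run entirely in $\mathrm{PG}(k-1,q)$: pick a basis $\bf g_{i_1},\dots,\bf g_{i_k}$ inside $\bf G$, let $\bf u \in \bf F_q^k$ be the (unique up to scalar) linear functional vanishing on none of them — e.g. take the functional whose kernel is in "general position" with respect to the chosen basis, concretely the functional dual to $\bf g_{i_1}+\cdots+\bf g_{i_k}$ in the dual basis — and observe the corresponding codeword is nonzero on all $k$ basis positions; but the systematic-form argument is shorter and self-contained, so that is the one I would write up.
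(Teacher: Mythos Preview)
Your final argument is correct and is precisely what the paper's one-line proof (``There is a nonzero codeword with $k$ nonzero coordinates'') amounts to: bring the generator matrix to systematic form $(I_k \mid A)$ by a column permutation, then the sum of the $k$ rows has all ones in the first $k$ positions, giving a codeword of weight at least $k$. The several false starts (hyperplane counting, averaging) are unnecessary detours, but the eventual route matches the paper's.
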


\begin{proof}{\rm
There is a nonzero codeword with $k$ nonzero coordinates.}
\end{proof}

The linear simplex $[3,2,2]_2$ code is certainly a linear projective anticode attaining this bound.\\

We introduce the complementary code of a projective linear $[n,k]_q$ anticode ${\bf C}$ as follows. Here the length satisfies the condition $n<q^{k-1}$.  Let ${\bf G}$ be the $k \times n$ generator matrix of the projective linear anticode ${\bf C}$. Then $n$ columns ${\bf g}_1, \ldots, {\bf g}_n$ of ${\bf G}$ are distinct vectors of ${\bf F}_q^k$. By deleting these $n$ distinct columns in the generator matrix of the $q$-ary simplex $[\frac{q^k-1}{q-1},k, q^{k-1}]_q$ code, we get a linear $[\frac{q^k-1}{q-1}-n, k, q^{k-1}-\delta({\bf C})]_q$ code. The dimension is $k$, because there are $\frac{q^k-1}{q-1}-n>\frac{q^{k-1}-1}{q-1}$ columns in the generator matrix of this code. These columns cannot be located in a $k-1$ dimension linear subspace.\\

For example, we consider the simplex $[2^{k-1}-1,k-1,2^{k-2}]$ as a projective linear anticode with the diameter $\delta({\bf C})=d({\bf C})$, the complementary code is actually a first order Reed-Muller $[2^{k-1},k,2^{k-2}]_2$ code.\\

We prove the following antiGriesmer lower bound for projective linear anticode.\\

\begin{theorem}\label{T-2-1}
Let $q$ be a prime power and $n$ be a positive integer satisfying $n<q^{k-1}$. Let ${\bf C} \subset {\bf F}_q^n$ be a projective linear code of the dimension $k$. Then its maximum weight $\delta({\bf C})$ satisfies $$\Sigma_{i=0}^{k-1} \lfloor \frac{\delta({\bf C})}{q^i} \rfloor \geq n.$$
\end{theorem}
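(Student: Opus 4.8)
The plan is to exploit the complementary-code construction that was just described in the excerpt, together with the classical Griesmer bound. Given a projective linear anticode $\mathbf{C} \subset \mathbf{F}_q^n$ of dimension $k$ with diameter $\delta = \delta(\mathbf{C})$, and assuming $n < q^{k-1}$, the columns $\mathbf{g}_1,\dots,\mathbf{g}_n$ of a generator matrix $\mathbf{G}$ are distinct nonzero vectors of $\mathbf{F}_q^k$ (up to scalars, viewed as distinct points of $PG(k-1,q)$). Delete these $n$ columns from the generator matrix of the simplex $[\frac{q^k-1}{q-1}, k, q^{k-1}]_q$ code to obtain a linear code $\mathbf{C}'$ of length $N = \frac{q^k-1}{q-1} - n$. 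As noted in the excerpt, since $N > \frac{q^{k-1}-1}{q-1}$ the remaining columns cannot lie in a hyperplane, so $\mathbf{C}'$ has dimension exactly $k$, and its minimum distance is $N - \max_{H} |H \cap \mathbf{G}'|$ where $\mathbf{G}'$ is the complementary set of points; a short counting argument (each hyperplane contains $\frac{q^{k-1}-1}{q-1}$ points of $PG(k-1,q)$) gives $d(\mathbf{C}') = q^{k-1} - \delta(\mathbf{C})$.

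First I would record these facts precisely: $\mathbf{C}'$ is an $[N, k, d']_q$ code with $N = \frac{q^k-1}{q-1} - n$ and $d' = q^{k-1} - \delta$. Next I would apply the Griesmer bound to $\mathbf{C}'$, namely $N \geq \sum_{i=0}^{k-1} \lceil d'/q^i \rceil$. The heart of the argument is then a purely arithmetic manipulation: I want to show that this Griesmer inequality for $\mathbf{C}'$ is equivalent to the claimed antiGriesmer inequality $\sum_{i=0}^{k-1} \lfloor \delta/q^i \rfloor \geq n$ for $\mathbf{C}$. The bridge is the identity $\lceil \frac{q^{k-1}-\delta}{q^i} \rceil = q^{k-1-i} - \lfloor \frac{\delta}{q^i} \rfloor$ for each $0 \leq i \leq k-1$, which holds because $q^i \mid q^{k-1}$ in that range, so $\lceil (q^{k-1}-\delta)/q^i \rceil = q^{k-1-i} + \lceil -\delta/q^i \rceil = q^{k-1-i} - \lfloor \delta/q^i \rfloor$. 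Summing over $i$ gives $\sum_{i=0}^{k-1} \lceil d'/q^i \rceil = \sum_{i=0}^{k-1} q^{k-1-i} - \sum_{i=0}^{k-1}\lfloor \delta/q^i\rfloor = \frac{q^k-1}{q-1} - \sum_{i=0}^{k-1}\lfloor \delta/q^i\rfloor$.

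Plugging this into the Griesmer bound $N \geq \sum_{i=0}^{k-1}\lceil d'/q^i\rceil$ and substituting $N = \frac{q^k-1}{q-1} - n$ yields $\frac{q^k-1}{q-1} - n \geq \frac{q^k-1}{q-1} - \sum_{i=0}^{k-1}\lfloor \delta/q^i\rfloor$, which rearranges immediately to $\sum_{i=0}^{k-1}\lfloor \delta/q^i\rfloor \geq n$, as desired. This also transparently shows the remark the authors make: $\mathbf{C}'$ meets the Griesmer bound exactly when $\mathbf{C}$ meets the antiGriesmer bound.

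The main obstacle, such as it is, lies in the geometric setup rather than the arithmetic: one must be careful that the complementary code genuinely has dimension $k$ (handled by the $n < q^{k-1}$ hypothesis and the hyperplane point-count, as the excerpt already argues) and that the minimum distance of $\mathbf{C}'$ is exactly $q^{k-1} - \delta(\mathbf{C})$ and not merely bounded by it — this requires knowing that the maximum intersection $\max_H |H \cap \mathbf{G}|$ over hyperplanes $H$ equals $n - \delta(\mathbf{C})$, which is precisely the description of the diameter of $\mathbf{C}$ given just before Proposition \ref{P-2-1}. Once these identifications are in place, the proof is essentially the one-line arithmetic translation above. I would also note in passing that the case $\delta < k$ cannot occur by Proposition \ref{P-2-1}, though this is not needed for the inequality itself.
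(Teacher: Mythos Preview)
Your proposal is correct and follows essentially the same argument as the paper: apply the Griesmer bound to the complementary $[\frac{q^k-1}{q-1}-n,\,k,\,q^{k-1}-\delta]_q$ code, use the identity $\lceil (q^{k-1}-\delta)/q^i\rceil = q^{k-1-i}-\lfloor \delta/q^i\rfloor$, sum, and rearrange. You have in fact supplied more justification (for the arithmetic identity and the dimension of the complementary code) than the paper's own proof does.
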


\begin{proof}{\rm
From the Griesmer bound for the complementary code, we have $$\frac{q^k-1}{q-1}-n \geq \Sigma_{i=0}^{k-1} \lceil \frac{q^{k-1}-\delta({\bf C})}{q^i}\rceil. $$ On the other hand, $$\lceil \frac{q^{k-1}-\delta({\bf C})}{q^i}\rceil=q^{k-1-i}-\lfloor \frac{\delta({\bf C})}{q^i} \rfloor.$$ The conclusion follows immediately.\\}
\end{proof}

The following lower bound for projective linear anticode can be deduced from antiGriesmer bound directly.\\

\begin{corollary}\label{C-2-1}
Let $q$ be a prime power and $n$ be a positive integer satisfying $n<q^{k-1}$. Let ${\bf C} \subset {\bf F}_q^n$ be a projective linear code of the dimension $k$. Then its maximum weight (diameter) is at least $(1-\frac{1}{q})n$.\\
\end{corollary}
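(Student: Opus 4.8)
The plan is to obtain this bound as an immediate consequence of the antiGriesmer bound of Theorem~\ref{T-2-1}, which already supplies the inequality $\sum_{i=0}^{k-1} \lfloor \frac{\delta({\bf C})}{q^i} \rfloor \geq n$ under exactly the hypothesis $n < q^{k-1}$. It therefore suffices to bound the left-hand side from above purely in terms of $\delta({\bf C})$, so no genuinely new input is required.

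First I would discard the floor functions using the trivial estimate $\lfloor \frac{\delta({\bf C})}{q^i} \rfloor \leq \frac{\delta({\bf C})}{q^i}$, valid for every $i$. Summing over $i = 0, 1, \ldots, k-1$ and combining with Theorem~\ref{T-2-1} gives $n \leq \delta({\bf C}) \sum_{i=0}^{k-1} q^{-i}$.

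Next I would evaluate the geometric sum exactly: $\sum_{i=0}^{k-1} q^{-i} = \frac{1 - q^{-k}}{1 - q^{-1}} = \frac{q}{q-1}\bigl(1 - q^{-k}\bigr) < \frac{q}{q-1}$. Substituting this back yields $n < \frac{q}{q-1}\,\delta({\bf C})$, hence $\delta({\bf C}) > \frac{q-1}{q}\,n = \bigl(1 - \frac{1}{q}\bigr)n$, which is even slightly stronger than the claimed (non-strict) bound; since $\delta({\bf C})$ is an integer one may in fact conclude $\delta({\bf C}) \geq \lceil (1 - \frac{1}{q})n \rceil$.

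There is essentially no obstacle here: the only points requiring a moment of care are using the correct direction of the floor estimate (the easy direction $\lfloor x \rfloor \le x$ is precisely what we need) and carrying the $q^{-k}$ correction term through the geometric-series computation so that the final inequality comes out genuinely strict. Everything reduces to Theorem~\ref{T-2-1} plus an elementary estimate.
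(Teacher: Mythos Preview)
Your proposal is correct and follows essentially the same route as the paper: the paper's proof simply writes $\delta({\bf C})\bigl(1+\tfrac{1}{q}+\cdots+\tfrac{1}{q^{k-1}}\bigr) \geq \sum_{i=0}^{k-1}\lfloor \tfrac{\delta({\bf C})}{q^i}\rfloor \geq n$ and says the conclusion follows immediately. Your version spells out the geometric-series step and extracts the strict inequality, but the argument is the same.
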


\begin{proof}{\rm
It is clear that $$\delta({\bf C})(1+\frac{1}{q}+\cdots+\frac{1}{q^{k-1}}) \geq \Sigma_{i=0}^{k-1} \lfloor \frac{\delta({\bf C})}{q^i} \rfloor \geq n.$$ The conclusion follows immediately.}
\end{proof}

The above two bounds are stronger than the classical Erd\"{o}s-Kleitman bound for general binary anticodes. Since when $\delta({\bf C})=2$, there are optimal binary anticode attaining the Erd\"{o}s-Kleitman bound as described in Subsection 1.6. However, for a binary linear projective anticode with the dimension $k$ and the diameter two. If the length $n \leq 2^{k-1}$, then $n \leq 4$. Therefore the above two bounds give much stronger restrictions on binary projective linear anticodes than the Erd\"{o}s-Kleitman bound.\\

\section{Linear codes from projective linear anticodes}\label{sec-3}

In this section, we prove the main results Theorem 1.3 and 1.4.\\

{\bf Theorem 3.1} {\em Let $q$ be a prime power and $n$ be a positive integer satisfying $n<q^{k-1}$. Let ${\bf C} \subset {\bf F}_q^n$ be a projective linear code with the dimension $k_1$ and the maximum weight $\delta$. Let $k \geq k_1$ be a positive integer. Then a projective linear $[\frac{q^k-1}{q-1}-n, k, q^{k-1}-\delta]_q$ code is constructed.}\\

{\bf Proof.} Let ${\bf g}_1 \ldots, {\bf g}_n$ be $n$ distinct columns of the projective linear code ${\bf C}$. By pending some zero coordinates, we can consider them as $n$ distinct vectors in ${\bf F}_q^k$. By deleting these columns in $\frac{q^k-1}{q-1}$ columns of the $q$-ary simplex code, we get a projective $[\frac{q^k-1}{q-1}-n, k, q^{k-1}-\delta]_q$ code.\\

It is clear that from the interpretation of a projective linear code, the weight of a codeword is $n-|H \bigcap {\bf G}|$, then we have the following result immediately.\\

\begin{corollary}[Complementary theorem] \label{C-3-1}
Let $q$ be a prime power and $n$ be a positive integer satisfying $n<q^{k-1}$. Let ${\bf C} \subset {\bf F}_q^n$ be a projective linear code with the dimension $k$ and $t$ nonzero weights $w_1<\cdots<w_t$. Then we construct an explicit $[\frac{q^k-1}{q-1}-n, k, q^{k-1}-\delta({\bf C})]_q$ code ${\bf C}'$ with $t$ nonzero weights $q^{k-1}-w_t<\cdots<q^{k-1}-w_1$. Moreover $$A_{w_j}({\bf C})=A_{q^{k-1}-w_j}({\bf C}').$$  Let $K$ be a positive integer satisfying $K>k$, then we construct an explicit $[\frac{q^K-1}{q-},K, q^{K-1}-\delta({\bf C})]_q$ code ${\bf C}''$ with $t+1$ weights $q^{K-1}-w_t<\cdots<q^{K-1}-w_1<q^{K-1}$. Moreover $$q^{K-k}A_{w_j}({\bf C})=A_{q^{K-1}-w_j}({\bf C}''),$$ and $$A_{q^{K-1}}({\bf C}'')=q^{K-k}-1.$$
\end{corollary}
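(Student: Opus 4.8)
The plan is to build the complementary code $\mathbf{C}'$ concretely via its generator matrix, then read off its weights and weight distribution from the geometric description of projective codes that was set up just before the statement. Write $\mathbf{G}$ for a $k\times n$ generator matrix of $\mathbf{C}$ whose columns $\mathbf{g}_1,\dots,\mathbf{g}_n$ are $n$ distinct nonzero points of $\mathbf{F}_q^k$ (up to scalars, since $\mathbf{C}$ is projective); and let $\mathbf{G}_{\mathrm{simplex}}$ be the $k\times\frac{q^k-1}{q-1}$ generator matrix of the $q$-ary simplex code, whose columns run over all one-dimensional subspaces of $\mathbf{F}_q^k$. Since $n<q^{k-1}\le\frac{q^{k-1}-1}{q-1}$ (for $q\ge 2$, $k\ge 2$), the columns $\mathbf{g}_1,\dots,\mathbf{g}_n$ occur among the simplex columns, so deleting them yields a $k\times\left(\frac{q^k-1}{q-1}-n\right)$ matrix $\mathbf{G}'$; define $\mathbf{C}'$ to be the code it generates. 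The first step is to check $\mathbf{C}'$ really has dimension $k$: as already argued in the excerpt, $\frac{q^k-1}{q-1}-n>\frac{q^{k-1}-1}{q-1}$, so the remaining columns cannot all lie in a hyperplane, hence $\mathbf{G}'$ has rank $k$.

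Next I would compute the weights. Every nonzero codeword of a $k$-dimensional projective-type code given by a column set $\mathbf{S}$ corresponds to a hyperplane $H\subset\mathbf{F}_q^k$, with weight $|\mathbf{S}|-|H\cap\mathbf{S}|$. For the simplex code each hyperplane meets the full point set in exactly $\frac{q^{k-1}-1}{q-1}$ points (constant weight $q^{k-1}$). Writing $\mathbf{G}$ for the point set of $\mathbf{C}$ and noting the simplex point set is the disjoint union of $\mathbf{G}$ and the point set $\mathbf{G}'$ of $\mathbf{C}'$, we get for every hyperplane $H$
\begin{equation*}
|H\cap\mathbf{G}'| = \frac{q^{k-1}-1}{q-1} - |H\cap\mathbf{G}|,
\end{equation*}
so the weight of the codeword of $\mathbf{C}'$ attached to $H$ is
\begin{equation*}
\left(\frac{q^k-1}{q-1}-n\right) - |H\cap\mathbf{G}'| = q^{k-1} - \Bigl(n - |H\cap\mathbf{G}|\Bigr) = q^{k-1} - \wt_{\mathbf{C}}(H),
\end{equation*}
where $\wt_{\mathbf{C}}(H)$ is the weight of the $\mathbf{C}$-codeword for $H$. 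This gives a weight-preserving bijection between nonzero codewords of $\mathbf{C}$ and of $\mathbf{C}'$ (both being indexed by the $\frac{q^k-1}{q-1}$ hyperplanes, with the same identification), hence the weight set is $\{q^{k-1}-w_t<\cdots<q^{k-1}-w_1\}$, the minimum distance is $q^{k-1}-\delta(\mathbf{C})$, and $A_{w_j}(\mathbf{C})=A_{q^{k-1}-w_j}(\mathbf{C}')$. (One should remark that distinct scalar multiples of a codeword correspond to the same hyperplane but the count $A_i$ already accounts for this uniformly on both sides, so the identity is clean.)

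For the second code $\mathbf{C}''$, embed $\mathbf{F}_q^k$ as a coordinate subspace of $\mathbf{F}_q^K$ and take $\mathbf{G}''$ to be the $K\times\frac{q^K-1}{q-1}$ simplex matrix for $\mathbf{F}_q^K$ with the $n$ columns coming from $\mathbf{g}_1,\dots,\mathbf{g}_n$ (viewed inside the first $k$ coordinates) deleted — i.e. the complementary construction of Theorem~1.2/3.1 applied with dimension $K$. Then $\mathbf{C}''$ has length $\frac{q^K-1}{q-1}-n$ and dimension $K$. The weight analysis is the same as before but now run over hyperplanes $H'\subset\mathbf{F}_q^K$: the point set of $\mathbf{C}$ sits inside the $k$-dimensional subspace $V=\mathbf{F}_q^k\times 0$. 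A hyperplane $H'$ either contains $V$ or meets it in a $(k-1)$-dimensional subspace. The main obstacle — and the only real bookkeeping — is this case split and the multiplicity count: each hyperplane of $V$ is cut out by exactly $q^{K-k}$ hyperplanes $H'\supset$ (that hyperplane) but $\not\supset V$, which is what produces the factor $q^{K-k}$ in $q^{K-k}A_{w_j}(\mathbf{C})=A_{q^{K-1}-w_j}(\mathbf{C}'')$; and the $q^{K-k}-1$ hyperplanes $H'\supset V$ (other than... well, all hyperplanes containing $V$) give $|H'\cap\mathbf{G}'|$ equal to all of the deleted-from points, producing a codeword of weight exactly $q^{K-1}$, accounting for $A_{q^{K-1}}(\mathbf{C}'')=q^{K-k}-1$ — the "$-1$" being the one class of hyperplane through $V$ that would correspond to... actually all $\frac{q^{K-k}-1}{q-1}\cdot(q-1)=q^{K-k}-1$ nonzero codewords trivial on the relevant part. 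I would carry out this counting carefully using the standard fact that the number of hyperplanes of $\mathbf{F}_q^K$ through a fixed $(k-1)$-dimensional subspace is $\frac{q^{K-k+1}-1}{q-1}$ and subtracting those also through $V$; the rest is the same weight identity $q^{K-1}-\wt_{\mathbf{C}}(H)$ as above, now with $q^{K-1}$ in place of $q^{k-1}$.
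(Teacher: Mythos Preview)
Your argument is correct and matches the paper's approach: for $K=k$ the paper simply invokes the hyperplane interpretation (exactly your computation $q^{k-1}-\wt_{\mathbf{C}}(H)$), and for $K>k$ the paper does the same case split you describe, only phrased in terms of message vectors rather than hyperplanes. Writing $\mathbf{X}=(\mathbf{x},\mathbf{y})\in\mathbf{F}_q^{K-k}\times\mathbf{F}_q^k$ and observing $\mathbf{X}\cdot\mathbf{g}_i'=\mathbf{y}\cdot\mathbf{g}_i$, the paper gets weight $q^{K-1}-\wt(\mathbf{y}\cdot\mathbf{G})$ directly: the $q^{K-k}-1$ vectors with $\mathbf{y}=\mathbf{0}$ give weight $q^{K-1}$, and for each nonzero $\mathbf{y}$ the $q^{K-k}$ free choices of $\mathbf{x}$ give the multiplicity factor. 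This is dual to (and slightly quicker than) your hyperplane count, which is why your passage about the ``$-1$'' came out awkward; the message-vector language avoids that bookkeeping entirely.

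One small slip: your inequality $q^{k-1}\le\frac{q^{k-1}-1}{q-1}$ is false; what you need (and what the paper uses) is simply $n<q^{k-1}<\frac{q^k-1}{q-1}$, so that the $n$ columns fit among the simplex columns and more than $\frac{q^{k-1}-1}{q-1}$ remain.
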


{\bf Proof.} When $K=k$, the conclusion follows directly. When $K>k$. We delete columns ${\bf g}_1',\ldots, {\bf g}_n'$
in the generator matrix of the $q$-ary simplex $[\frac{q^K-1}{q-1},K,q^{K-1}]_q$ code. Then it is the generator matrix ${\bf G}'$ of the complementary code.  Here ${\bf g}_i'=({\bf 0},{\bf g}_i)$, $i=1,\ldots,n$, are vectors in ${\bf F}_q^K$, with $K-k$ zero coordinates. It is clear that for nonzero vectors of the form ${\bf X}=({\bf x}, {\bf 0})$, where ${\bf x} \in {\bf F}_q^{K-k}$, there are $q^{K-k}-1$ weight $q^{K-1}$ codewords ${\bf X} \cdot {\bf G}'$ in the complementary code. Considering the the first $K-k$ zero coordinates of ${\bf g}_1', \ldots, {\bf g}_n'$, the conclusion $$q^{K-k}A_{w_j}({\bf C})=A_{q^{K-1}-w_j}({\bf C}'')$$ follows immediately.\\

\section{Constructions of projective linear anticodes}\label{sec-4}

In this section, we give several constructions of projective linear anticodes close to the antiGriesmer bound.\\

\subsection{$t$-weight linear anticodes}

Let $n=2^m-1$, $t$ be a positive integer satisfying $$2t-1 <2^{\lceil \frac{m}{2} \rceil}+1.$$ Then in the dual code of the binary BCH code with the designed distance $2t+1$, then the weight $w$ of any nonzero codeword is in the range $$2^{m-1}-(t-1)2^{\frac{m}{2}}\leq w \leq 2^{m-1}+(t-1)2^{\frac{m}{2}}.$$ This is the famous Carlitz-Uchiyama theorem, see \cite[Page 280]{MScode}. Therefore we have projective linear $[2^m-1, 2m, 2^{m-1}-2^{\frac{m}{2}}]_2$ codes with the maximum weight (diameter) $2^{m-1}+2^{\frac{m}{2}}$, when $m$ is an even positive integer. It is clear that these projective linear codes have their maximum weights close to the Plotkin type bound.\\

Let $q=p^e$ be an odd prime power and $k$ be an odd positive integer. Then from \cite[Theorem 15]{HengYue}, projective linear four-weight $[q^k-1, k+1, q^{k-1}(q-1)-1-\frac{\sqrt{p^*}^{e(k+1)}}{q}]_q$ code was constructed. Four weights are $
 q^{k-1}(q-1)-1-\frac{\sqrt{p^*}^{e(k+1)}}{q}$, $q^{k-1}(q-1)$, $q^{k-1}(q-1)-1+\frac{\sqrt{p^*}^{e(k+1)}}{q}$,  and $q^k-1$, where $p^*=(-1)^{\frac{p-1}{2}}p$.\\

On the other hand, from our complementary theorem, we have a projective linear four-weight $[q^{k-1}+q^{k-2}+\cdots+q+2, k+1, 1]_q$ code with four weights $1$, $q^{k-1}+1-\frac{\sqrt{p^*}^{e(k+1)}}{q}$, $q^{k-1}$, $q^{k-1}+1+\frac{\sqrt{p^*}^{e(k+1)}}{q}$. Then the maximum weight is $q^{k-1}+1+\frac{\sqrt{p^*}^{e(k+1)}}{q}$, which is larger than and close to $\frac{q-1}{q} \cdot (q^{k-1}+q^{k-2}+\cdots+q+2)$.\\

The following table lists some projective linear anticodes obtained from projective linear codes obtained in previous papers \cite{DD1,DD,Kasami}\\

\begin{longtable}{|l|l|l|l|l|}
\caption{\label{tab:A-q-5-3} Projective linear anticodes}\\ \hline
$q$&Parameters&Maximum weight (Diameter) &$\frac{\delta}{n}$&References\\ \hline
$2$ &$[2^m-1,2m,2^{m-1}-2^{m/2}]_2$&$2^{m-1}+2^{m/2}$&$>\frac{1}{2}+\frac{1}{2^{m/2}}$&[23]\\ \hline
$2$ &$[2^{2m}-1,3m,2^{2m-1}-2^{m-1}]_2$&$2^{m-1}+2^{m-1}$&$>\frac{1}{2}+\frac{1}{2^{m+1}}$&[29]\\ \hline
$2$ &$[2^{m-1}-1,m,2^{m-2}-2^{\frac{m+l-4}{2}}]_2$&$2^{m-2}+2^{\frac{m+l-4}{2}}$&$>\frac{1}{2}+\frac{1}{2^{m/2}}$&[19]\\ \hline
$3$ &$[\frac{3^m-1}{2},2m,3^{m-1}-3^{\frac{m-1}{2}}]_3$&$3^{m-1}+3^{\frac{m-1}{2}}$&$>\frac{2}{3}+\frac{2}{3^{m/2}}$&[23]\\ \hline
$q$ &$[\frac{q^k-1}{q-1},k,q^{k-1}-1]_q$&$q^{k-1}$&$\frac{q^{k-2}}{q^{k-2}+\cdots+q+1}$&[48]\\ \hline
$q$ &$[q^2+1,4,q^2-q]_q$&$q^2$&$\frac{q^2}{q^2+1}$&[23]\\ \hline
$q$ &$[2q+2,4,q]_q$&$2q$&$\frac{q}{q+1}$AntiGriesmer&Section 5\\ \hline

\end{longtable}	

We refer to \cite{DingTang} and Section 5 for last two projective linear anticodes in the above table.\\

\subsection{Fixed weight binary vectors}

Some good binary linear anticodes were constructed in \cite[Theorem 3]{SZ} and applied to construct optimal locally reparable codes. In this section, we recover and extend their constructions of binary projective linear anticodes without using graph theory. Actually, this kind of binary projective linear codes were constructed in \cite{Zhou,Yan} as binary LCD codes, or minimal binary codes. The weight distribution was also calculated in \cite{Zhou,Yan}.\\

Let $k=2s$ be an even positive integer. Let ${\bf g}_1, \ldots, {\bf g}_n$, where $n=s(2s-1)$, be all length $k$ binary vectors with the fixed weight two. Then ${\bf G}=({\bf g}_1, \ldots, {\bf g}_n)$ is a $k \times n$ matrix with the rank $k-1$, since the sum of all coordinates in each column is zero. Then for a fixed subset $I \subset \{1,\ldots, k\}$ with $u$ elements, we consider the hyperplane $H_I$ defined by $$\Sigma_{i \in I} x_i =0.$$ It is clear that the number of columns in $H_I$ is $\frac{u(u-1)}{2}+\frac{(2s-u)(2s-u-1)}{2}$. Then this number is minimal when $u=s$. We construct a projective $[s(2s-1),2s-1]_2$ anticode with the maximum weight $s^2$.  When $k=2s+1$ is odd, from a similar argument, the maximum weight is $\frac{k^2-1}{4}$.\\

Similarly, we consider length $k$ vectors of fixed weight four. Set $k=7$, there are $35$ weight four vectors in ${\bf F}_2^7$. Then the matrix with these $35$ columns generator a projective linear $[35,6]_2$ code ${\bf C}$. Suppose that the hyperplane is $\Sigma_{i \in I} x_i=0$, where $I \subset \{1,2,\ldots,7\}$. We can consider the three possibilities $|I|=6$, $|I|=5$. and $|I|=4$. In the case $|I|=6$, the number of $35$ columns in this hyperplane is $$\frac{6 \cdot 5 \cdot 4 \cdot 3}{4 \cdot 3 \cdot 2\cdot 1}=15.$$ In the case $|I|=5$, the number of $35$ columns in this hyperplane is $$\frac{5 \cdot 4}{2\cdot 1}+5=15.$$ In the case $|I|=4$, the number of $35$ columns in this hyperplane is $$\frac{4 \cdot 3}{2\cdot 1} \cdot 3+1=19.$$ Then ${\bf C}$ is a two-weight linear $[35,6,16]_2$ code with two weights $16,20$. Comparing with the optimal $[35,6,16]_2$ code in \cite{Grassl}, it is a three-weight code with weights $16,20,24$. As a anticode, our code is better, since the diameter is smaller. This is an optimal two-weight $[35,6,16]_2$ code with the following weight distribution.\\

\begin{longtable}{|l|l|}
\caption{\label{tab:A-q-5-3} Weight distribution of $[35,6,16]_2$ code}\\ \hline
Weight&Weight distribution \\ \hline
$0$&$1$ \\ \hline
$16$&$35$ \\ \hline
$20$&$28$ \\ \hline
\end{longtable}	

From our complementary theorem, a three-weight linear $[92,7,44]_2$ code and a three weight linear $[220,8,108]_2$ are constructed from the second conclusion of Theorem 1.3. They are almost optimal codes.\\

Set $k=8$, there are $70$ weight four vectors in ${\bf F}_2^8$. Then the matrix with these $70$ columns generator a projective linear $[70,7]_2$ code ${\bf C}$. Suppose that the hyperplane is $\Sigma_{i \in I} x_i=0$, where $I \subset \{1,2,\ldots,8\}$. We can consider the four possibilities $|I|=7$, $|I|=6$. $|I|=5$ and $|I|=4$. In the case $|I|=7$, the number of $70$ columns in this hyperplane is $$\frac{7 \cdot 6 \cdot 5 \cdot 4}{4 \cdot 3 \cdot 2 \cdot 1}=35.$$ In the case $|I|=6$, the number of $70$ columns in this hyperplane is $$\frac{6 \cdot 5 \cdot 4 \cdot 3}{4 \cdot 3 \cdot 2 \cdot 1}+\frac{6 \cdot 5}{2 \cdot 1}=30.$$ In the case $|I|=5$, the number of $70$ columns in this hyperplane is $$\frac{5 \cdot 4 }{2 \cdot 1} \cdot 3+5=35.$$ In the case $|I|=4$, the number of $70$ columns in this hyperplane is $$\frac{4 \cdot 3 }{2 \cdot 1} \cdot \frac{4 \cdot 3 }{2 \cdot 1} +2=38.$$ Then ${\bf C}$ is a three-weight linear $[70,7,32]_2$ code with three weights $32,35,40$. Notice that the optimal minimum distance of a binary linear $[70,7]_2$ code is $33$, see \cite{Grassl}. The weight distribution of this three-weight linear $[70,7,32]_2$ code is as in the following table.\\

\begin{longtable}{|l|l|}
\caption{\label{tab:A-q-5-3} Weight distribution of the $[70,7,32]_2$ code}\\ \hline
Weight&Weight distribution \\ \hline
$0$&$1$ \\ \hline
$32$&$35$ \\ \hline
$35$&$64$ \\ \hline
$40$&$28$ \\ \hline
\end{longtable}

When $k=9$, we get a four-weight binary linear $[126,8,56]_2$ code with four weights $56,60,66,70$, the corresponding optimal minimum distance is $62$, see \cite{Grassl}. From Theorem 1.3, a four-weight binary linear $[129,8,58]_2$ code with four weights $58, 62, 68, 72$, is constructed, the corresponding optimal minimum distance is $64$, see \cite{Grassl}.\\

\section{Constructions of $t$-weight codes with optimal or near optimal parameters}\label{sec-5}

In this section, we construct many $t$-weight or $(t+1)$-weight projective linear codes from our complementary theorem and known $t$-weight codes constructed in \cite{Kasami,MScode,carlet,CG,DD1,DD,DingTang}. The basic point is as follows, if a $t$-weight projective linear code has its maximum weight close to the antiGriesmer bound, then its complementary $t$ or $(t+1)$-weight projective linear codes are close to the Griesmer bound. Therefore, it is necessary to check the antiGriesmer defect of constructed few-weight $q$-ary projective linear codes.\\

Let $m$ be an odd positive integer. The dual code of the primitive BCH $[2^m-1, 2^m-1-2m, 3]_2$
code is a linear $[2^m-1, 2m, 2^{m-1}-2^{\frac{m-1}{2}}]_2$ code with three nonzero weights $2^{m-1}-2^{\frac{m-1}{2}}, 2^{m-1}, 2^{m-1}+2^{\frac{m-1}{2}}$, see \cite[Theorem 4]{carlet}. These dual codes are obviously projective. Then we have a family of $[2^{2m}-2^m, 2m, 2^{2m-1}-2^{m-1}-2^{\frac{m-1}{2}}]_2$ codes with three nonzero weights from our complementary theorem Corollary \ref{C-3-1}. The three nonzero weights are $2^{2m-1}-2^{m-1}-2^{\frac{m-1}{2}}, 2^{2m-1}-2^{m-1}, 2^{2m-1}-2^{m-1}+2^{\frac{m-1}{2}}$. The weight distributions are
$$A_{2^{2m-1}-2^{m-1}-2^{\frac{m-1}{2}}}=(2^m-1)(2^{m-2}-2^{\frac{m-3}{2}}),$$
$$A_{2^{2m-1}-2^{m-1}}=(2^m-1)(2^{m-1}+1),$$
and
$$A_{2^{2m-1}-2^{m-1}+2^{\frac{m-1}{2}}}=(2^m-1)(2^{m-2}+2^{\frac{m-3}{2}}),$$
see \cite[Theorem 4]{carlet}. In particular, when $m=3$, this is a linear $[56,6,26]_2$ code with three nonzero weights $26, 28, 30$. Notice that the optimal minimum distance of linear $[56,6]_2$ code in \cite{Grassl} is $28$. This binary three-weight linear code have near optimal parameters. Actually, Griesmer defects of these codes are upper bounded by $2^{\frac{m+1}{2}}$.\\

From the second conclusion of Corollary \ref{C-3-1}, we have binary linear four-weight $[2^{2m+1}-2^{m},2m+1,2^{2m}-2^{m-1}-2^{\frac{m-1}{2}}]_2$ code, and binary linear four-weight $[2^{2m+2}-2^{m},2m+2,2^{2m+1}-2^{m-1}-2^{\frac{m-1}{2}}]_2$ code, when $m$ is odd. When $m=3$, we construct explicit four-weight $[120,7,58]_2$ code and four-weight $[248, 8,122]_2$ code. The optimal codes in \cite{Grassl} are linear $[120,7,60]_2$ code and $[248,8,124]_2$ code. Our four-weight binary linear codes are near optimal.\\

Then we have the following conclusion.\\

\begin{theorem}\label{T-5-0}
Let $m\geq 3$ be an odd integer, $t$ be a positive integer.

1)~We construct a family of three-weight minimal $[2^{2m}-2^{m}, 2m, 2^{2m-1}-2^{m-1}-2^{\frac{m-1}{2}}]_2$ codes with the weight distribution in Table \ref{tab-8-0}.

2)~We construct a family of four-weight minimal $[2^{2m+t}-2^{m}, 2m+t, 2^{2m+t-1}-2^{m-1}-2^{\frac{m-1}{2}}]_2$ codes with the weight distribution in Table \ref{tab-8-1}.
\end{theorem}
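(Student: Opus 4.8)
The plan is to realise both families as complementary codes, in the sense of Corollary \ref{C-3-1}, of one well-understood three-weight code, and then to read off minimality from the Ashikhmin--Barg criterion. Let $m\geq 3$ be odd, put $k=2m$, and let ${\bf C}\subset {\bf F}_2^{2^m-1}$ be the dual of the primitive BCH code considered just above, so that by \cite[Theorem 4]{carlet} it is a projective linear $[2^m-1,\,2m,\,2^{m-1}-2^{\frac{m-1}{2}}]_2$ code --- projective since its dual is a BCH code, of minimum distance at least three --- whose three nonzero weights are $w_1=2^{m-1}-2^{\frac{m-1}{2}}$, $w_2=2^{m-1}$, $w_3=2^{m-1}+2^{\frac{m-1}{2}}$, with $A_{w_1}({\bf C})=(2^m-1)(2^{m-2}+2^{\frac{m-3}{2}})$, $A_{w_2}({\bf C})=(2^m-1)(2^{m-1}+1)$ and $A_{w_3}({\bf C})=(2^m-1)(2^{m-2}-2^{\frac{m-3}{2}})$. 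In particular $\delta({\bf C})=w_3=2^{m-1}+2^{\frac{m-1}{2}}$, and the hypothesis $n=2^m-1<2^{2m-1}=2^{k-1}$ of Corollary \ref{C-3-1} holds.

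For part~1) I would apply the first conclusion of Corollary \ref{C-3-1} directly to ${\bf C}$. The complementary code ${\bf C}'$ then has length $\frac{2^{2m}-1}{2-1}-(2^m-1)=2^{2m}-2^m$, dimension $2m$, minimum distance $2^{2m-1}-\delta({\bf C})=2^{2m-1}-2^{m-1}-2^{\frac{m-1}{2}}$, three nonzero weights $2^{2m-1}-w_3<2^{2m-1}-w_2<2^{2m-1}-w_1$, and $A_{2^{2m-1}-w_j}({\bf C}')=A_{w_j}({\bf C})$ for $j=1,2,3$. Substituting the three weights and frequencies listed above yields exactly the entries of Table \ref{tab-8-0}.

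For part~2) I would instead use the second conclusion of Corollary \ref{C-3-1} with $K=2m+t$, which is admissible because $t\geq 1$ forces $K>k=2m$. The resulting code ${\bf C}''$ has length $\frac{2^{2m+t}-1}{2-1}-(2^m-1)=2^{2m+t}-2^m$ and dimension $2m+t$; since $w_3>0$ the newly created weight $2^{2m+t-1}$ is not the smallest, so the minimum distance is $2^{2m+t-1}-w_3=2^{2m+t-1}-2^{m-1}-2^{\frac{m-1}{2}}$. Its four nonzero weights are $2^{2m+t-1}-w_3<2^{2m+t-1}-w_2<2^{2m+t-1}-w_1<2^{2m+t-1}$, with $A_{2^{2m+t-1}-w_j}({\bf C}'')=2^{t}A_{w_j}({\bf C})$ and $A_{2^{2m+t-1}}({\bf C}'')=2^{t}-1$; this is Table \ref{tab-8-1}.

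It remains to verify minimality via the Ashikhmin--Barg criterion \cite{AB}, i.e.\ that $d({\bf C}')/\delta({\bf C}')>\frac{1}{2}$ and $d({\bf C}'')/\delta({\bf C}'')>\frac{1}{2}$. For ${\bf C}'$ this reduces to $2(2^{2m-1}-2^{m-1}-2^{\frac{m-1}{2}})>2^{2m-1}-2^{m-1}+2^{\frac{m-1}{2}}$, i.e.\ $2^{2m-1}>2^{m-1}+3\cdot 2^{\frac{m-1}{2}}$; for ${\bf C}''$ it reduces to $2(2^{2m+t-1}-2^{m-1}-2^{\frac{m-1}{2}})>2^{2m+t-1}$, i.e.\ $2^{2m+t-1}>2^{m}+2^{\frac{m+1}{2}}$. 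Both are immediate for $m\geq 3$ (and $t\geq 1$). The whole argument is essentially mechanical once the starting three-weight code and its weight distribution are fixed; the only points requiring any care are the simplification $\frac{2^K-1}{2-1}-(2^m-1)=2^K-2^m$, correctly singling out the minimum distance of ${\bf C}''$ among its four weights, and the two elementary inequalities just displayed --- so I do not anticipate a genuine obstacle.
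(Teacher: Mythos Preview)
Your proposal is correct and follows essentially the same route as the paper: start from the dual of the primitive binary BCH code of \cite[Theorem~4]{carlet}, apply the two conclusions of Corollary~\ref{C-3-1} to obtain the complementary codes ${\bf C}'$ and ${\bf C}''$, and invoke the Ashikhmin--Barg criterion for minimality. The paper's own proof consists only of the one-line remark that minimality follows from Ashikhmin--Barg, the construction and weight distributions having been spelled out in the paragraphs preceding the theorem; your write-up simply makes the length computations and the two inequalities explicit, which the paper leaves to the reader.
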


\begin{proof}{\rm
The minimality follows from the Ashikhmin-Barg criterion, see \cite{AB}.}
\end{proof}

\begin{longtable}{|l|l|}
\caption{\label{tab-8-0} Weight distribution of the codes in Theorem \ref{T-5-0} 1)}     \\ \hline
Weight                               & Weight distribution                  \\ \hline
$0$                                  & $1$                                  \\ \hline
$2^{2m-1}-2^{m-1}-2^{\frac{m-1}{2}}$ & $(2^m-1)(2^{m-2}-2^{\frac{m-3}{2}})$ \\ \hline
$2^{2m-1}-2^{m-1}$                   & $(2^m-1)(2^{m-1}+1)$                 \\ \hline
$2^{2m-1}-2^{m-1}+2^{\frac{m-1}{2}}$ & $(2^m-1)(2^{m-2}+2^{\frac{m-3}{2}})$ \\ \hline
\end{longtable}

\begin{longtable}{|l|l|}
\caption{\label{tab-8-1} Weight distribution of the codes in Theorem \ref{T-5-0} 2)}          \\ \hline
Weight                                 & Weight distribution                     \\ \hline
$0$                                    & $1$                                     \\ \hline
$2^{2m+t-1}-2^{m-1}-2^{\frac{m-1}{2}}$ & $2^t(2^m-1)(2^{m-2}-2^{\frac{m-3}{2}})$ \\ \hline
$2^{2m+t-1}-2^{m-1}$                   & $2^t(2^m-1)(2^{m-1}+1)$                 \\ \hline
$2^{2m+t-1}-2^{m-1}+2^{\frac{m-1}{2}}$ & $2^t(2^m-1)(2^{m-2}+2^{\frac{m-3}{2}})$ \\ \hline
$2^{2m+t-1}$                           & $2^t-1$                                 \\ \hline
\end{longtable}

Kasami codes are a family of $[2^{2m}-1, 3m, 2^{2m-1}-2^{m-1}]_2$ codes constructed in \cite{Kasami}. They are famous for their application in constructing sequences with optimal correlation magnitudes. They have three nonzero weights, $2^{2m-1}-2^{m-1}, 2^{2m-1}, 2^{2m-1}+2^{m-1}$. Let the dual distance of a linear code of the length $n$ be $d'\geq 1$, then the $r$-th generalized Hamming weight $d_r$ of this code is $n-d'$, where $r=k-d'+1$, see \cite[Page 134]{Helleseth}. Then from \cite[Theorem 1]{Helleseth}, the dual distances of Kasami codes, when $m\geq 2$, are at least three. So they are projective linear anticodes. From Corollary 3.1, we have a family of $[2^{3m}-2^{2m}, 3m, 2^{3m-1}-2^{2m-1}-2^{m-1}]_2$ codes with three nonzero weights. Three weights are $2^{3m-1}-2^{2m-1}-2^{m-1}, 2^{3m-1}-2^{2m-1}, 2^{3m-1}-2^{2m-1}+2^{m-1}$.\\

In particular, when $m=2$, we construct an explicit binary linear $[48, 6, 22]_2$ code with three weights $22,24,26$. The optimal linear $[48,6,24]_2$ code was documented in \cite{Grassl}. This binary three-weight code is near optimal.  Griesmer defects of these codes are upper bounded by $2^{m}$.
From Corollary \ref{C-3-1}, we have binary linear four-weight $[2^{3m+1}-2^{2m},3m+1,2^{3m}-2^{2m-1}-2^{m-1}]_2$ codes, and binary linear four-weight $[2^{3m+2}-2^{2m},3m+2,2^{3m+1}-2^{2m-1}-2^{m-1}]_2$ codes. When $m=2$, we construct explicit four-weight $[112,7,54]_2$ and $[240, 8,118]_2$ codes. The optimal ones in \cite{Grassl} are $[112,7,56]_2$ and $[240,8,120]_2$ codes. \\

\begin{theorem}\label{T-5-1}
Let $m\geq 2$ be an integer, $t$ be a positive integer.

1)~ We construct a family of three-weight minimal $[2^{3m}-2^{2m}, 3m, 2^{3m-1}-2^{2m-1}-2^{m-1}]_2$ codes with the weight distribution in Table \ref{tab-8}.

2)~ We construct a family of four-weight minimal $[2^{3m+t}-2^{2m},3m+t,2^{3m+t-1}-2^{2m-1}-2^{m-1}]_2$ codes with the weight distribution in Table \ref{tab-8-3}.
\end{theorem}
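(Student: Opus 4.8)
The plan is to obtain both families directly from the complementary theorem, Corollary \ref{C-3-1}, applied to the Kasami codes recalled just above. First I would record the three properties of the Kasami code that the argument needs: it is a linear $[2^{2m}-1,3m,2^{2m-1}-2^{m-1}]_2$ code whose three nonzero weights are $2^{2m-1}-2^{m-1}$, $2^{2m-1}$, $2^{2m-1}+2^{m-1}$ (so its diameter is $\delta=2^{2m-1}+2^{m-1}$), its weight distribution is the one computed in \cite{Kasami}, and for $m\geq 2$ its dual distance is at least three by \cite[Theorem 1]{Helleseth}, hence it is projective. The hypothesis $n=2^{2m}-1<2^{3m-1}=q^{k-1}$ of Corollary \ref{C-3-1} holds for all $m\geq 1$, so the corollary is applicable.

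For part 1) I would invoke the first conclusion of Corollary \ref{C-3-1} with $q=2$, $k=3m$ and $n=2^{2m}-1$. The complementary code ${\bf C}'$ then has length $\frac{2^{3m}-1}{2-1}-(2^{2m}-1)=2^{3m}-2^{2m}$, dimension $3m$, and minimum distance $2^{3m-1}-\delta=2^{3m-1}-2^{2m-1}-2^{m-1}$; its nonzero weights are $2^{3m-1}-w$ as $w$ ranges over the three Kasami weights, with $A_{w}=A_{2^{3m-1}-w}({\bf C}')$, which yields precisely Table \ref{tab-8}. For part 2) I would apply the second conclusion of Corollary \ref{C-3-1} with $K=3m+t>k=3m$: the code ${\bf C}''$ has length $\frac{2^{3m+t}-1}{2-1}-(2^{2m}-1)=2^{3m+t}-2^{2m}$, dimension $3m+t$, minimum distance $2^{3m+t-1}-\delta$, and $3+1=4$ nonzero weights, namely $2^{3m+t-1}-w$ for the three Kasami weights $w$, each occurring with frequency multiplied by $2^{K-k}=2^{t}$, together with the weight $2^{3m+t-1}$ occurring $2^{t}-1$ times; this is Table \ref{tab-8-3}.

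Finally I would establish minimality, which by the Ashikhmin--Barg criterion \cite{AB} reduces to checking $d/\delta>\frac12$, i.e.\ $2d>\delta$, for each code. For the three-weight code this reads $2^{3m}-2^{2m}-2^{m}>2^{3m-1}-2^{2m-1}+2^{m-1}$, equivalently $2^{3m-1}>2^{2m-1}+3\cdot 2^{m-1}$, which is true for every $m\geq 2$; for the four-weight code, where $\delta=2^{3m+t-1}$, it reads $2^{3m+t}-2^{2m}-2^{m}>2^{3m+t-1}$, equivalently $2^{3m+t-1}>2^{2m}+2^{m}$, which holds for $m\geq 2$ and $t\geq 1$ since then $3m+t-1\geq 3m>2m+1$. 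I do not expect a genuine obstacle: the theorem is essentially a packaging of Corollary \ref{C-3-1} with the known data of the Kasami code, and the only point requiring a little care is the pair of elementary inequalities verifying the Ashikhmin--Barg criterion.
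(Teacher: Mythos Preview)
Your proposal is correct and follows essentially the same route as the paper: the construction and weight distributions are obtained by applying Corollary~\ref{C-3-1} to the Kasami code (whose projectivity for $m\geq 2$ is cited from \cite{Helleseth}), and minimality is then deduced from the Ashikhmin--Barg criterion. The paper's own proof simply states ``The minimality follows from the Ashikhmin-Barg criterion,'' leaving the construction to the preceding discussion; your explicit verification of the two inequalities is additional detail but not a different argument.
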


\begin{proof}{\rm
The minimality follows from the Ashikhmin-Barg criterion.}
\end{proof}

\begin{longtable}{|l|l|}
\caption{\label{tab-8} Weight distribution of the codes in Theorem \ref{T-5-1} 1)}\\ \hline
Weight                      & Weight distribution   \\ \hline
$0$                         & $1$                   \\ \hline
$2^{3m-1}-2^{2m-1}-2^{m-1}$ & $(2^m-1)2^{2m-1}$     \\ \hline
$2^{3m-1}-2^{2m-1}$         & $2^{2m}-1$            \\ \hline
$2^{3m-1}-2^{2m-1}+2^{m-1}$ & $(2^m-1)2^{2m-1}$     \\ \hline
\end{longtable}

\begin{longtable}{|l|l|}
\caption{\label{tab-8-3} Weight distribution of the codes in Theorem \ref{T-5-1} 2)}\\ \hline
Weight                        & Weight distribution   \\ \hline
$0$                           & $1$                   \\ \hline
$2^{3m+t-1}-2^{2m-1}-2^{m-1}$ & $(2^m-1)2^{2m+t-1}$  \\ \hline
$2^{3m+t-1}-2^{2m-1}$         & $(2^{2m}-1)2^t$    \\ \hline
$2^{3m+t-1}-2^{2m-1}+2^{m-1}$ & $(2^m-1)2^{2m+t-1}$  \\ \hline
$2^{3m+t-1}$                  & $2^t-1$               \\ \hline
\end{longtable}

Let $\frac{m}{l}$ be odd, a family of three-weight binary projective linear $[2^{m-1}-1,m, 2^{m-2}-2^{\frac{m+l-4}{2}}]_2$ codes was constructed in \cite[Theorem 1]{DD1}. The dual distances of these codes were given in \cite[Theorem 7]{DD1}. Three weights are $2^{m-2}-2^{\frac{m+l-4}{2}}, 2^{m-2}, 2^{m-2}+2^{\frac{m+l-4}{2}}$. Then a family of three-weight binary $[2^m-2^{m-1}, m, 2^{m-2}-2^{\frac{m+l-4}{2}}]_2$ codes is obtained from our complementary theorem.
Additionally, the minimality follows from the Ashikhmin-Barg criterion.\\

Three-weight $[16,5,6]_2$, $[32,6,12]_2$ and $[64,7,28]_2$ codes are obtained. The corresponding optimal distance are $8$, $16$ and $32$, see \cite{Grassl}. Then these three-weight binary linear codes are near optimal.
From the second conclusion of our complementary theorem, four-weight linear $[48, 6, 22]_2$, $[96, 7, 44]_2$, $[112, 7, 54]_2$, $[192,8,92]$, $[224, 8, 108]_2$ and  $[240, 8, 118]_2$ codes are constructed. The optimal linear codes in \cite{Grassl} are $[48, 6,24]_2$, $[96, 7, 48]_2$, $[112,7,56]_2$, $[192, 8, 96]_2$, $[224, 8, 112]_2$ and $[240,8,120]_2$ codes.\\

\begin{theorem}\label{T-5-2}
Let $t$, $m$ and $l$ be positive integers, and satisfying $\frac{m}{l}$ is odd.

1)~We construct a family of three-weight minimal $[2^{m-1}, m, 2^{m-2}-2^{\frac{m+l-4}{2}}]_2$ codes with the weight distribution in Table \ref{tab-9}.

2)~We construct a family of four-weight minimal $[2^{m+t}-2^{m-1}, m+t, 2^{m+t-1}-2^{m-2}-2^{\frac{m+l-4}{2}}]_2$ codes with the weight distribution in Table \ref{tab-9-1}.
\end{theorem}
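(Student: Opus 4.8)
The plan is to obtain both families by applying the complementary theorem (Corollary \ref{C-3-1}) to the known three-weight codes of \cite[Theorem 1]{DD1}, following exactly the template already used in the proofs of Theorems \ref{T-5-0} and \ref{T-5-1}.

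First I would set up the input anticode. Under the hypothesis that $\frac{m}{l}$ is an odd integer --- which in particular forces $m\equiv l\pmod 2$, so that the exponents $\frac{m+l-4}{2}$ and $\frac{m+l-2}{2}$ lie in $\Z$ --- \cite[Theorem 1]{DD1} supplies a binary linear $[2^{m-1}-1,\,m,\,2^{m-2}-2^{\frac{m+l-4}{2}}]_2$ code ${\bf C}$ with exactly three nonzero weights $w_1=2^{m-2}-2^{\frac{m+l-4}{2}}<w_2=2^{m-2}<w_3=2^{m-2}+2^{\frac{m+l-4}{2}}$ and explicit multiplicities $A_{w_1}({\bf C}),A_{w_2}({\bf C}),A_{w_3}({\bf C})$. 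By \cite[Theorem 7]{DD1} its dual distance is at least three, so ${\bf C}$ is a projective linear anticode with diameter $\delta({\bf C})=w_3=2^{m-2}+2^{\frac{m+l-4}{2}}$, and its length $n=2^{m-1}-1$ satisfies $n<2^{m-1}$, as required by Corollary \ref{C-3-1}.

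Next I would read off the two complementary codes. For part~1), applying the first assertion of Corollary \ref{C-3-1} with $k=m$ yields the code ${\bf C}'$ of length $\frac{2^m-1}{2-1}-(2^{m-1}-1)=2^{m-1}$ and dimension $m$, with minimum distance $2^{m-1}-\delta({\bf C})=2^{m-2}-2^{\frac{m+l-4}{2}}$, three nonzero weights $2^{m-1}-w_3<2^{m-1}-w_2<2^{m-1}-w_1$ (which collapse back to $2^{m-2}-2^{\frac{m+l-4}{2}},\ 2^{m-2},\ 2^{m-2}+2^{\frac{m+l-4}{2}}$), and multiplicities $A_{2^{m-1}-w_j}({\bf C}')=A_{w_j}({\bf C})$; this is Table \ref{tab-9}. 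For part~2), applying the second assertion with $K=m+t>m$ yields ${\bf C}''$ of length $\frac{2^{m+t}-1}{2-1}-(2^{m-1}-1)=2^{m+t}-2^{m-1}$ and dimension $m+t$, with minimum distance $2^{m+t-1}-\delta({\bf C})=2^{m+t-1}-2^{m-2}-2^{\frac{m+l-4}{2}}$, the four nonzero weights $2^{m+t-1}-w_3<2^{m+t-1}-w_2<2^{m+t-1}-w_1<2^{m+t-1}$, and multiplicities $2^{t}A_{w_j}({\bf C})$ for the first three weights together with $A_{2^{m+t-1}}({\bf C}'')=2^{t}-1$; this is Table \ref{tab-9-1}.

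Finally I would establish minimality from the Ashikhmin-Barg criterion \cite{AB}, namely $\frac{d({\bf C})}{\delta({\bf C})}>\frac{q-1}{q}=\frac12$. For ${\bf C}'$ this reduces to $2\bigl(2^{m-2}-2^{\frac{m+l-4}{2}}\bigr)>2^{m-2}+2^{\frac{m+l-4}{2}}$, i.e. $2^{m-2}>3\cdot2^{\frac{m+l-4}{2}}$; for ${\bf C}''$, whose diameter is $2^{m+t-1}$, it reduces to $2^{m+t-1}>2^{m-1}+2^{\frac{m+l-2}{2}}$. I do not expect a genuine obstacle here: the whole argument is a direct application of Corollary \ref{C-3-1} together with routine bookkeeping of weights and multiplicities, and the only point needing care is verifying these two elementary inequalities on the relevant parameter range and keeping the parity of $m$ and $l$ in mind so that all exponents are integers.
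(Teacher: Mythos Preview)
Your proposal is correct and follows essentially the same route as the paper: the discussion preceding Theorem~\ref{T-5-2} applies Corollary~\ref{C-3-1} to the three-weight projective codes of \cite[Theorem~1]{DD1} (with dual distance from \cite[Theorem~7]{DD1}) to obtain the parameters and weight distributions, and then invokes the Ashikhmin--Barg criterion for minimality. Your write-up is in fact more explicit than the paper's, which simply asserts the Ashikhmin--Barg inequality without spelling out the two elementary inequalities you identify.
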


\begin{longtable}{|l|l|}
\caption{\label{tab-9} Weight distribution of the code in Theorem \ref{T-5-2} 1)}\\ \hline
Weight                        & Weight distribution             \\ \hline
$0$                           & $1$                             \\ \hline
$2^{m-2}+2^{\frac{m+l-4}{2}}$ & $2^{m-l-1}+2^{\frac{m-l-2}{2}}$ \\ \hline
$2^{m-2}$                     & $2^m-1-2^{m-l}$                 \\ \hline
$2^{m-2}-2^{\frac{m+l-4}{2}}$ & $2^{m-l-1}-2^{\frac{m-l-2}{2}}$ \\ \hline
\end{longtable}

\begin{longtable}{|l|l|}
\caption{\label{tab-9-1} Weight distribution of the code in Theorem \ref{T-5-2} 2)}\\ \hline
Weight                                  & Weight distribution                  \\ \hline
$0$                                     & $1$                                  \\ \hline
$2^{m+t-1}-2^{m-2}+2^{\frac{m+l-4}{2}}$ & $2^t(2^{m-l-1}+2^{\frac{m-l-2}{2}})$ \\ \hline
$2^{m+t-1}-2^{m-2}$                     & $2^t(2^m-1-2^{m-l})$                 \\ \hline
$2^{m+t-1}-2^{m-2}-2^{\frac{m+l-4}{2}}$ & $2^t(2^{m-l-1}-2^{\frac{m-l-2}{2}})$ \\ \hline
$2^{m+t-1}$                             & $2^t-1$                              \\ \hline
\end{longtable}

Similarly, let $\frac{m}{l}$ be even, a family of three-weight binary projective linear $[2^{m-1}-(-1)^{m\over 2l}2^{{m\over 2}+l-1}-1,m]$ code was also constructed in \cite[Theorem 2]{DD1}. Three weights are $2^{m-2}, 2^{m-2}-(-1)^{m\over 2l}2^{{m\over 2}+l-2}, 2^{m-2}-(-1)^{m\over 2l}2^{{m\over 2}+l-1}$. Then a family of three-weight binary $[2^m-2^{m-1}+(-1)^{m\over 2l}2^{{m\over 2}+l-1}, m]_2$ code is obtained from Corollary \ref{C-3-1}. \\

For example, when $m=8$, $l=1$, three-weight linear $[144,8,64]_2$ code is obtained. The corresponding optimal distance is $70$, see \cite{Grassl}.
From the second conclusion of our complementary theorem, when $m=6$, $l=1$, four-weight linear $[88,7,40]_2$ and $[216,8,104]_2$ codes are constructed. The optimal linear codes in \cite{Grassl} are $[88,7,42]_2$ and $[216,8,107]_2$ codes.\\

\begin{theorem}\label{T-5-3}
    Let $t$, $m$ and $l$ be positive integers, and satisfying $\frac{m}{l}$ is even.

    1)~ If $\frac{m}{l}\equiv 0 \pmod 4$, we construct a family of three-weight minimal $[2^{m-1}+2^{{m\over 2}+l-1}, m, 2^{m-2}]_2$ codes. The weight distribution is given in Table \ref{tab-10}.

    2) We construct a family of four-weight minimal $[2^{m+t}- 2^{m-1}+(-1)^{\frac{m}{2l}}2^{{m\over 2}+l-1}, m+t]_2$ codes with the weight distribution in Table \ref{tab-10-1}.
\end{theorem}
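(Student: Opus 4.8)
The plan is to derive both families as complementary codes, in the sense of Corollary \ref{C-3-1}, of the three-weight binary projective linear codes of \cite[Theorem 2]{DD1}. Set $\varepsilon=(-1)^{m/2l}$; since $m/l$ is even, $\varepsilon\in\{+1,-1\}$. The code ${\bf C}$ of \cite[Theorem 2]{DD1} has parameters $[\,2^{m-1}-\varepsilon 2^{\frac{m}{2}+l-1}-1,\,m\,]_2$, is projective (its dual distance is at least three, by \cite{DD1}), has three nonzero weights $w_1<w_2<w_3$ lying in the set $\{\,2^{m-2},\ 2^{m-2}-\varepsilon 2^{\frac{m}{2}+l-2},\ 2^{m-2}-\varepsilon 2^{\frac{m}{2}+l-1}\,\}$, and has an explicitly known weight enumerator recorded in \cite{DD1}. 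Note $n=2^{m-1}-\varepsilon 2^{\frac{m}{2}+l-1}-1<2^{m-1}=2^{k-1}$, so ${\bf C}$ qualifies as input to Corollary \ref{C-3-1}.

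For part~1 I specialize to $m/l\equiv 0\pmod 4$, so $\varepsilon=+1$; then $w_1=2^{m-2}-2^{\frac{m}{2}+l-1}<w_2=2^{m-2}-2^{\frac{m}{2}+l-2}<w_3=2^{m-2}$, and in particular the diameter is $\delta({\bf C})=2^{m-2}$. Applying the first conclusion of Corollary \ref{C-3-1} (with $K=k=m$) yields ${\bf C}'$ of length $(2^m-1)-n=2^{m-1}+2^{\frac{m}{2}+l-1}$, dimension $m$, minimum distance $2^{m-1}-\delta({\bf C})=2^{m-2}$, and three nonzero weights $2^{m-1}-w_3<2^{m-1}-w_2<2^{m-1}-w_1$, that is $2^{m-2}<2^{m-2}+2^{\frac{m}{2}+l-2}<2^{m-2}+2^{\frac{m}{2}+l-1}$; the identity $A_{w_j}({\bf C})=A_{2^{m-1}-w_j}({\bf C}')$ then transcribes the \cite{DD1} weight distribution into Table \ref{tab-10}. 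For part~2 I apply the second conclusion of Corollary \ref{C-3-1} with $K=m+t>m$, for either value of $\varepsilon$: this produces ${\bf C}''$ of length $(2^{m+t}-1)-n=2^{m+t}-2^{m-1}+\varepsilon 2^{\frac{m}{2}+l-1}$, dimension $m+t$, minimum distance $2^{m+t-1}-\delta({\bf C})$, and four nonzero weights $2^{m+t-1}-w_3<2^{m+t-1}-w_2<2^{m+t-1}-w_1<2^{m+t-1}$, with $2^{t}A_{w_j}({\bf C})=A_{2^{m+t-1}-w_j}({\bf C}'')$ and $A_{2^{m+t-1}}({\bf C}'')=2^{t}-1$; substituting the \cite{DD1} data gives Table \ref{tab-10-1}.

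It remains to verify minimality of ${\bf C}'$ and ${\bf C}''$ via the Ashikhmin--Barg criterion $d/\delta>\tfrac{q-1}{q}=\tfrac12$ of \cite{AB}. For ${\bf C}'$ this reads $2^{m-2}/(2^{m-2}+2^{\frac{m}{2}+l-1})>\tfrac12$, i.e.\ $2^{m-2}>2^{\frac{m}{2}+l-1}$, i.e.\ $m>2l+2$, which holds on the admissible range $m/l\equiv 0\pmod 4$ (automatic when $l\ge 2$, since then $m\ge 4l>2l+2$; for $l=1$ the only failing value $m=4$ gives the degenerate length $n=3$ and no such code). For ${\bf C}''$ one has $d/\delta=1-\delta({\bf C})/2^{m+t-1}$: when $\varepsilon=+1$ this equals $1-2^{-(t+1)}\ge\tfrac34>\tfrac12$ with no extra hypothesis, and when $\varepsilon=-1$ it reduces to $2^{m+t-2}>2^{m-2}+2^{\frac{m}{2}+l-1}$, again a consequence of $m>2l+2$ together with $t\ge 1$.

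The only delicate point is the bookkeeping of the sign $\varepsilon=(-1)^{m/2l}$: it simultaneously governs the diameter of the \cite{DD1} code (either $2^{m-2}$ or $2^{m-2}+2^{\frac{m}{2}+l-1}$), hence the length of the complementary code, the ordering of the complementary weights, and which of them is the minimum distance. This is exactly why part~1 is restricted to $m/l\equiv 0\pmod 4$, forcing $\varepsilon=+1$ and $\delta({\bf C})=2^{m-2}$, whereas part~2 can absorb both signs uniformly into the single factor $(-1)^{m/2l}$ in the length formula. Apart from this, the argument is a direct substitution into Corollary \ref{C-3-1} using the explicit weight enumerator from \cite{DD1}.
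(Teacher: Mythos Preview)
Your proposal is correct and follows exactly the paper's approach: apply the complementary theorem (Corollary \ref{C-3-1}) to the three-weight projective code of \cite[Theorem 2]{DD1}, read off the length, dimension, weights and weight distribution, and invoke the Ashikhmin--Barg criterion for minimality. The paper gives no more detail than this; your write-up is in fact considerably more careful about the sign bookkeeping and the verification of the Ashikhmin--Barg inequality than the paper itself.

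One small imprecision worth fixing: your blanket claim that $n=2^{m-1}-\varepsilon\,2^{\frac{m}{2}+l-1}-1<2^{m-1}$ is only valid when $\varepsilon=+1$; when $\varepsilon=-1$ the length exceeds $2^{m-1}$. This does not actually matter, because part~1 uses only $\varepsilon=+1$, and for part~2 the second conclusion of Corollary \ref{C-3-1} (embedding into dimension $K=m+t$) really only needs $n<2^{K-1}=2^{m+t-1}$, which does hold for both signs once $t\ge 1$. Just rephrase the hypothesis check accordingly.
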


\begin{longtable}{|l|l|}
\caption{\label{tab-10} Weight distribution of the code in Theorem \ref{T-5-3} 1)}  \\ \hline
Weight                       & Weight distribution                               \\ \hline
$0$                          & $1$                                               \\ \hline
$2^{m-2}$                    & $2^{m-2l-1}-1-2^{{m\over 2}-l-1}$                 \\ \hline
$2^{m-2}+2^{{m\over 2}+l-2}$ & $2^m-2^{m-2l}$                                    \\ \hline
$2^{m-2}+2^{{m\over 2}+l-1}$ & $2^{m-2l-1}+2^{{m\over 2}-l-1}$                   \\ \hline
\end{longtable}

\begin{longtable}{|l|l|}
\caption{\label{tab-10-1} Weight distribution of the code in Theorem \ref{T-5-3} 2)}  \\ \hline
Weight                                                 & Weight distribution                                    \\ \hline
$0$                                                    & $1$                                                    \\ \hline
$2^{m+t-1}-2^{m-2}$                                    & $2^t(2^{m-2l-1}-1-(-1)^{m\over 2l}2^{{m\over 2}-l-1})$ \\ \hline
$2^{m+t-1}-2^{m-2}+(-1)^{m\over 2l}2^{{m\over 2}+l-2}$ & $2^t(2^m-2^{m-2l})$                                    \\ \hline
$2^{m+t-1}-2^{m-2}+(-1)^{m\over 2l}2^{{m\over 2}+l-1}$ & $2^t(2^{m-2l-1}+(-1)^{m\over 2l}2^{{m\over 2}-l-1})$   \\ \hline
$2^{m+t-1}$                                            & $2^t-1$                                                \\ \hline
\end{longtable}

Let $m$ be an even positive integer, two families of binary projective linear code $\mathbf{C}'_1$ with parameters $[2^{m-1}-2^{\frac{m-2}{2}}, m+1, 2^{m-2}-2^{\frac{m-2}{2}}]_2$ and code $\mathbf{C}'_2$ with parameters $[2^{m-1}+2^{\frac{m-2}{2}}, m+1, 2^{m-2}]_2$ were constructed, see \cite[Theorem 14.13]{DingTang}. The weights of $\mathbf{C}'_1$ are $2^{m-2}-2^{\frac{m-2}{2}}, 2^{m-2}, 2^{m-1}-2^{\frac{m-2}{2}}$, and three weights of $\mathbf{C}'_2$ are $2^{m-2}, 2^{m-2}+2^{\frac{m-2}{2}}, 2^{m-1}+2^{\frac{m-2}{2}}$. Then two families of three-weight binary $[2^{m+1}-1-2^{m-1}+2^{\frac{m-2}{2}}, m+1, 2^{m}-2^{m-1}+2^{\frac{m-2}{2}}]_2$ and $[2^{m+1}-1-2^{m-1}-2^{\frac{m-2}{2}}, m+1, 2^{m}-2^{m-1}-2^{\frac{m-2}{2}}]_2$ codes are obtained from Corollary \ref{C-3-1}. \\

For example, when $m=4$, the three-weight linear $[21,5,6]_2$ and $[25,5,10]_2$ codes are obtained. The corresponding optimal minimum distances are $10$ and $12$, see \cite{Grassl}. Similarly, four-weight linear $[53,6,22]_2$, $[57,6,26]_2$, $[117,7,54]_2$, $[121,7,58]_2$, $[245,8,118]_2$ and $[249,8,122]_2$ are obtained. The optimal linear codes in \cite{Grassl} have parameters $[53,6,26]_2$, $[57,6,28]_2$, $[117,7,58]_2$, $[121,7,60]_2$, $[245,8,122]_2$ and $[249,8,124]_2$, respectively. Then these binary linear codes are near optimal.\\

Then we have the following conclusion.\\

\begin{theorem}\label{T-5-4}
    Let $m$ be even, $t$ be a positive integer.

    1)~Two families of three-weight minimal codes $\mathbf{C}_1$ with parameters $[2^{m+1}-2^{m-1}+2^{\frac{m-2}{2}}-1, m+1, 2^{m-1}+2^{\frac{m-2}{2}}]_2$ and $\mathbf{C}_2$ with parameters $[2^{m+1}-1-2^{m-1}-2^{\frac{m-2}{2}}, m+1, 2^{m-1}-2^{\frac{m-2}{2}}]_2$ are constructed. The weight distributions are given in Table \ref{tab-11}.

    2)~Two families of four-weight minimal codes $\mathbf{C}_3$ with parameters $[2^{m+t+1}- 2^{m-1}- 2^{\frac{m-2}{2}}-1, m+t+1, 2^{m+t}-2^{m-2}-2^{\frac{m-2}{2}}]_2$ and $\mathbf{C}_4$ with parameters $[2^{m+t+1}- 2^{m-1}+ 2^{\frac{m-2}{2}}-1, m+t+1, 2^{m+t}-2^{m-2}-2^{\frac{m-2}{2}}]_2$ are constructed. The weight distributions are shown in Table \ref{tab-11-1}.
\end{theorem}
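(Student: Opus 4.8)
The plan is to run the complement-and-pad construction of Corollary~\ref{C-3-1} exactly as in Theorems~\ref{T-5-0}--\ref{T-5-3}, this time feeding in the two projective three-weight codes $\mathbf{C}'_1,\mathbf{C}'_2$ of \cite[Theorem~14.13]{DingTang} that were recalled just above. First I would record the data of these inputs: $\mathbf{C}'_1$ is a projective binary $[2^{m-1}-2^{\frac{m-2}{2}},m+1]_2$ code with nonzero weights $2^{m-2}-2^{\frac{m-2}{2}}<2^{m-2}<2^{m-1}-2^{\frac{m-2}{2}}$, and $\mathbf{C}'_2$ is a projective binary $[2^{m-1}+2^{\frac{m-2}{2}},m+1]_2$ code with nonzero weights $2^{m-2}<2^{m-2}+2^{\frac{m-2}{2}}<2^{m-1}+2^{\frac{m-2}{2}}$; in each the largest weight coincides with the length, so each contains the all-ones word and has diameter $\delta(\mathbf{C}'_1)=2^{m-1}-2^{\frac{m-2}{2}}$ and $\delta(\mathbf{C}'_2)=2^{m-1}+2^{\frac{m-2}{2}}$. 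I would also record their weight distributions from \cite[Theorem~14.13]{DingTang}. Since both are projective and $2^{m-1}\pm2^{\frac{m-2}{2}}<2^{m}$ for even $m\ge 4$, the hypothesis $n<q^{k-1}$ of Corollary~\ref{C-3-1} holds with $q=2$ and $k=m+1$.

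For part~1) I apply the first conclusion of Corollary~\ref{C-3-1} with $k=m+1$: deleting the $n$ distinct columns of a generator matrix of $\mathbf{C}'_1$ (resp.\ $\mathbf{C}'_2$) from a generator matrix of the binary simplex $[2^{m+1}-1,\,m+1,\,2^{m}]_2$ code produces $\mathbf{C}_1$ (resp.\ $\mathbf{C}_2$), a code of length $2^{m+1}-1-n$, dimension $m+1$, minimum distance $2^{m}-\delta(\mathbf{C}'_i)$, with three nonzero weights $2^{m}-w$ as $w$ runs over the weights of $\mathbf{C}'_i$, and $A_{2^{m}-w}(\mathbf{C}_i)=A_{w}(\mathbf{C}'_i)$. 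Substituting the recorded parameters and weight distributions of $\mathbf{C}'_1$ and $\mathbf{C}'_2$ gives the stated parameters of $\mathbf{C}_1,\mathbf{C}_2$ and Table~\ref{tab-11}.

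For part~2) I apply the second conclusion of Corollary~\ref{C-3-1} with $K=m+t+1>k=m+1$: padding the $n$ columns of $\mathbf{C}'_2$ (resp.\ $\mathbf{C}'_1$) by $t$ leading zero coordinates and deleting them from a generator matrix of the simplex $[2^{m+t+1}-1,\,m+t+1,\,2^{m+t}]_2$ code produces $\mathbf{C}_3$ (resp.\ $\mathbf{C}_4$), of length $2^{m+t+1}-1-n$ and dimension $m+t+1$, with minimum distance $2^{m+t}-\delta(\mathbf{C}'_i)$, with each weight $2^{m+t}-w$ occurring $2^{t}A_{w}(\mathbf{C}'_i)$ times and a new top weight $2^{m+t}$ occurring $2^{t}-1$ times. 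Substituting the data of $\mathbf{C}'_1,\mathbf{C}'_2$ gives the claimed parameters and Table~\ref{tab-11-1}.

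It remains to prove minimality, for which I invoke the Ashikhmin-Barg criterion \cite{AB}: a binary linear code with $d/\delta>\frac{1}{2}$ is minimal. For $\mathbf{C}_1,\mathbf{C}_2$ the diameter is $2^{m}-w_{\min}(\mathbf{C}'_i)$ (with $w_{\min}$ the smallest weight of $\mathbf{C}'_i$) and $d=2^{m}-\delta(\mathbf{C}'_i)$, so the criterion reduces to the power-of-two inequality $2^{m}>2\delta(\mathbf{C}'_i)-w_{\min}(\mathbf{C}'_i)$, which holds for all even $m\ge 4$ with the single exception that it is an equality for $\mathbf{C}_2$ at $m=4$ (the $[21,5,6]_2$ code); that code is still minimal because its minimum-weight codeword is unique, so no minimum-weight support can sit inside the support of a heavier codeword without the sum being another minimum-weight codeword. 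For $\mathbf{C}_3,\mathbf{C}_4$ the diameter is $2^{m+t}$ and $d=2^{m+t}-\delta(\mathbf{C}'_i)$, so the criterion becomes $\delta(\mathbf{C}'_i)<2^{m+t-1}$, which is immediate since $\delta(\mathbf{C}'_i)<2^{m}\le 2^{m+t-1}$ for $t\ge 1$. The bulk of the proof is thus just the bookkeeping of pushing parameters and weight distributions through Corollary~\ref{C-3-1}; the one step that demands genuine (though modest) care is this minimality verification — checking the Ashikhmin-Barg inequality case by case and handling the tight $m=4$ instance directly — which I expect to be the main obstacle.
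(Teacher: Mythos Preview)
Your proposal is correct and follows the paper's approach exactly: feed the projective three-weight codes $\mathbf{C}'_1,\mathbf{C}'_2$ of \cite[Theorem~14.13]{DingTang} into the two conclusions of Corollary~\ref{C-3-1} and then invoke the Ashikhmin--Barg criterion for minimality, just as in the proofs of Theorems~\ref{T-5-0}--\ref{T-5-3}. You are in fact more careful than the paper, which gives no explicit proof for this theorem and does not isolate the borderline case $m=4$ for $\mathbf{C}_2$ (the $[21,5,6]_2$ code) where $d/\delta=6/12=1/2$ and the Ashikhmin--Barg criterion is not strictly satisfied.
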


\begin{longtable}{|l|l|l|}
\caption{\label{tab-11} Weight distribution of the code in Theorem \ref{T-5-4} 1)}\\ \hline
The weights of $\mathbf{C}_1$   & The weights of $\mathbf{C}_2$   & Weight distribution                          \\ \hline
$0$                             & $0$                             & $1$                                          \\ \hline
$2^m-2^{m-2}+2^{\frac{m-2}{2}}$ & $2^m-2^{m-2}$                   & $2^m-1$                                      \\ \hline
$2^m-2^{m-2}$                   & $2^m-2^{m-2}-2^{\frac{m-2}{2}}$ & $2^m-1$                                      \\ \hline
$2^{m-1}+2^{\frac{m-2}{2}}$     & $2^{m-1}-2^{\frac{m-2}{2}}$     & $1$                                          \\ \hline
\end{longtable}

\begin{longtable}{|l|l|l|}
\caption{\label{tab-11-1} Weight distribution of the codes in Theorem \ref{T-5-4} 2)}\\ \hline
The weights of $\mathbf{C}_3$       &The weights of $\mathbf{C}_4$          & Weight distribution   \\ \hline
$0$                                 &$0$                                    & $1$                   \\ \hline
$2^{m+t}-2^{m-2}+2^{\frac{m-2}{2}}$ &$2^{m+t}-2^{m-2}$                      & $2^t(2^m-1)$          \\ \hline
$2^{m+t}-2^{m-2}$                   &$2^{m+t}-2^{m-2}-2^{\frac{m-2}{2}}$    & $2^t(2^m-1)$          \\ \hline
$2^{m+t}-2^{m-1}+2^{\frac{m-2}{2}}$ &$2^{m+t}-2^{m-1}-2^{\frac{m-2}{2}}$    & $2^t$                 \\ \hline
$2^{m+t}$                           &$2^{m+t}$                              & $2^t-1$               \\ \hline
\end{longtable}


Let $m$ be an odd positive integer, three families of projective linear $[n', m, \frac{n'-2^{(m-1)/2}}{2}]_2$ code with three nonzero weights, $\frac{n'-2^{(m-1)/2}}{2}$, $\frac{n'}{2}$, $\frac{n'+2^{(m-1)/2}}{2}$, were constructed in \cite[Corollary 11]{Ding}, where $n'=2^{m-1}-2^{(m-1)/2}$, $2^{m-1}$ or $2^{m-1}+2^{(m-1)/2}$. For $n'=2^{m-1}-2^{(m-1)/2}$, from Corollary \ref{C-3-1}, we have a family of linear $[2^{m-1}+2^{(m-1)/ 2}-1, m, 2^{m-2}]_2$ codes. For example, when $m=5,7$, there are binary linear three-weight codes $[19, 5, 8]_2$ and $[71,7,32]_2$. The liner code $[19, 5, 8]_2$ is optimal, and the optimal distance of $[71,7]_2$ code is 34, see \cite{Grassl}.\\

From the second conclusion of our complementary theorem, four-weight linear codes with parameters $[9,4,4]_2$, $[11,4,5]_2$, $[25,5,12]_2$, $[27,5,13]_2$, $[43,6,20]_2$, $[51,6,24]_2$, $[57,6,28]_2$, $[59,6,29]_2$, $[107,7,52]_2$, $[115,7,56]_2$, $[121,7,60]_2$, $[123,7,61]_2$, $[235,8,116]_2$, $[243,8,120]_2$, $[249,8,124]_2$ and $[251,8,125]_2$ are constructed, all of these codes are optimal. Moreover four-weight linear $[47,6,22]_2$, $[111,7,54]_2$, $[183,8,88]_2$, $[191,8,92]_2$, $[199,8,96]_2$ and $[239,8,118]_2$ are constructed, and the optimal distances of corresponding codes are 23, 55, 90, 95, 98 and 119, respectively, see \cite{Grassl}.\\

\begin{theorem}\label{T-5-9} Let $m$ be odd. Let $n'=2^{m-1}-2^{(m-1)/2}$, $2^{m-1}$ or $2^{m-1}+2^{(m-1)/2}$.

1)~ The above projective $[2^{m-1}+2^{(m-1)/ 2}-1, m, 2^{m-2}]_2$ code is a minimal three-weight code. The weight distribution is as in Table \ref{tab-17}.

2)~ Let $t\geq1$ be an integer. The four-weight minimal code with parameters $[2^{m+t}-n'-1, m+t, 2^{m+t-1}-\frac{n'+2^{(m-1)/2}}{2}]_2$ is constructed. The weight distribution is as in Table \ref{tab-17-1}.\\
\end{theorem}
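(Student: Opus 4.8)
The plan is to obtain both statements as direct applications of the complementary theorem (Corollary \ref{C-3-1}) to the three-weight projective linear codes of \cite[Corollary 11]{Ding}, and then to verify the Ashikhmin-Barg criterion for minimality. First I would recall the input: for odd $m$ and $n' \in \{2^{m-1}-2^{(m-1)/2},\, 2^{m-1},\, 2^{m-1}+2^{(m-1)/2}\}$, the code from \cite[Corollary 11]{Ding} is a projective $[n', m, \frac{n'-2^{(m-1)/2}}{2}]_2$ code with nonzero weights $\frac{n'-2^{(m-1)/2}}{2} < \frac{n'}{2} < \frac{n'+2^{(m-1)/2}}{2}$ and diameter $\delta = \frac{n'+2^{(m-1)/2}}{2}$; I would also need its weight distribution, which is explicit in \cite{Ding}. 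Projectivity is exactly the hypothesis that lets us invoke the complementary construction, since we need the $n'$ generator columns to be distinct vectors of $\F_2^m$ (and the condition $n' < 2^{m-1}$ is checked case by case for $m$ odd).

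For part (1), I would specialize $n' = 2^{m-1}-2^{(m-1)/2}$ and apply the first conclusion of Corollary \ref{C-3-1} with $q=2$, $k=m$. This yields a $[\frac{2^m-1}{2-1}-n',\, m,\, 2^{m-1}-\delta]_2 = [2^m-1-(2^{m-1}-2^{(m-1)/2}),\, m,\, 2^{m-1}-\frac{n'+2^{(m-1)/2}}{2}]_2$ code; simplifying the length gives $2^{m-1}+2^{(m-1)/2}-1$ and simplifying the distance gives $2^{m-2}$, matching the claimed parameters. The three nonzero weights become $2^{m-1}-w_j$ for the three weights $w_j$ above, namely $2^{m-2}$, $2^{m-1}-\frac{n'}{2}$, and $2^{m-1}-\frac{n'-2^{(m-1)/2}}{2}$, and the multiplicities are preserved by the identity $A_{w_j}(\bC)=A_{2^{m-1}-w_j}(\bC')$; filling in the $A_{w_j}$ from \cite{Ding} produces Table \ref{tab-17}. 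For part (2), I would instead apply the second conclusion of Corollary \ref{C-3-1} with $K = m+t > k = m$, for each of the three values of $n'$; this gives a $[\frac{2^{m+t}-1}{1}-n',\, m+t,\, 2^{m+t-1}-\delta]_2 = [2^{m+t}-n'-1,\, m+t,\, 2^{m+t-1}-\frac{n'+2^{(m-1)/2}}{2}]_2$ code with the four weights $2^{m+t-1}-w_t < 2^{m+t-1}-w_2 < 2^{m+t-1}-w_1 < 2^{m+t-1}$ and multiplicities $2^t A_{w_j}(\bC)$ together with $A_{2^{m+t-1}}(\bC'')=2^t-1$; this is Table \ref{tab-17-1}.

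The remaining point is minimality, which I would handle exactly as in Theorems \ref{T-5-0}--\ref{T-5-4}: invoke the Ashikhmin-Barg criterion $d/\delta > (q-1)/q = 1/2$ from \cite{AB}. For part (1) the smallest weight is $2^{m-1}-\frac{n'-2^{(m-1)/2}}{2}$ and the largest is $2^{m-1}-\frac{n'-2^{(m-1)/2}}{2}$ as well? — no: the largest is $2^{m-1}-w_1 = 2^{m-1}-\frac{n'-2^{(m-1)/2}}{2}$ and the smallest is $2^{m-1}-w_t = 2^{m-2}$, so one checks $2^{m-2} \big/ \big(2^{m-1}-\tfrac{n'-2^{(m-1)/2}}{2}\big) > \tfrac12$, i.e. $2^{m-1} > 2^{m-1}-\tfrac{n'-2^{(m-1)/2}}{2}$, which is immediate. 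For part (2) one checks $\big(2^{m+t-1}-\tfrac{n'+2^{(m-1)/2}}{2}\big)\big/2^{m+t-1} > \tfrac12$, equivalently $2^{m+t-1} > n'+2^{(m-1)/2} = \delta + (\text{small})$, which holds since $\delta < 2^{m-1} \le 2^{m+t-1}$ with room to spare. There is essentially no obstacle here: the whole proof is bookkeeping on top of Corollary \ref{C-3-1}, and the only place requiring a moment's care is confirming in each of the three cases $n' \in \{2^{m-1}\pm 2^{(m-1)/2}, 2^{m-1}\}$ that $n' < q^{k-1} = 2^{m-1}$ (which fails for $n' = 2^{m-1}$ and $n' = 2^{m-1}+2^{(m-1)/2}$ taken with $k=m$, so part (1) genuinely needs $n' = 2^{m-1}-2^{(m-1)/2}$, whereas part (2) uses $K=m+t$ and then $n' < 2^{m-1} < 2^{m+t-1}$ for all three values). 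I would therefore state part (1) only for that single value of $n'$ and part (2) for all three, exactly as written.
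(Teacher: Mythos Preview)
Your proposal is correct and follows exactly the paper's approach: apply Corollary \ref{C-3-1} (first conclusion with $k=m$ for part (1), second conclusion with $K=m+t$ for part (2)) to the three-weight projective codes of \cite[Corollary 11]{Ding}, read off the weight distributions, and check minimality via the Ashikhmin--Barg criterion as in the proofs of Theorems \ref{T-5-0}--\ref{T-5-4}. Your observation that part (1) is only available for $n'=2^{m-1}-2^{(m-1)/2}$ because the hypothesis $n'<2^{m-1}$ of Corollary \ref{C-3-1} fails for the other two values is a useful clarification the paper leaves implicit; note also that the Ashikhmin--Barg inequalities you wrote down reduce to $m>3$, so the minimality claim tacitly assumes $m\geq 5$ (consistent with the paper's examples).
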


\begin{longtable}{|l|l|}
\caption{\label{tab-17} Weight distribution of the codes in Theorem \ref{T-5-9} 1)}\\ \hline
Weight                 & Weight distribution     \\ \hline
$0$                    & $1$                     \\ \hline
$2^{m-2}+2^{(m-1)/2}$  & $2^{m-2}-2^{(m-3)/2}$   \\ \hline
$2^{m-2}+2^{(m-3)/2}$  & $2^{m-1}$               \\ \hline
$2^{m-2}$              & $2^{m-2}+2^{(m-3)/2}-1$ \\ \hline
\end{longtable}

\begin{longtable}{|l|l|}
\caption{\label{tab-17-1} Weight distribution of the codes in Theorem \ref{T-5-9} 2)}\\ \hline
Weight                                & Weight distribution                          \\ \hline
$0$                                   & $1$                                          \\ \hline
$2^{m+t-1}-\frac{n'-2^{(m-1)/2}}{2}$  & $2^t(n'(2^m-n')2^{-m}-n'2^{-(m+1)/2})$       \\ \hline
$2^{m+t-1}-\frac{n'}{2}$              & $2^t(2^m-1-n'(2^m-n')2^{-(m-1)})$            \\ \hline
$2^{m+t-1}-\frac{n'+2^{(m-1)/2}}{2}$  & $2^t(n'(2^m-n')2^{-m}+n'2^{-(m+1)/2})$       \\ \hline
$2^{m+t-1}$                           & $2^t-1$                                      \\ \hline
\end{longtable}

Let $m\geq5$ be an odd positive integer, then a family of projective linear $[2^{m-2}, m-1, 2^{m-3}-2^{\frac{m-3}{2}}]_2$ code with three nonzero weights, $2^{m-3}-2^{\frac{m-3}{2}}, 2^{m-3}, 2^{m-3}+2^{\frac{m-3}{2}}$, was constructed in \cite[Theorem 3]{Heng}. Then from Corollary \ref{C-3-1}, we have a family of $[2^{m+t-1}-2^{m-2}-1, m+t-1, 2^{m+t-2}-2^{m-3}-2^{\frac{m-3}{2}}]_2$ codes, where $t$ is a positive integer. For example, when $m=5, 7$,  binary linear four-weight $[23,5,10]_2$, $[55,6,26]_2$,  $[119,7,58]_2$, $[247,8,122]_2$, $[95,7,44]_2$ and $[223,8,108]_2$ codes are constructed. The optimal distances of corresponding binary linear codes are 11,27,59,123,47 and 111, respectively, see \cite{Grassl}.\\

\begin{theorem}\label{T-5-6}
Let $m\geq 5$ be an odd integer, $t$ be an positive integer. Then we construct a four-weight minimal $[2^{m+t-1}-2^{m-2}-1, m+t-1, 2^{m+t-2}-2^{m-3}-2^{\frac{m-3}{2}}]_2$ codes. The weight distribution is as in the following table.
\end{theorem}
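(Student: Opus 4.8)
The plan is to obtain $\mathbf{C}''$ as the complementary code of the three‑weight base code supplied by \cite[Theorem 3]{Heng}, using the second assertion of the complementary theorem, Corollary \ref{C-3-1}. Concretely, let $\mathbf{C}$ be the projective binary $[2^{m-2},\,m-1,\,2^{m-3}-2^{\frac{m-3}{2}}]_2$ code of \cite[Theorem 3]{Heng}, whose three nonzero weights are $w_1=2^{m-3}-2^{\frac{m-3}{2}}$, $w_2=2^{m-3}$, $w_3=2^{m-3}+2^{\frac{m-3}{2}}$; since $m$ is odd, $\frac{m-3}{2}$ is an integer, and by the dual‑distance computation in \cite{Heng} this code is projective, so it is a legitimate projective linear anticode with diameter $\delta(\mathbf{C})=w_3=2^{m-3}+2^{\frac{m-3}{2}}$. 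First I would record its weight distribution $A_{w_1}(\mathbf{C}),A_{w_2}(\mathbf{C}),A_{w_3}(\mathbf{C})$ from \cite[Theorem 3]{Heng}, as these are exactly the numbers that, scaled, populate the target table.

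Next I set $k=m-1$ (the dimension of $\mathbf{C}$), $n=2^{m-2}$, and $K=m+t-1$. For $t\geq 1$ we have $K>k$ and $n=2^{m-2}<2^{m+t-2}=2^{K-1}$, so the hypothesis $n<q^{K-1}$ of Corollary \ref{C-3-1} is met. Applying the second part of that corollary produces an explicit binary code $\mathbf{C}''$ of length $\frac{2^{K}-1}{2-1}-n=2^{m+t-1}-1-2^{m-2}$, dimension $K=m+t-1$, and minimum distance $2^{K-1}-\delta(\mathbf{C})=2^{m+t-2}-2^{m-3}-2^{\frac{m-3}{2}}$, matching the claimed parameters. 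Its $t+1=4$ nonzero weights are
\[
2^{m+t-2}-2^{m-3}-2^{\tfrac{m-3}{2}}\;<\;2^{m+t-2}-2^{m-3}\;<\;2^{m+t-2}-2^{m-3}+2^{\tfrac{m-3}{2}}\;<\;2^{m+t-2},
\]
with $A_{2^{K-1}-w_j}(\mathbf{C}'')=2^{K-k}A_{w_j}(\mathbf{C})=2^{t}A_{w_j}(\mathbf{C})$ for $j=1,2,3$ and $A_{2^{m+t-2}}(\mathbf{C}'')=2^{K-k}-1=2^{t}-1$; substituting the values of $A_{w_j}(\mathbf{C})$ from \cite[Theorem 3]{Heng} fills in the table.

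Finally, for minimality I would invoke the Ashikhmin–Barg criterion \cite{AB}: it suffices to show $d(\mathbf{C}'')>\tfrac{1}{2}\,\delta(\mathbf{C}'')$, where $\delta(\mathbf{C}'')=2^{m+t-2}$ is the largest weight above. This inequality is equivalent to $2^{m+t-3}>2^{m-3}+2^{\frac{m-3}{2}}$; since $t\geq 1$ gives $2^{m+t-3}\geq 2^{m-2}=2^{m-3}+2^{m-3}$, and $2^{m-3}>2^{\frac{m-3}{2}}$ exactly when $m>3$, the criterion holds for all $m\geq 5$ and $t\geq 1$, so every nonzero codeword of $\mathbf{C}''$ is minimal. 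The only non‑mechanical step is the first one — correctly importing from \cite{Heng} both the exact weight distribution of the base code and the fact that its dual distance is at least $3$ (projectivity) — since once those are in hand, everything else is a direct specialization of the already‑proved Corollary \ref{C-3-1} together with the elementary inequality needed for the Ashikhmin–Barg bound.
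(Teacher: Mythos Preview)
Your proposal is correct and follows essentially the same route as the paper: start from the three-weight projective $[2^{m-2},m-1,2^{m-3}-2^{(m-3)/2}]_2$ code of \cite[Theorem~3]{Heng}, apply the second part of Corollary~\ref{C-3-1} with $K=m+t-1>k=m-1$ to obtain the stated parameters and weight distribution, and then invoke the Ashikhmin--Barg criterion for minimality. The paper gives no further argument beyond this; your explicit verification of the inequality $2^{m+t-3}>2^{m-3}+2^{(m-3)/2}$ for $m\ge 5$, $t\ge 1$ is a welcome detail the paper leaves implicit.
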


\begin{longtable}{|l|l|}
\caption{\label{tab-13} Weight distribution of the code in Theorem \ref{T-5-6}} \\ \hline
Weight                                & Weight distribution                \\ \hline
$0$                                   & $1$                                \\ \hline
$2^{m+t-2}-2^{m-3}+2^{\frac{m-3}{2}}$ & $2^t(2^{m-4}-2^{{m-5}\over 2})$    \\ \hline
$2^{m+t-2}-2^{m-3}$                   & $2^t(3\cdot 2^{m-3}-1)$            \\ \hline
$2^{m+t-2}-2^{m-3}-2^{\frac{m-3}{2}}$ & $2^t(2^{m-4}+2^{{m-5}\over 2})$    \\ \hline
$2^{m+t-2}$                           & $2^t-1$                            \\ \hline
\end{longtable}

Let $m\geq3$, two-weight binary projective linear $[2^{2m-3}+2^{m-2}-1,2m-2,2^{2m-4}]_2$ codes were given in \cite[Theorem 5.2, Corollary 5.4]{WZZ}.Two weights are $2^{2m-4}$ and $2^{2m-4}+2^{m-2}$.  Then from Corollary \ref{C-3-1}, let $t$ be a positive integer, we have a family of three-weight projective linear $[2^{2m+t-2}-2^{2m-3}-2^{m-2},2m+t-2,2^{2m+t-3}-2^{2m-4}-2^{m-2}]_2$ code. When $m=3$, we obtain linear three-weight $[22,5,10]_2$, $[54,6,26]_2$, $[118,7,58]_2$ and $[246,8,122]_2$ codes. All of these codes are optimal, see \cite{Grassl}.\\

\begin{theorem}\label{T-5-15}
Let $m\geq3$, $t$ be a positive integer. We construct a family of three-weight minimal $[2^{2m+t-2}-2^{2m-3}-2^{m-2},2m+t-2,2^{2m+t-3}-2^{2m-4}-2^{m-2}]_2$ codes. The weight distribution is shown in the following table.
\end{theorem}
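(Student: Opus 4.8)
The plan is to recognize the announced code as the complementary code, in the sense of Corollary~\ref{C-3-1}, of the two-weight code of \cite{WZZ}, and then to read off its parameters, weights and weight distribution directly from the second conclusion of that corollary, leaving only the minimality assertion to be checked by hand.

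First I would recall from \cite[Theorem 5.2, Corollary 5.4]{WZZ} that for every $m\geq 3$ there is a projective binary linear $[2^{2m-3}+2^{m-2}-1,\,2m-2]_2$ code ${\bf C}$ with exactly two nonzero weights $w_1=2^{2m-4}$ and $w_2=2^{2m-4}+2^{m-2}$; in particular its diameter is $\delta({\bf C})=2^{2m-4}+2^{m-2}$. I also need the two frequencies $A_{w_1}({\bf C})$ and $A_{w_2}({\bf C})$; these are given in \cite{WZZ}, but they can just as well be recovered on the spot from $1+A_{w_1}({\bf C})+A_{w_2}({\bf C})=2^{2m-2}$ together with $w_1A_{w_1}({\bf C})+w_2A_{w_2}({\bf C})=(2^{2m-3}+2^{m-2}-1)\,2^{2m-3}$ (the latter holds because ${\bf C}$ is projective, so every coordinate functional is onto), which yields $A_{w_2}({\bf C})=2^{2m-3}-2^{m-2}$ and $A_{w_1}({\bf C})=2^{2m-3}+2^{m-2}-1$.

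Next I would set $K=2m+t-2$, so that $K>k:=2m-2$ since $t\geq 1$, and note that $n=2^{2m-3}+2^{m-2}-1<2^{2m-2}\leq 2^{K-1}$, so the column-deletion construction underlying Corollary~\ref{C-3-1} is legitimate for this $K$ (the surviving columns still span ${\bf F}_2^{K}$). Applying the second conclusion of Corollary~\ref{C-3-1} to ${\bf C}$ then produces an explicit $[\,2^{K}-1-n,\ K,\ 2^{K-1}-\delta({\bf C})\,]_2$ code ${\bf C}''$, that is, a $[\,2^{2m+t-2}-2^{2m-3}-2^{m-2},\ 2m+t-2,\ 2^{2m+t-3}-2^{2m-4}-2^{m-2}\,]_2$ code, with three nonzero weights $2^{K-1}-w_2<2^{K-1}-w_1<2^{K-1}$, namely $2^{2m+t-3}-2^{2m-4}-2^{m-2}$, $2^{2m+t-3}-2^{2m-4}$ and $2^{2m+t-3}$, with respective frequencies $2^{t}A_{w_2}({\bf C})$, $2^{t}A_{w_1}({\bf C})$ and $2^{t}-1$. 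Substituting the frequencies found above gives precisely the table, and adding them up returns $2^{K}=2^{2m+t-2}$ as a check.

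Finally I would verify minimality via the Ashikhmin--Barg criterion of \cite{AB}: the smallest nonzero weight of ${\bf C}''$ is $d({\bf C}'')=2^{2m+t-3}-2^{2m-4}-2^{m-2}$ and the largest is $\delta({\bf C}'')=2^{2m+t-3}$, so
$$\frac{d({\bf C}'')}{\delta({\bf C}'')}=1-\frac{1}{2^{t+1}}-\frac{1}{2^{m+t-1}}\geq 1-\frac14-\frac18=\frac58>\frac12$$
for all $m\geq 3$ and $t\geq 1$ (the extreme case being $m=3$, $t=1$); hence every nonzero codeword of ${\bf C}''$ is minimal. I expect the only genuine point in the whole argument to be the input fact that the \cite{WZZ} code is projective — it is what makes the deletion construction of Corollary~\ref{C-3-1} apply and what validates the frequency identity used above — and this is already established in \cite{WZZ}; everything else is bookkeeping and one elementary inequality.
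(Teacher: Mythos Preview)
Your proposal is correct and follows essentially the same approach as the paper: both take the two-weight projective code of \cite{WZZ} as input, apply the second conclusion of Corollary~\ref{C-3-1} with $K=2m+t-2$, and invoke the Ashikhmin--Barg criterion for minimality. You supply more detail than the paper does (recovering the two frequencies from the first-moment identity rather than quoting them, and writing out the inequality $d/\delta\geq 5/8$ explicitly), but the underlying argument is identical.
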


\begin{longtable}{|l|l|}
\caption{ Weight distribution of the code in Theorem \ref{T-5-15}} \\ \hline
Weight                                & Weight distribution                \\ \hline
$0$                                   & $1$                                \\ \hline
$2^{2m+t-3}-2^{2m-4}$                 & $2^t(2^{2m-3}+2^{m-2}-1)$          \\ \hline
$2^{2m+t-3}-2^{2m-4}-2^{m-2}$         & $2^t(2^{2m-3}-2^{m-2})$            \\ \hline
$2^{2m+t-3}$                          & $2^t-1$                            \\ \hline
\end{longtable}

Let $m\geq 3$ be an odd positive integer, a family of ternary three-weight projective linear $[\frac{3^m-1}{2},2m, 3^{m-1}-3^{\frac{m-1}{2}}]_3$ codes was constructed, see \cite[Chapter 8.5]{DingTang}. Three weights are $3^{m-1}-3^{\frac{m-1}{2}}$, $3^{m-1}$, $3^{m-1}+3^{\frac{m-1}{2}}$. Then from our complementary theorem, a family of three-weight ternary $[\frac{3^{2m}-3^m}{2}, 2m, 3^{2m-1}-3^{m-1}-3^{\frac{m-1}{2}}]_3$ codes is constructed. Three weights are $3^{2m-1}-3^{m-1}-3^{\frac{m-1}{2}}, 3^{2m-1}-3^{m-1}$ and $3^{2m-1}-3^{m-1}+3^{\frac{m-1}{2}}$.
Similarly, the minimality follows from the Ashikhmin-Barg criterion.\\

When $m=3$, a three-weight $[351,6,231]_3$ code is obtained. The Griesmer defect of this code is $4$. This three-weight ternary code is near optimal.\\

\begin{theorem}\label{T-5-5}
    Let $m\geq 3$ be odd.
    We construct a family of three-weight ternary minimal $[\frac{3^{2m}-3^m}{2}, 2m, 3^{2m-1}-3^{m-1}-3^{\frac{m-1}{2}}]_3$ codes with the weight distribution as in the following table.
\end{theorem}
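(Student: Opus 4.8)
The plan is to obtain the claimed code as the complementary code, in the sense of Corollary \ref{C-3-1}, of the known ternary three-weight projective linear code from \cite[Chapter 8.5]{DingTang}. First I would record the input data: a projective linear $[\frac{3^m-1}{2}, 2m, 3^{m-1}-3^{\frac{m-1}{2}}]_3$ code $\mathbf{C}$ with nonzero weights $w_1 = 3^{m-1}-3^{\frac{m-1}{2}}$, $w_2 = 3^{m-1}$, $w_3 = 3^{m-1}+3^{\frac{m-1}{2}}$, together with its weight distribution $A_{w_1}(\mathbf{C}), A_{w_2}(\mathbf{C}), A_{w_3}(\mathbf{C})$, which is explicitly available there. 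Since this code is projective, its diameter is $\delta(\mathbf{C}) = w_3 = 3^{m-1}+3^{\frac{m-1}{2}}$, and the length hypothesis of Corollary \ref{C-3-1} is satisfied because $\frac{3^m-1}{2} < 3^{2m-1}$ for every $m \geq 1$.

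Next I would apply Corollary \ref{C-3-1} with $q = 3$ and $k = 2m$. This immediately yields an explicit $[\frac{3^{2m}-1}{2} - \frac{3^m-1}{2},\, 2m,\, 3^{2m-1}-\delta(\mathbf{C})]_3$ code $\mathbf{C}'$; a one-line simplification gives the length $\frac{3^{2m}-3^m}{2}$ and the minimum distance $3^{2m-1}-3^{m-1}-3^{\frac{m-1}{2}}$ stated in the theorem. By the same corollary, the three nonzero weights of $\mathbf{C}'$ are $3^{2m-1}-w_3 < 3^{2m-1}-w_2 < 3^{2m-1}-w_1$, that is, $3^{2m-1}-3^{m-1}-3^{\frac{m-1}{2}}$, $3^{2m-1}-3^{m-1}$, $3^{2m-1}-3^{m-1}+3^{\frac{m-1}{2}}$, and the frequencies transfer verbatim via $A_{w_j}(\mathbf{C}) = A_{3^{2m-1}-w_j}(\mathbf{C}')$; substituting the known $A_{w_j}(\mathbf{C})$ produces the entries of the accompanying table.

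Finally I would verify minimality through the Ashikhmin-Barg criterion \cite{AB}, i.e.\ check $\frac{d(\mathbf{C}')}{\delta(\mathbf{C}')} > \frac{q-1}{q} = \frac{2}{3}$. With $d(\mathbf{C}') = 3^{2m-1}-3^{m-1}-3^{\frac{m-1}{2}}$ and $\delta(\mathbf{C}') = 3^{2m-1}-3^{m-1}+3^{\frac{m-1}{2}}$, this reduces, after clearing denominators, to the elementary inequality $3^{2m-1} > 3^{m-1}+5\cdot 3^{\frac{m-1}{2}}$, which holds for all odd $m \geq 3$. Essentially all of the work is the bookkeeping already packaged in Corollary \ref{C-3-1}; the only point that needs a little care — and the main, rather minor, obstacle — is to make sure the cited code from \cite{DingTang} is genuinely projective, so that the hypothesis of Corollary \ref{C-3-1} applies and the relevant diameter equals $w_3$; for this I would appeal to the dual-distance statement accompanying that construction.
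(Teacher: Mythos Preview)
Your proposal is correct and follows essentially the same approach as the paper: take the ternary three-weight projective code from \cite[Chapter 8.5]{DingTang}, apply the complementary theorem Corollary~\ref{C-3-1} with $q=3$ and $k=2m$ to read off the new parameters and weight distribution, and invoke the Ashikhmin--Barg criterion for minimality. Your write-up is in fact slightly more careful than the paper's, since you explicitly check the length hypothesis $n<q^{k-1}$, spell out the Ashikhmin--Barg inequality, and flag the need for projectivity of the input code.
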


\begin{longtable}{|l|l|}
\caption{\label{tab-12} Weight distribution of the code in Theorem \ref{T-5-5}}             \\ \hline
Weight                               & Weight distribution                                  \\ \hline
$0$                                  & $1$                                                  \\ \hline
$3^{2m-1}-3^{m-1}+3^{\frac{m-1}{2}}$ & $\frac{1}{2}\cdot(3^{m-1}+3^{\frac{m-1}{2}})(3^m-1)$ \\ \hline
$3^{2m-1}-3^{m-1}$                   & $(2\cdot3^{m-1}+1)(3^m-1)$                           \\ \hline
$3^{2m-1}-3^{m-1}-3^{\frac{m-1}{2}}$ & $\frac{1}{2}\cdot(3^{m-1}-3^{\frac{m-1}{2}})(3^m-1)$ \\ \hline
\end{longtable}

Let $m$ be an odd positive integer, then a family of projective linear $[\frac{3^{3m-1}-1}{2}, 3m, 3^{3m-2}-3^{2m-2}]_3$ codes with three nonzero weights, $3^{3m-2}-3^{2m-2}$, $3^{3m-2}$, $3^{3m-2}+3^{2m-2}$, was constructed in \cite[Theorem 16]{Ding}. From our complementary theorem, we have a family of ternary $[3^{3m-1}, 3m, 2\cdot3^{3m-2}-3^{2m-2}]_3$ codes. For example, when $m=1$, a ternary linear three-weight $[9,3,5]_3$ code is constructed. The optimal distance of the linear $[9,3]_3$ code is 6, see \cite{Grassl}.\\

\begin{theorem}\label{T-5-10}
Let $m$ be odd. The above projective ternary $[3^{3m-1}, 3m, 2\cdot3^{3m-2}-3^{2m-2}]_3$ code is a minimal three-weight code. The weight distribution is as in the following table.
\end{theorem}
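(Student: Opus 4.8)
The plan is to recognize the code in the statement as the complementary code, in the sense of Corollary~\ref{C-3-1}, of the ternary three-weight code ${\bf C}$ of \cite[Theorem 16]{Ding}. That source code has parameters $[\frac{3^{3m-1}-1}{2},3m,3^{3m-2}-3^{2m-2}]_3$ and nonzero weights $w_1=3^{3m-2}-3^{2m-2}<w_2=3^{3m-2}<w_3=3^{3m-2}+3^{2m-2}$, so its diameter is $\delta({\bf C})=w_3=3^{3m-2}+3^{2m-2}$, and it is projective. First I would check that Corollary~\ref{C-3-1} applies: with $q=3$ and $k=3m$ the length $n=\frac{3^{3m-1}-1}{2}$ satisfies $n<q^{k-1}=3^{3m-1}$, and we use the case $K=k$. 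The corollary then produces an explicit projective ternary code ${\bf C}'$ of length $\frac{3^{3m}-1}{2}-\frac{3^{3m-1}-1}{2}=\frac{2\cdot 3^{3m-1}}{2}=3^{3m-1}$, dimension $3m$, and minimum distance $q^{k-1}-\delta({\bf C})=3^{3m-1}-3^{3m-2}-3^{2m-2}=2\cdot 3^{3m-2}-3^{2m-2}$, with exactly three nonzero weights $q^{k-1}-w_3=2\cdot 3^{3m-2}-3^{2m-2}$, $q^{k-1}-w_2=2\cdot 3^{3m-2}$, and $q^{k-1}-w_1=2\cdot 3^{3m-2}+3^{2m-2}$; these three values are pairwise distinct for every $m\geq 1$, so no weights collapse and ${\bf C}'$ is genuinely three-weight.

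Next I would transfer the weight distribution. By the frequency identity $A_{w_j}({\bf C})=A_{q^{k-1}-w_j}({\bf C}')$ of Corollary~\ref{C-3-1}, the multiplicity of each weight $3^{3m-1}-w_j$ in ${\bf C}'$ equals the multiplicity $A_{w_j}({\bf C})$ given in \cite[Theorem 16]{Ding}; together with $A_0({\bf C}')=1$ this yields precisely the entries of the table claimed in the statement. This step is pure bookkeeping once the weight enumerator of the source code is quoted.

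Finally, for minimality I would apply the Ashikhmin--Barg sufficient condition \cite{AB}: it is enough to verify $\frac{d({\bf C}')}{\delta({\bf C}')}>\frac{q-1}{q}=\frac{2}{3}$. Since $d({\bf C}')=2\cdot 3^{3m-2}-3^{2m-2}$ and $\delta({\bf C}')=2\cdot 3^{3m-2}+3^{2m-2}$, clearing denominators turns this into $2\cdot 3^{3m-2}>5\cdot 3^{2m-2}$, i.e. $2\cdot 3^{m}>5$, which holds for every positive integer $m$. Hence every nonzero codeword of ${\bf C}'$ is minimal, and ${\bf C}'$ is a minimal three-weight code.

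I do not expect a real obstacle: the construction is a direct instance of Corollary~\ref{C-3-1}, so the only points needing care are (i) that the source code of \cite[Theorem 16]{Ding} really is projective (its dual distance being at least three is recorded there), and (ii) the elementary inequality $2\cdot 3^{m}>5$ in the Ashikhmin--Barg step; both are immediate. The arithmetic simplifications of the length and of the shifted weights are routine manipulations of powers of $3$.
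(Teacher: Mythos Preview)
Your proposal is correct and follows exactly the paper's approach: the code in Theorem~\ref{T-5-10} is obtained by applying the complementary theorem (Corollary~\ref{C-3-1} with $K=k=3m$) to the projective ternary three-weight code of \cite[Theorem~16]{Ding}, the weight distribution is read off via the identity $A_{w_j}({\bf C})=A_{q^{k-1}-w_j}({\bf C}')$, and minimality is established by the Ashikhmin--Barg criterion. The paper records no additional argument beyond this; your verification of the arithmetic (length $3^{3m-1}$, shifted weights, and the inequality $2\cdot 3^m>5$) fills in precisely the routine details the paper leaves implicit.
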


\begin{longtable}{|l|l|}
\caption{\label{tab-18} Weight distribution of the three-weight code in Theorem \ref{T-5-10}}\\ \hline
Weight&Weight distribution \\ \hline
$0$                        & $1$                     \\ \hline
$2\cdot3^{3m-2}+3^{2m-2}$  & $3^{2m}+3^m$            \\ \hline
$2\cdot3^{3m-2}$           & $3^{3m}-2\cdot3^{2m}-1$ \\ \hline
$2\cdot3^{3m-2}-3^{2m-2}$  & $3^{2m}-3^m$            \\ \hline
\end{longtable}

When $q$ is a prime power, then two-weight $q$-ary $[q^2+1, 4, q^2-q]_q$ codes were constructed in \cite[Theorem 13.6]{DingTang}. Two weights are $q^2-q$ and $q^2$. These are well-known ovoid codes. Its dual distance is $4$, see \cite[Theorem 13.5]{DingTang}. From  Corollary \ref{C-3-1}, we have the following result.\\

\begin{theorem}\label{T-5-7}
Let $q$ be a prime power. We construct a family of two-weight minimal $[q^3+q, 4, q^3-q^2]_q$ codes. The weight distribution is shown in Table \ref{tab-14}.
\end{theorem}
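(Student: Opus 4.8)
The plan is to apply the complementary theorem, Corollary~\ref{C-3-1}, to the two-weight ovoid code. First I would recall the $q$-ary ovoid code $\mathbf{C}$ with parameters $[q^2+1,4,q^2-q]_q$ from \cite[Theorem 13.6]{DingTang}: its generator matrix $\mathbf{G}$ has as its $q^2+1$ columns the points of an elliptic quadric (ovoid) $\mathcal{O}$ in $\mathrm{PG}(3,q)$, and by \cite[Theorem 13.5]{DingTang} its dual distance is $4$, so $\mathbf{C}$ is a projective linear anticode. Using the classical fact that every hyperplane of $\mathrm{PG}(3,q)$ meets $\mathcal{O}$ in either $1$ point (a tangent plane, and there are exactly $q^2+1$ of these, one through each point of $\mathcal{O}$) or $q+1$ points (a secant plane, and there are $\frac{q^4-1}{q-1}-(q^2+1)=q^3+q$ of these), together with the interpretation $\wt(\mathbf{x}\mathbf{G})=n-|H_{\mathbf{x}}\cap\mathbf{G}|$ and the fact that each hyperplane accounts for $q-1$ nonzero codewords, I obtain $A_{q^2-q}(\mathbf{C})=(q-1)(q^3+q)$ and $A_{q^2}(\mathbf{C})=(q-1)(q^2+1)$; these sum to $q^4-1$ as a sanity check, so the diameter is $\delta(\mathbf{C})=q^2$. (Alternatively this weight distribution can simply be quoted from \cite[Theorem 13.6]{DingTang}.)

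Next I would invoke Corollary~\ref{C-3-1} with $k=4$, $n=q^2+1$, $t=2$, $w_1=q^2-q$, $w_2=q^2$; the hypothesis $n<q^{k-1}$ holds since $q^2+1<q^3$. The complementary code $\mathbf{C}'$ then has length $\frac{q^4-1}{q-1}-(q^2+1)=q^3+q$, dimension $4$, minimum distance $q^{k-1}-\delta(\mathbf{C})=q^3-q^2$, and exactly the two nonzero weights $q^{k-1}-w_2=q^3-q^2$ and $q^{k-1}-w_1=q^3-q^2+q$, with $A_{q^3-q^2}(\mathbf{C}')=A_{q^2}(\mathbf{C})=(q-1)(q^2+1)$ and $A_{q^3-q^2+q}(\mathbf{C}')=A_{q^2-q}(\mathbf{C})=(q-1)(q^3+q)$. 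This gives the parameters and the weight distribution recorded in Table~\ref{tab-14}.

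Finally, for the minimality I would invoke the Ashikhmin-Barg criterion \cite{AB}: it suffices to verify $\frac{d(\mathbf{C}')}{\delta(\mathbf{C}')}=\frac{q^3-q^2}{q^3-q^2+q}>\frac{q-1}{q}$. Clearing denominators, this is equivalent to $q(q^3-q^2)>(q-1)(q^3-q^2+q)$, which simplifies to $q(q-1)^2>0$, valid for every prime power $q$. Hence every code in the family is minimal.

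There is no real obstacle in this argument. The only point requiring a little care is the weight distribution of the ovoid code, i.e.\ the enumeration of tangent and secant planes of an ovoid of $\mathrm{PG}(3,q)$, which is a routine incidence-geometry count; after that everything is a direct substitution into Corollary~\ref{C-3-1} together with a one-line inequality.
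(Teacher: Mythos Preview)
Your proposal is correct and follows exactly the paper's approach: start from the projective ovoid $[q^2+1,4,q^2-q]_q$ code of \cite[Theorems 13.5--13.6]{DingTang}, apply the complementary theorem (Corollary~\ref{C-3-1}) to obtain the parameters and weight distribution, and invoke the Ashikhmin--Barg criterion for minimality. You in fact supply more detail than the paper, which simply cites the ovoid code and points to Corollary~\ref{C-3-1}.
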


\begin{longtable}{|l|l|}
\caption{\label{tab-14} Weight distribution of the two-weight $[q^3+q,4,q^3-q^2]_q$ code}\\ \hline
Weight&Weight distribution \\ \hline
$0$&$1$ \\ \hline
$q^3-q^2$&$(q-1)(q^2+1)$ \\ \hline
$q^3-q^2+q$&$(q^2-q)(q^2+1)$ \\ \hline
\end{longtable}

For example, when $q=4$, we get a two-weight $[68,4,48]_4$ code. The optimal code in \cite{Grassl} is a linear $[68,4,50]_4$ code. When $q=5$, we get a two-weight $[130,4,100]_5$ code. The optimal code in \cite{Grassl} is a linear $[130,4,103]_5$ code.\\

Let $S_1 \subset {\bf F}_q^4$ and $S_2 \subset {\bf F}_q^4$ be two linear subspace of the dimension two satisfying $S_1 \bigcap S_2={\bf 0}$. Then using $\frac{q^2-1}{q-1}+\frac{q^2-1}{q-1}$ columns in these two subspaces as one generator matrix, we have a projective $[2(q+1),4]_q$ code ${\bf C}$. Any hyperplane in ${\bf F}_q^4$ intersects such $2(q+1)$ columns at $2$ elements, or $q+1+1$ elements. Then the code ${\bf C}$ is a two-weight $[2q+2,4,q]_q$ code. Two weights are $q$ and $2q$.  The complementary code is a projective linear $[q^3+q^2-q-1,4, q^3-2q]_q$ code. This is a two-weight linear code. Two weights are $q^3-2q$ and $q^3-q$. When $q=3$, this is a two-weight $[32,4,21]_3$ code. When $q=4$, this is a two-weight $[75,4,56]_4$ code. Both codes have optimal parameters, see \cite{Grassl}.\\

In the following table, we list some near-optimal two-weight binary linear codes from the above construction.

\begin{longtable}{|l|l|l|l|}
\caption{\label{tab-1-2} Two-weight codes}\\ \hline
$q$&Parameters&Weights&Optimal parameters\\ \hline
$5$ &$[12,4,5]_5$&$5,10$ &$[12,4,8]_5$ \\ \hline
$7$ &$[16,4,7]_7$&$7,14$&$[16,4,11]_7$ \\ \hline
$8$ &$[18,4,8]_8$&$8,16$&$[18,4,13]_8$ \\ \hline
$9$ &$[20,4,9]_9$&$9,18$&$[20,4,15]_9$ \\ \hline
\end{longtable}

\begin{theorem}\label{T-5-8}
1) The above projective linear $[2(q+2),4,q]_2$ code has the following weight distribution.

\begin{longtable}{|l|l|}
\caption{\label{tab-15} Weight distribution of the two-weight $[2q+2,4,q]_q$ code}\\ \hline
Weight&Weight distribution \\ \hline
$0$  & $1$         \\ \hline
$q$  & $2(q^2 -1)$ \\ \hline
$2q$ & $(q^2-1)^2$ \\ \hline
\end{longtable}

2) The above projective linear $[q^3+q^2-q-1,4,q^3-2q]_q$ code is a minimal two-weight code. The weight distribution is as in Table \ref{tab-16}.
\end{theorem}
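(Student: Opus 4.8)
The plan is to describe both codes through the projective geometry of hyperplanes in $\mathbf{F}_q^4$ and then obtain part 2 directly from the complementary theorem. Write $n=2q+2$, and let $\mathbf{G}$ denote the set of columns of the generator matrix of $\mathbf{C}$, i.e. one representative vector for each of the $q+1$ projective points of $S_1$ and each of the $q+1$ projective points of $S_2$ (these $2q+2$ points are pairwise distinct because $S_1\cap S_2=\mathbf{0}$). Since $\mathbf{G}$ has rank $4$, the map $\mathbf{x}\mapsto \mathbf{x}\cdot\mathbf{G}$ is a bijection from $\mathbf{F}_q^4\setminus\{\mathbf{0}\}$ onto the nonzero codewords, and $\wt(\mathbf{x}\cdot\mathbf{G})=n-|H_\mathbf{x}\cap\mathbf{G}|$ where $H_\mathbf{x}=\{\mathbf{v}:\langle\mathbf{x},\mathbf{v}\rangle=0\}$ is a hyperplane. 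As $\mathbf{x}$ and $\lambda\mathbf{x}$ give the same hyperplane, each of the $\frac{q^4-1}{q-1}=q^3+q^2+q+1$ hyperplanes of $\mathbf{F}_q^4$ accounts for exactly $q-1$ nonzero codewords, all of one common weight, so it suffices to sort hyperplanes by the value of $|H\cap\mathbf{G}|$.

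Next I would prove the dichotomy claimed in the construction. From $S_1\cap S_2=\mathbf{0}$ and $\dim S_1+\dim S_2=4$ we get $S_1+S_2=\mathbf{F}_q^4$, so no hyperplane contains both $S_1$ and $S_2$. For a hyperplane $H$ the dimension formula gives $\dim(H\cap S_i)\ge 3+2-4=1$, with equality unless $S_i\subseteq H$; when $\dim(H\cap S_i)=1$ exactly one column from $\mathbb{P}(S_i)$ lies in $H$, and when $S_i\subseteq H$ all $q+1$ of them do. Hence if $H$ contains one of $S_1,S_2$, then $|H\cap\mathbf{G}|=(q+1)+1=q+2$ and the associated codewords have weight $n-(q+2)=q$; otherwise $|H\cap\mathbf{G}|=1+1=2$ and the weight is $n-2=2q$. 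The number of hyperplanes through a fixed $2$-dimensional subspace equals the number of lines in the $2$-dimensional quotient, namely $\frac{q^2-1}{q-1}=q+1$, and the two families ``$H\supseteq S_1$'' and ``$H\supseteq S_2$'' are disjoint, so there are $2(q+1)$ hyperplanes of the first kind and $q^3+q^2+q+1-2(q+1)=q^3+q^2-q-1=(q+1)(q^2-1)$ of the second. Multiplying by $q-1$ gives $A_q(\mathbf{C})=(q-1)\cdot 2(q+1)=2(q^2-1)$ and $A_{2q}(\mathbf{C})=(q-1)(q+1)(q^2-1)=(q^2-1)^2$, which is Table~\ref{tab-15}; as a check, $1+2(q^2-1)+(q^2-1)^2=q^4$.

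For part 2 I would simply feed $\mathbf{C}$, which has dimension $4$ and diameter $\delta(\mathbf{C})=2q$, into the complementary theorem (Corollary~\ref{C-3-1}) with $k=4$. This produces the explicit $[\frac{q^4-1}{q-1}-(2q+2),\,4,\,q^3-2q]_q=[q^3+q^2-q-1,\,4,\,q^3-2q]_q$ code $\mathbf{C}'$ with the two nonzero weights $q^3-2q<q^3-q$ and with $A_{q^3-q}(\mathbf{C}')=A_q(\mathbf{C})=2(q^2-1)$, $A_{q^3-2q}(\mathbf{C}')=A_{2q}(\mathbf{C})=(q^2-1)^2$, which is Table~\ref{tab-16}. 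For minimality I would invoke the Ashikhmin--Barg criterion: $\frac{d(\mathbf{C}')}{\delta(\mathbf{C}')}=\frac{q^3-2q}{q^3-q}>\frac{q-1}{q}$ is, after clearing the positive denominators, equivalent to $q(q^3-2q)-(q-1)(q^3-q)=q(q^2-q-1)>0$, which holds for every prime power $q\ge 2$; hence every nonzero codeword of $\mathbf{C}'$ is minimal.

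Everything here is routine linear algebra and arithmetic. The only step needing a little care is the hyperplane count in the second paragraph: one must use $S_1\cap S_2=\mathbf{0}$ both to conclude that no hyperplane contains both subspaces (so the two ``large-intersection'' families are disjoint) and to see that a hyperplane meeting $S_i$ in a line contributes exactly one of the chosen columns. I do not anticipate any genuine obstacle.
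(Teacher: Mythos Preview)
Your proposal is correct and follows essentially the same approach as the paper. The paper itself gives no formal proof for this theorem: the paragraph preceding it asserts the hyperplane dichotomy (intersection size $2$ or $q+2$) and the complementary construction, and leaves the counting and the Ashikhmin--Barg verification implicit; you have simply supplied those details accurately.
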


\begin{longtable}{|l|l|}
\caption{\label{tab-16} Weight distribution of the two-weight $[q^3+q^2-q-1,4,q^3-2q]_q$ code}\\ \hline
Weight&Weight distribution \\ \hline
$0$      & $1$          \\ \hline
$q^3-q$  & $2(q^2-1)$   \\ \hline
$q^3-2q$ & $(q^2-1)^2$  \\ \hline
\end{longtable}

Let $m \geq 2$, $q$ be an prime, then a family of projective linear $[q^{2m}+1,3m,q^{m-1}(q^{m+1}-q^{m}-1)]_q$ codes with eight nonzero weights $q^{m-1}(q^{m+1}-q^{m}-1)$, $q^{m-1}(q^{m+1}-q^{m}-1)+1$, $q^{m-1}(q^{m+1}-q^{m}-1)+2$, $q^{2m-1}(q-1)$, $q^{2m-1}(q-1)+1$, $(q-1)(q^{2m-1}+q^{m-1})$, $(q-1)(q^{2m-1}+q^{m-1})+1$, $(q-1)(q^{2m-1}+q^{m-1})+2$ was constructed in \cite[Theorem VI.4]{HengDing}. From Corollary \ref{C-3-1}, we have a family of $q$-ary $[\frac{q^{3m}-1}{q-1}-q^{2m}-1, 3m, q^{3m-1}-(q-1)(q^{2m-1}+q^{m-1})-2]_q$ codes. For example, when $q=2$, $m=2$,  linear eight-weight $[46,6,20]_2$ code is constructed.  The optimal distance of $[46, 6]_2$ code is 22, see \cite{Grassl}. When $q=3$, $m=2$, a linear $[282, 6, 181]_3$ code is constructed. The Griesmer defect of this code is $8$.\\

\begin{theorem}\label{T-5-11}
Let $m \geq 2$ be an integer, $q$ be an prime. The projective $q$-ary $[\frac{q^{3m}-1}{q-1}-q^{2m}-1, 3m, q^{3m-1}-(q-1)(q^{2m-1}+q^{m-1})-2]_q$ code is a minimal eight-weight code. The weight distribution is as in Table \ref{tab-19}.
\end{theorem}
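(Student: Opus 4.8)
The plan is to obtain $\mathbf{C}'$ as the complementary code, in the sense of Corollary \ref{C-3-1}, of the eight-weight code $\mathbf{C}$ of \cite[Theorem VI.4]{HengDing}, which is a projective linear $[q^{2m}+1, 3m, q^{m-1}(q^{m+1}-q^{m}-1)]_q$ code whose eight nonzero weights and frequencies are recorded there. First I would check the hypotheses of Corollary \ref{C-3-1} with $K=k=3m$: the code $\mathbf{C}$ is projective (its dual distance is at least three, as in the framework of Subsection 1.3 applied to these codes), and the length condition $n=q^{2m}+1<q^{3m-1}$ holds for every $m\geq 2$ and every prime $q$, since $q^{3m-1}\geq q\cdot q^{2m}>q^{2m}+1$.

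Next I would read off $\mathbf{C}'$. Deleting the $q^{2m}+1$ columns of a generator matrix of $\mathbf{C}$ (viewed inside $\mathbf{F}_q^{3m}$) from the $\frac{q^{3m}-1}{q-1}$ columns of the $q$-ary simplex $[\frac{q^{3m}-1}{q-1},3m,q^{3m-1}]_q$ code yields, by Theorem 3.1, a linear $[\frac{q^{3m}-1}{q-1}-q^{2m}-1,\,3m,\,q^{3m-1}-\delta(\mathbf{C})]_q$ code; the dimension is exactly $3m$ because the number of remaining columns $\frac{q^{3m}-1}{q-1}-q^{2m}-1$ exceeds $\frac{q^{3m-1}-1}{q-1}$ and so cannot lie in a hyperplane. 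The largest of the eight weights of $\mathbf{C}$ is $\delta(\mathbf{C})=(q-1)(q^{2m-1}+q^{m-1})+2$, giving the minimum distance $q^{3m-1}-(q-1)(q^{2m-1}+q^{m-1})-2$ claimed; and Corollary \ref{C-3-1} gives that the eight nonzero weights of $\mathbf{C}'$ are the $q^{3m-1}-w_j$ with $A_{q^{3m-1}-w_j}(\mathbf{C}')=A_{w_j}(\mathbf{C})$, so substituting the eight weights and their frequencies from \cite[Theorem VI.4]{HengDing} produces Table \ref{tab-19}.

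Finally I would prove minimality via the Ashikhmin--Barg criterion \cite{AB}, i.e. $d(\mathbf{C}')/\delta(\mathbf{C}')>(q-1)/q$. Here the maximum weight of $\mathbf{C}'$ is complementary to the smallest weight $w_1=q^{2m}-q^{2m-1}-q^{m-1}$ of $\mathbf{C}$, namely $\delta(\mathbf{C}')=q^{3m-1}-q^{2m}+q^{2m-1}+q^{m-1}$. Clearing denominators it suffices to show $q\,d(\mathbf{C}')-(q-1)\,\delta(\mathbf{C}')>0$, and a short expansion reduces this to $q^{3m-1}-q^{2m}+q^{2m-1}-q^{m+1}+q^{m-1}-2q>0$, which holds for every prime $q$ and every $m\geq 2$ because $q^{3m-1}-q^{2m}=q^{2m}(q^{m-1}-1)\geq q^{2m}>q^{m+1}$ while $q^{2m-1}+q^{m-1}>2q$ in that range (the tightest case being $q=2$, $m=2$, where the left side equals $14$). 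This Ashikhmin--Barg arithmetic is the only genuinely computational step; everything else is a bookkeeping translation through Corollary \ref{C-3-1}, so I expect verifying that inequality cleanly across all $q$ and $m\geq 2$ to be the main point to nail down.
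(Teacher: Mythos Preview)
Your proposal is correct and follows essentially the same approach as the paper: the code is obtained from the eight-weight projective code of \cite[Theorem VI.4]{HengDing} via the complementary construction (Corollary \ref{C-3-1}), and minimality is verified by the Ashikhmin--Barg criterion. The paper in fact leaves the Ashikhmin--Barg verification implicit (as it does for the other theorems in Section \ref{sec-5}), so your explicit computation of $q\,d(\mathbf{C}')-(q-1)\,\delta(\mathbf{C}')=q^{3m-1}-q^{2m}+q^{2m-1}-q^{m+1}+q^{m-1}-2q>0$ is a welcome addition rather than a deviation.
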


\begin{longtable}{|l|l|}
\caption{\label{tab-19} Weight distribution of the eight-weight code in Theorem \ref{T-5-11}}\\ \hline
Weight&Weight distribution \\ \hline
$0$                                 & $1$                                   \\ \hline
$q^{3m-1}-q^{m-1}(q^{m+1}-q^m-1)$   & $(q-1)(q^{m-1}-1)(q^{2m-2}+q^{m-1})$  \\ \hline
$q^{3m-1}-q^{m-1}(q^{m+1}-q^m-1)-1$ & $q^{2m-2}(q-1)^{2}(2q^{m-1}-1)$       \\ \hline
$q^{3m-1}-q^{m-1}(q^{m+1}-q^m-1)-2$ & $q^{2m-2}(q-1)^{2}(q^{m}-q^{m-1}+1)$  \\ \hline
$q^{3m-1}-q^{2m-1}(q-1)$            & $q^{2m-1}-1$                          \\ \hline
$q^{3m-1}-q^{2m-1}(q-1)-1$          & $q^{2m}-q^{2m-1}$                     \\ \hline
$q^{3m-1}-(q-1)(q^{2m-1}+q^{m-1})$  & $(q^{m-1}-1)(q^{2m-2}-q^{m}+q^{m-1})$ \\ \hline
$q^{3m-1}-(q-1)(q^{2m-1}+q^{m-1})-1$& $q^{2m-2}(q-1)(2q^{m-1}-1)$           \\ \hline
$q^{3m-1}-(q-1)(q^{2m-1}+q^{m-1})-2$& $q^{2m-2}(q-1)^{2}(q^{m-1}-1)$        \\ \hline
\end{longtable}

Let $m$ be an odd positive integer, $q$ be an odd prime.
A family of projective linear $[{q^{m}+1,2m,q^{m-1}(q-1)-q^{\frac {m-1}{2}}}]_q$ code with nine nonzero weights,
$q^{m-1}(q-1)$, $q^{m-1}(q-1)+1$, $q^{m-1}(q-1)+2$, $q^{m-1}(q-1)-q^{\frac {m-1}{2}}(-1)^{\frac {(q-1)(m+1)}{4}}$, $q^{m-1}(q-1)-q^{\frac {m-1}{2}}(-1)^{\frac {(q-1)(m+1)}{4}}+1$,
$q^{m-1}(q-1)-q^{\frac {m-1}{2}}(-1)^{\frac {(q-1)(m+1)}{4}}+2$, $q^{m-1}(q-1)+q^{\frac {m-1}{2}}(-1)^{\frac {(q-1)(m+1)}{4}}$,
$q^{m-1}(q-1)+q^{\frac {m-1}{2}}(-1)^{\frac {(q-1)(m+1)}{4}}+1$,
$q^{m-1}(q-1)+q^{\frac {m-1}{2}}(-1)^{\frac {(q-1)(m+1)}{4}}+2$, was constructed in \cite[Theorem VI.7]{HengDing}. From our complementary theorem, we have a family of $q$-ary linear $[\frac{q^{2m}-1}{q-1}-q^m-1, 2m, q^{2m-1}-q^{m-1}(q-1)-q^{(m-1)/2}-2]_q$ code.
For example, when $q=3$ and $m=3$, a ternary linear $[336, 6, 220]_3$ codes with weights from 220 to 228 is constructed. In fact, this code is near the optimal code since the Griesmer defect of this code is 4.\\

\begin{theorem}\label{T-5-12}
Let $m$ be an odd positive integer, $q$ be an odd prime. The projective $q$-ary $[\frac{q^{2m}-1}{q-1}-q^m-1, 2m, q^{2m-1}-q^{m-1}(q-1)-q^{(m-1)/2}-2]_q$ code is a nine-weight minimal code. The weight distribution is as in Table \ref{tab-20}.
\end{theorem}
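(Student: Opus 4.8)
The idea is to realize this code as the complementary code of the nine-weight projective $[q^{m}+1,\,2m,\,q^{m-1}(q-1)-q^{(m-1)/2}]_q$ code $\mathbf{C}$ of \cite[Theorem VI.7]{HengDing} and to invoke the first part (the case $K=k$) of Corollary \ref{C-3-1}. First I would check the hypothesis: with $k=2m$ and $n=q^{m}+1$ one needs $n<q^{k-1}$, i.e.\ $q^{m}+1<q^{2m-1}$, which holds for every odd prime $q$ and every odd $m\geq 3$ (then $2q^{m}\leq q^{m+1}\leq q^{2m-1}$). Since $\mathbf{C}$ already has dimension $2m$ no zero-padding of columns is needed, so $k_{1}=k=2m$, and $\mathbf{C}$ being projective its $q^{m}+1$ columns are distinct points of $PG(2m-1,q)$.

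Next I would identify the diameter $\delta(\mathbf{C})$. Writing $A=q^{m-1}(q-1)$, $B=q^{(m-1)/2}$ and $\varepsilon=(-1)^{(q-1)(m+1)/4}\in\{1,-1\}$, the nine nonzero weights of $\mathbf{C}$ are $A+i$, $A-\varepsilon B+i$, $A+\varepsilon B+i$ for $i\in\{0,1,2\}$; hence, whatever the value of $\varepsilon$, the largest weight is $\delta(\mathbf{C})=A+B+2$ and the smallest is $A-B$, both lying strictly below $q^{2m-1}$. Deleting the corresponding $q^{m}+1$ columns from the simplex $[\tfrac{q^{2m}-1}{q-1},2m,q^{2m-1}]_q$ code, Corollary \ref{C-3-1} produces a code $\mathbf{C}'$ of length $\tfrac{q^{2m}-1}{q-1}-q^{m}-1$, of dimension exactly $2m$ (it has more than $\tfrac{q^{2m-1}-1}{q-1}$ columns, which is again the inequality $q^{2m-1}>q^{m}+1$, so they cannot lie in a hyperplane), of minimum distance $q^{2m-1}-\delta(\mathbf{C})=q^{2m-1}-q^{m-1}(q-1)-q^{(m-1)/2}-2$, and with nine nonzero weights $q^{2m-1}-w$ of frequency $A_{q^{2m-1}-w}(\mathbf{C}')=A_{w}(\mathbf{C})$ as $w$ runs through the weights of $\mathbf{C}$. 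Table \ref{tab-20} is then obtained by reindexing the weight distribution of \cite[Theorem VI.7]{HengDing}.

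It remains to prove minimality. By the Ashikhmin-Barg criterion \cite{AB} it suffices to verify $d(\mathbf{C}')/\delta(\mathbf{C}')>(q-1)/q$, i.e.\ $q\,d(\mathbf{C}')-(q-1)\,\delta(\mathbf{C}')>0$, where $d(\mathbf{C}')=q^{2m-1}-A-B-2$ and $\delta(\mathbf{C}')=q^{2m-1}-(A-B)=q^{2m-1}-A+B$. Expanding and using the identity $q^{m-1}(q-1)^{2}=q^{m+1}-2q^{m}+q^{m-1}$ to cancel the $q^{m+1}$ and $q^{m}$ contributions, this difference collapses to $q^{2m-1}-q^{m}+q^{m-1}-2q^{(m+1)/2}+q^{(m-1)/2}-2q$, which is positive for every odd prime $q$ and every odd $m\geq 3$ because $q^{2m-1}-q^{m}=q^{m}(q^{m-1}-1)$ already dominates $2q^{(m+1)/2}+2q$ (note $q^{(m+1)/2}\leq q^{m-1}$ for $m\geq 3$). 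This establishes the theorem.

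I do not expect a deep obstacle: Corollary \ref{C-3-1} already packages the transfer of length, dimension, minimum distance and weight distribution, so once $\delta(\mathbf{C})$ is pinned down everything is a direct substitution. The most error-prone part is the bookkeeping of the nine weights and their multiplicities — tracking the sign $\varepsilon$, confirming that the extreme weights take the clean form $A\pm B$ (up to the $+2$ at the top so that the claimed $d$ and $\delta$ come out right), and transcribing the frequency formulas of \cite[Theorem VI.7]{HengDing} into Table \ref{tab-20} — while the only genuinely computational step is the Ashikhmin-Barg cancellation above.
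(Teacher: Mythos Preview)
Your proposal is correct and follows exactly the paper's approach: the code is obtained from the nine-weight projective code of \cite[Theorem VI.7]{HengDing} via the complementary construction (Corollary \ref{C-3-1}), and minimality is deduced from the Ashikhmin-Barg criterion. The paper's own argument for this and the neighbouring theorems is in fact just the construction paragraph preceding the statement together with a one-line appeal to Ashikhmin-Barg; you have simply made the bookkeeping and the inequality $q\,d(\mathbf{C}')>(q-1)\,\delta(\mathbf{C}')$ explicit, and you also correctly note the implicit restriction $m\geq 3$ needed for $q^{m}+1<q^{2m-1}$ (and for the code to be nondegenerate), which the paper's statement leaves tacit.
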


\begin{longtable}{|l|l|}
\caption{\label{tab-20} Weight distribution of the nine-weight code in Theorem \ref{T-5-12}}\\ \hline
Weight&Weight distribution \\ \hline
$0$                        & $1$                     \\ \hline
$q^{2m-1}-q^{m-1}(q-1)$& $(q^{m-1}-1)(q^{m-2}+1)$            \\ \hline
$q^{2m-1}-q^{m-1}(q-1)-1$& $q^{m-2}(q-1)(2q^{m-1}+2q-2)$            \\ \hline
$q^{2m-1}-q^{m-1}(q-1)-2$& $q^{m-2}(q-1)^{2}(q^{m-1}-1)$            \\ \hline
$q^{2m-1}-q^{m-1}(q-1)+q^{\frac {m-1}{2}}(-1)^{\frac {(q-1)(m+1)}{4}}$& $\frac{(q-1)(q^{m-1}-1)({q^{m-2}+(-1)^{\frac {(q-1)(m+1)}{4}}q^{\frac {m-1}{2}}})}{2}$            \\ \hline
$q^{2m-1}-q^{m-1}(q-1)+q^{\frac {m-1}{2}}(-1)^{\frac {(q-1)(m+1)}{4}}-1$& $q^{m-2}(q-1)^{2}(q^{m-1}-1)$            \\ \hline
$q^{2m-1}-q^{m-1}(q-1)+q^{\frac {m-1}{2}}(-1)^{\frac {(q-1)(m+1)}{4}}-2$& $\frac{(q-1)^{2}[{q^{m-2}(q^{m}-q^{m-1}+1)+(-1)^{\frac {(q-1)(m+1)}{4}}q^{\frac {3(m-1)}{2}}}]}{2}$            \\ \hline
$q^{2m-1}-q^{m-1}(q-1)-q^{\frac {m-1}{2}}(-1)^{\frac {(q-1)(m+1)}{4}}$& $\frac{(q-1)(q^{m-1}-1)({q^{m-2}+(-1)^{\frac {(q-1)(m+1)+4}{4}}q^{\frac {m-1}{2}}})}{2}$            \\ \hline
$q^{2m-1}-q^{m-1}(q-1)-q^{\frac {m-1}{2}}(-1)^{\frac {(q-1)(m+1)}{4}}-1$& $q^{m-2}(q-1)^{2}(q^{m-1}-1)$            \\ \hline
$q^{2m-1}-q^{m-1}(q-1)-q^{\frac {m-1}{2}}(-1)^{\frac {(q-1)(m+1)}{4}}-2$& $\frac{(q-1)^{2}[{q^{m-2}(q^{m}-q^{m-1}+1)+(-1)^{\frac {(q-1)(m+1)+4}{4}}q^{\frac {3(m-1)}{2}}}]}{2}$            \\ \hline
\end{longtable}

Let $m\geq 3$ be an odd integer, $q$ be an odd prime.
Then there is a projective $q$-ary linear $[\frac{q^{m-1}-1}{q-1}, m, q^{m-2}-q^{{m-3}\over 2}]_q$ code with three weights $q^{m-2}-q^{{m-3}\over 2}$, $q^{m-2}$ and $q^{m-2}+q^{{m-3}\over 2}$ was constructed in \cite[Corollary 3]{DD}. Then there exists a linear $[q^{m-1}, m, q^{m-1}-q^{m-2}-q^{{m-3}\over 2}]_q$ code with three weights.
For example, when $q=3$, $m=3, 5$, ternary $[9,3,5]_3$ and $[81, 5, 51]_3$ codes are constructed. When $q=5$, $m=3, 5$, the quinary $[25,3,19]_5$ and $[625,5,495]_5$ codes are constructed. The optimal distances of $[9,3]_3$, $[81, 5]_3$, and $[25,3]_5$ codes are 6, 54, and 20 respectively.  The linear $[625,5,495]_5$ code has the Griesmer defect $6$.\\

\begin{theorem}\label{T-5-13}
Let $m\geq 3$ be an odd integer, $q$ be an odd prime. The projective $q$-ary $[q^{m-1}, m, q^{m-1}-q^{m-2}-q^{{m-3}\over 2}]_q$ code is a minimal three-weight code. The weight distribution is shown in Table \ref{tab-21}.
\end{theorem}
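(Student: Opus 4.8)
The plan is to realize this code as the complementary code, in the sense of Corollary \ref{C-3-1}, of the three-weight projective linear code of \cite[Corollary 3]{DD}, and then to read off its parameters, weight distribution and minimality essentially for free.

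First I would set up the base code. By \cite[Corollary 3]{DD}, for odd $m\geq 3$ and $q$ an odd prime there is a projective linear $[\frac{q^{m-1}-1}{q-1},m]_q$ code $\mathbf{D}$ with the three nonzero weights $q^{m-2}-q^{(m-3)/2}$, $q^{m-2}$, $q^{m-2}+q^{(m-3)/2}$ and explicit frequencies. Since $\frac{q^{m-1}-1}{q-1}=q^{m-2}+\cdots+q+1<q^{m-1}$, the hypothesis $n<q^{k-1}$ of Corollary \ref{C-3-1} holds with $k=m$. Taking $K=k=m$ in Corollary \ref{C-3-1} yields an explicit code $\mathbf{C}$ of length $\frac{q^m-1}{q-1}-\frac{q^{m-1}-1}{q-1}=q^{m-1}$, dimension $m$, minimum distance $q^{m-1}-\delta(\mathbf{D})=q^{m-1}-q^{m-2}-q^{(m-3)/2}$, and nonzero weights $q^{m-1}-w$ as $w$ runs over the weights of $\mathbf{D}$, that is $q^{m-1}-q^{m-2}-q^{(m-3)/2}<q^{m-1}-q^{m-2}<q^{m-1}-q^{m-2}+q^{(m-3)/2}$. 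The relation $A_{w}(\mathbf{D})=A_{q^{m-1}-w}(\mathbf{C})$ supplied by Corollary \ref{C-3-1} then carries the weight distribution of $\mathbf{D}$ term by term onto $\mathbf{C}$, producing Table \ref{tab-21}; this step is pure bookkeeping.

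Two routine verifications remain. For projectivity: the generator matrix of $\mathbf{C}$ is obtained by deleting columns of the $q$-ary simplex code, whose columns are pairwise non-proportional, so the surviving columns are still pairwise non-proportional and $\mathbf{C}$ has dual distance at least three (this is already part of the conclusion of Theorem 3.1). For minimality I would invoke the Ashikhmin--Barg criterion \cite{AB}: writing $d=q^{m-1}-q^{m-2}-q^{(m-3)/2}$ and $\delta=q^{m-1}-q^{m-2}+q^{(m-3)/2}$, the inequality $\frac{d}{\delta}>\frac{q-1}{q}$ is, after clearing denominators, equivalent to $q^{m-2}(q-1)>(2q-1)q^{(m-3)/2}$, i.e. $q^{(m-1)/2}(q-1)>2q-1$, which holds for every odd prime $q$ and every odd $m\geq 3$ (for $m=3$ it is $q^2-3q+1>0$, and the slack only grows with $m$). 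Hence every nonzero codeword of $\mathbf{C}$ is minimal.

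The argument is short because Corollary \ref{C-3-1} does all the structural work; the only place demanding care is transcribing the weight frequencies of \cite[Corollary 3]{DD} correctly under the substitution $w\mapsto q^{m-1}-w$, and, to a lesser extent, confirming the elementary Ashikhmin--Barg inequality above.
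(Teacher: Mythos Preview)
Your proposal is correct and follows exactly the paper's approach: the code is obtained as the complementary code (Corollary \ref{C-3-1} with $K=k=m$) of the projective three-weight code of \cite[Corollary 3]{DD}, the weight distribution is transported via $w\mapsto q^{m-1}-w$, and minimality is checked through the Ashikhmin--Barg criterion. The paper in fact gives only the preceding paragraph as justification and does not spell out the Ashikhmin--Barg inequality, so your write-up is, if anything, slightly more detailed.
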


\begin{longtable}{|l|l|}
\caption{\label{tab-21} Weight distribution of the three-weight code in Theorem \ref{T-5-13}}\\ \hline
Weight&Weight distribution \\ \hline
$0$                                   & $1$              \\ \hline
$q^{m-1}-q^{m-2}+q^{{m-3}\over 2}$    & $\frac{q-1}{2}(q^{m-1}+q^{{m-1}\over 2})$     \\ \hline
$q^{m-1}-q^{m-2}$                     & $q^{m-1}-1$ \\ \hline
$q^{m-1}-q^{m-2}-q^{{m-3}\over 2}$    & $\frac{q-1}{2}(q^{m-1}-q^{{m-1}\over 2})$            \\ \hline
\end{longtable}

Let $m\geq 2$ be even, $q$ be an odd prime.
Then there is a projective $q$-ary linear $[\frac{q^{m-1}-1}{q-1}-(-1)^{(\frac{q-1}{2})^2 \frac{m}{2}}q^{\frac{m-2}{2}}, m]_q$ code with two weights $q^{m-2}$ and $q^{m-2}-(-1)^{(\frac{q-1}{2})^2 \frac{m}{2}}q^{\frac{m-2}{2}}$ was constructed in \cite[Corollary 4]{DD}.
Then we can construct a linear $[q^{m-1}+(-1)^{(\frac{q-1}{2})^2 \frac{m}{2}}q^{\frac{m-2}{2}}, m]_q$ code with two weights.
For example, when $q=3$, $m=4, 6$, there are ternary $[30, 4, 18]_3$ and $[234, 6, 153]_3$ codes. The optimal distances of codes $[30, 4]_3$ and $[234, 6]_3$ are 19 and 155, respectively. When $q=5$, $m=4$, there is a quinary $[130, 4, 100]_5$ code. The optimal code in \cite{Grassl} is a $[130, 4, 103]_5$ code.\\

\begin{theorem}\label{T-5-14}
Let $m\geq 4$ be even, $q$ be an odd prime. The projective $[q^{m-1}+(-1)^{(\frac{q-1}{2})^2 \frac{m}{2}}\cdot q^{\frac{m-2}{2}}, m]_q$ code is a minimal two-weight code. The weight distribution is given in Table \ref{tab-22}.
\end{theorem}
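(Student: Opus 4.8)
The plan is to exhibit this code as the complementary code, in the sense of the complementary theorem Corollary~\ref{C-3-1}, of the two-weight projective code of \cite[Corollary 4]{DD}, and then to transport all of its invariants across that correspondence. Throughout write $\eta=(-1)^{(\frac{q-1}{2})^2 \frac{m}{2}}\in\{\pm 1\}$. First I would recall from \cite[Corollary 4]{DD} the projective $q$-ary linear code ${\bf C}$ of length $n=\frac{q^{m-1}-1}{q-1}-\eta q^{\frac{m-2}{2}}$ and dimension $m$ with the two nonzero weights $q^{m-2}$ and $q^{m-2}-\eta q^{\frac{m-2}{2}}$. Because ${\bf C}$ is a projective two-weight code, its generator matrix has no zero column, so the first two power moments read $A_{w_1}({\bf C})+A_{w_2}({\bf C})=q^{m}-1$ and $w_1A_{w_1}({\bf C})+w_2A_{w_2}({\bf C})=n(q-1)q^{m-1}$; solving this linear system yields
\[
A_{q^{m-2}-\eta q^{\frac{m-2}{2}}}({\bf C})=(q-1)\bigl(q^{m-1}+\eta q^{\frac{m-2}{2}}\bigr),\qquad A_{q^{m-2}}({\bf C})=q^{m-1}-1-\eta(q-1)q^{\frac{m-2}{2}}.
\]
One should also record that $n<q^{m-1}$, so that the construction of Section~\ref{sec-3} applies: indeed $n\leq\frac{q^{m-1}-1}{q-1}+q^{\frac{m-2}{2}}<\frac{q^{m-1}}{2}+\frac{q^{m-1}}{q}<q^{m-1}$ for $q\geq 3$, $m\geq 4$.

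Next I would feed ${\bf C}$ into Corollary~\ref{C-3-1} with $K=k=m$. This produces an explicit projective $\bigl[\frac{q^{m}-1}{q-1}-n,\,m,\,q^{m-1}-\delta({\bf C})\bigr]_q$ code ${\bf C}'$ whose two nonzero weights are $q^{m-1}-q^{m-2}$ and $q^{m-1}-q^{m-2}+\eta q^{\frac{m-2}{2}}$, and which satisfies $A_{w}({\bf C})=A_{q^{m-1}-w}({\bf C}')$ for each nonzero weight $w$. The length simplifies to $\frac{q^{m}-1}{q-1}-n=q^{m-1}+\eta q^{\frac{m-2}{2}}$, the claimed value, and $q^{m-1}-\delta({\bf C})$ equals $q^{m-1}-q^{m-2}$ when $\eta=1$ and $q^{m-1}-q^{m-2}-q^{\frac{m-2}{2}}$ when $\eta=-1$, agreeing with the worked examples; Table~\ref{tab-22} is then just the displayed weight distribution of ${\bf C}$ read through the substitution $w\mapsto q^{m-1}-w$. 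The dimension equals $m$ because the generator matrix of ${\bf C}'$ retains $q^{m-1}+\eta q^{\frac{m-2}{2}}>\frac{q^{m-1}-1}{q-1}$ columns of ${\bf F}_q^{m}$, which therefore cannot all lie in a hyperplane.

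Finally, for minimality I would apply the Ashikhmin--Barg criterion \cite{AB}: it is enough to check $d({\bf C}')/\delta({\bf C}')>(q-1)/q$. Using $d({\bf C}')=q^{m-1}-\delta({\bf C})\geq q^{m-1}-q^{m-2}-q^{\frac{m-2}{2}}$ and $\delta({\bf C}')=q^{m-1}-d({\bf C})\leq q^{m-1}-q^{m-2}+q^{\frac{m-2}{2}}$, clearing denominators reduces the criterion to $q^{m-2}(q-1)>(2q-1)q^{\frac{m-2}{2}}$, i.e.\ to $q^{\frac{m-2}{2}}(q-1)>2q-1$; this holds for every odd prime $q$ and every even $m\geq 4$ since then $q^{\frac{m-2}{2}}\geq q$ and $q(q-1)>2q-1$ for $q\geq 3$. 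The main obstacle is thus not conceptual but bookkeeping: one must keep the sign $\eta$ straight through the two weights of ${\bf C}$, confirm that the complementary weights, the minimum distance, and (via Corollary~\ref{C-3-1}) the multiplicities come out as tabulated in both sign cases, and verify that the single elementary inequality above survives uniformly.
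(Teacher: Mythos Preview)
Your proposal is correct and follows essentially the same route as the paper: take the projective two-weight code of \cite[Corollary 4]{DD}, apply the complementary theorem Corollary~\ref{C-3-1} with $K=k=m$ to obtain the stated length, weights, and multiplicities, and deduce minimality from the Ashikhmin--Barg criterion. Your additional explicit checks (power-moment computation of the weight distribution, the length bound $n<q^{m-1}$, and the uniform inequality $(q-1)q^{\frac{m-2}{2}}>2q-1$) fill in details that the paper leaves implicit, but the argument is the same.
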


\begin{longtable}{|l|l|}
\caption{\label{tab-22} Weight distribution of the two-weight code in Theorem \ref{T-5-14}}\\ \hline
Weight&Weight distribution \\ \hline
$0$                                                                 & $1$              \\ \hline
$(q-1)q^{m-2}$                                                      & $q^{m-1}-(-1)^{(\frac{q-1}{2})^2 \frac{m}{2}}q^{{m-2}\over 2}(q-1)-1$ \\ \hline
$(q-1)q^{m-2}+(-1)^{(\frac{q-1}{2})^2 \frac{m}{2}}q^{{m-2}\over 2}$ & $(q-1)(q^{m-1}+(-1)^{(\frac{q-1}{2})^2 \frac{m}{2}}q^{{m-2}\over 2})$  \\ \hline
\end{longtable}

To the best of knowledge, most families of few-weight linear codes constructed in this paper have not been reported in the  literature, see \cite[Chapter 4, Chapter 14]{DingTang}, \cite{CG,HengYue,DD1,DD,carlet} and references therein.\\

\section{Few-weight linear codes from concatenated and complementary codes}\label{sec-6}

It is obvious that $t$-weight linear codes can be obtained from the concatenation of $t'$-weight codes over larger fields. For example, $q=2^s$, let the outer code be the ovoid code $[q^2+1,4,q^2-q]_q$ with two weights $q^2-q$ and $q^2$, and the inner code be the the simplex $[2^s-1,s,2^{s-1}]_2$ code with one weight $2^{s-1}$, the concatenation code is a linear $[(2^s-1)(4^s+1), 4s, 2^{s-1}(4^s-2^s)]_2$ code ${\bf C}$. A nonzero codeword in ${\bf C}$ has weight $2^{s-1}(q^2-q)=2^{s-1}(4^s-2^s)$ or $2^{3s-1}$, since each nonzero coordinate in ${\bf F}_{s^s}$ is replaced by a weight $2^{s-1}$ codeword in the inner code. This code is projective, following the dual code described in \cite{CLX}. Therefore we have the following result.\\

\begin{theorem}\label{T-6-1}
Let $s$ be a positive integer, we construct a linear two-weight $[(2^s-1)(4^s+1), 4s, 2^{s-1}(4^s-2^s)]_2$ code. The weight distribution is as in the following table.
\end{theorem}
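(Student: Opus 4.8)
The plan is to realize $\bC$ as a concatenated code and to read off every invariant from two structural facts: the inner code is \emph{equidistant} and the outer code has only two nonzero weights. First I would fix the outer ovoid code $\bC_{\mathrm{out}}=[q^2+1,4,q^2-q]_q$ with $q=2^s$, whose two nonzero weights are $w_1=q^2-q$ and $w_2=q^2$, together with an $\F_2$-linear bijection $\phi\colon\F_{2^s}\to\bC_{\mathrm{in}}$ onto the binary simplex $[2^s-1,s,2^{s-1}]_2$ code; here $\phi(\bzero)=\bzero$ and $\wt(\phi(a))=2^{s-1}$ for every nonzero $a$. Applying $\phi$ coordinatewise to the codewords of $\bC_{\mathrm{out}}$ produces $\bC$. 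Since $\phi$ is $\F_2$-linear and injective, the coordinatewise map is an $\F_2$-linear injection $\bC_{\mathrm{out}}\hookrightarrow\F_2^{(2^s-1)(4^s+1)}$, so $\bC$ has length $(2^s-1)(q^2+1)=(2^s-1)(4^s+1)$ and $\F_2$-dimension $4s$.

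Next I would compute the weights. If $\bc=(c_1,\dots,c_{q^2+1})\in\bC_{\mathrm{out}}$ has Hamming weight $w$ over $\F_q$, then the binary weight of its image is $\sum_{i}\wt(\phi(c_i))=2^{s-1}w$, because each nonzero $c_i$ contributes exactly $2^{s-1}$ and each zero $c_i$ contributes $0$; this \emph{exact} identity (not merely an inequality) is the point where the constant-weight property of the simplex inner code is essential. Hence $\bC$ has precisely the two nonzero weights $2^{s-1}w_1=2^{s-1}(4^s-2^s)$ and $2^{s-1}w_2=2^{3s-1}$, with minimum distance $2^{s-1}(4^s-2^s)$, as claimed. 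Since the coordinatewise $\phi$ is a bijection $\bC_{\mathrm{out}}\to\bC$, the number of codewords of $\bC$ of weight $2^{s-1}w_i$ equals $A_{w_i}(\bC_{\mathrm{out}})$.

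It then remains to determine $A_{w_1}(\bC_{\mathrm{out}})$ and $A_{w_2}(\bC_{\mathrm{out}})$. I would use the ovoid description: a plane of $\mathrm{PG}(3,q)$ meets the ovoid in $1$ point (tangent, giving weight $q^2$) or in $q+1$ points (secant, giving weight $q^2-q$); there are $q^2+1$ tangent planes and $q(q^2+1)$ secant planes, and each plane yields $q-1$ nonzero codewords, so $A_{w_2}(\bC_{\mathrm{out}})=(q-1)(q^2+1)$ and $A_{w_1}(\bC_{\mathrm{out}})=q(q-1)(q^2+1)$. (Alternatively, the first two Pless power moments $A_{w_1}+A_{w_2}=q^4-1$ and $w_1A_{w_1}+w_2A_{w_2}=(q-1)q^3(q^2+1)$ give the same values, using only that the dual distance of the ovoid code is at least $2$.) Substituting $q=2^s$ fills in the weight-distribution table, and the check $1+A_{w_1}(\bC_{\mathrm{out}})+A_{w_2}(\bC_{\mathrm{out}})=(2^s-1)(4^s+1)(2^s+1)+1=2^{4s}$ confirms it; projectivity of $\bC$ itself is the one ingredient I would import from outside, via the dual code described in \cite{CLX}.

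I do not expect a genuine obstacle here: the real content is that concatenation with an equidistant inner code multiplies every codeword weight by a fixed constant, so a two-weight outer code produces a two-weight binary code, and the only care needed is in the exact weight identity of the second step and in the (standard) enumeration of tangent and secant planes of an ovoid that supplies $A_{w_i}(\bC_{\mathrm{out}})$.
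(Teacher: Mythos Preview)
Your proposal is correct and follows essentially the same approach as the paper: concatenate the ovoid $[q^2+1,4,q^2-q]_q$ code (with $q=2^s$) with the binary simplex $[2^s-1,s,2^{s-1}]_2$ code, use the constant-weight property of the inner code to read off the two binary weights, and import projectivity from \cite{CLX}. You supply more detail than the paper does on the outer weight distribution (via tangent/secant planes or the Pless moments), but the route is the same.
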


\begin{longtable}{|l|l|}
\caption{\label{tab-23} Weight distribution of the code in Theorem \ref{T-6-1}}\\ \hline
Weight&Weight distribution \\ \hline
$0$                 & $1$                   \\ \hline
$2^{3s-1}$          & $(2^s-1)(4^s+1)$      \\ \hline
$2^{3s-1}-2^{2s-1}$ & $(4^s-2^s)(4^s+1)$    \\ \hline
\end{longtable}

When $s=2$, ${\bf C}$ is a two-weight linear $[51, 8, 24]_2$ code. Two weights are $24$ and $32$. This is an optimal code, see \cite{Grassl}. Notice that the optimal $[51,8,24]_2$ code in \cite{Grassl} is a special cyclic code with two weights $24$ and $32$.  When $s=3$, ${\bf C}$ is a two-weight linear $[455, 12, 224]_2$ code. Its Griesmer defect is $4$.\\

{\bf Corollary 6.1} {\em The complementary code is a linear two-weight $[2^{4s}-1-(2^s-1)(4^s+1), 4s, 2^{4s-1}-2^{3s-1}]_2$ code.}\\

When $s=2$, the complementary code of ${\bf C}$ is a linear two-weight $[204,8,96]_2$ code. Two weights are $96$ and $104$. The optimal $[204,8]_2$ code in \cite{Grassl} has the minimum distance $100$. \\

Similarly, we have a two-weight linear $[(2^s-1)(2^{s+1}+2),4s, 2^{2s-1}]_2$ code from the outer $[2q+2,4,q]_q$ code in Section 5. Two weights are $2^{2s-1}$ and $2^{2s}$. When $s=2$, this is a linear two-weight $[30,8,8]_2$ code. Two weights are $8$ and $16$.  The optimal distance of $[30,8]_2$ code in \cite{Grassl} is $12$. From the dual code described in \cite{CLX}, this code is projective code. The complementary code is a linear $[225,8,112]_2$ code. Two weights are $112$ and $120$. This is an optimal code, see \cite{Grassl}.\\

\section{Minimal binary linear codes with optimal and near optimal parameters}\label{sec-7}

From projective linear anticodes constructed in Subsection 4.2 and Theorem 3.1, we have the following result. Actually, these codes in the case $t=0$, have been constructed in \cite{SZ} as optimal locally reparable codes.\\

{\bf Theorem 7.1} {\em Let $s$ be a positive integer and $t$ be a nonnegative integer. Then an explicit binary linear $[2^{2s-1+t}-s(2s-1)-1, 2s-1+t, 2^{2s-2+t}-s^2]_2$ is constructed.}\\

For example, when $s=3$, we get a projective linear $[15,5]_2$ code with the maximum weight $9$. Then binary linear $[16,5,7]_2$, $[48,6, 23]_2$, $[112,7,55]_2$ and $[240, 8, 119]_2$ codes are constructed. These codes are almots optimal. When $s=4$, we get a projective linear $[28,7]_2$ code with the maximum weight $16$. Then binary linear $[99,7,48]_2$ and $[227, 8, 112]_2$ codes are constructed. These two binary codes are optimal codes, as documented in \cite{Grassl}. Notice that these codes were not reported in \cite{Farrell}. These codes were reported in \cite[Table 1]{SZ} as optimal locally reparable codes.\\

Similarly, when $k=7$ is odd, we construct a binary linear $[21, 6, 6]_2$ code with the maximum weight $12$. Then we have a binary projective linear $[42,6,20]_2$ code, $[106,7,52]_2$ code, and $[234,8,116]_2$ code. Comparing with \cite{Grassl}, these three binary linear codes are optimal. \\

From the binary three-weight projective linear $[70,7,32]_2$ code constructed in Subsection 4.2, we construct a binary $[185,8,88]_2$ code, the corresponding optimal minimum distance is $91$, see \cite{Grassl}.\\

We show that some binary projective linear codes constructed in this section are actually minimal codes. Let ${\bf C}$ be the binary projective linear $[\frac{k(k-1)}{2},k-1,k-1]_2$ code constructed in  \cite{Yan}. The maximum weight of this code is $\frac{k^2}{2}$, when $k$ is even, or $\frac{k^2-1}{4}$ when $k$ is odd. We have the following result.\\

{\bf Corollary 7.1} {\em Let ${\bf C}_1$ be the complementary $[2^{k-1}-1-\frac{k(k-1)}{2},k-1, 2^{k-2}-\delta({\bf C})]_2$ code described as above. Then ${\bf C}_1$ is a minimal code, when $k \geq 3$. Here $\delta({\bf C})=\frac{k^2}{4}$ if $k$ is even, or $\delta({\bf C})=\frac{k^2-1}{4}$ if $k$ is odd.}\\

{\bf Proof.} The code ${\bf C}_1$ satisfies the Ashikhmin-Barg criterion.\\

{\bf Corollary  7.2} {\em Let ${\bf C}_1$ be the complementary $[2^{k-1+t}-1-\frac{k(k-1)}{2},k-1+t, 2^{k-2+t}-\delta({\bf C})]_2$ code described as above. Then ${\bf C}_1$ is a minimal code. Here $\delta({\bf C})=\frac{k^2}{4}$ if $k$ is even, or $\delta({\bf C})=\frac{k^2-1}{4}$ if $k$ is odd.}\\

{\bf Proof.} The conclusion follows from the Ashikhmin-Barg criterion immediately.\\

From Corollary 7.1 and 7.2, many minimal binary linear codes with optimal or almost optimal parameters are constructed, see Table 3 of Section 1. Most codes constructed in previous section satisfy the Ashikhmin-Barg criterion, they are minimal linear codes.\\

\section{Complementary MDS codes}\label{sec-8}

We observe the Reed-Solomon $[q, k, q+1-k]_q$ code. When $k \geq 2$, this is a projective linear code. Then the complementary code is a $q$-ary linear $[\frac{q^k-1}{q-1}-q, k, q^{k-1}-q]_q$ code. We call this code a complementary Reed-Solomon code. Since the maximum weight is $q^{k-1}-q+k-1$, this code satisfies the Ashikhmin-Barg condition, this is a minimal $q$-ary linear code. On the other hand, we have $$q^{k-1}-q+q^{k-2}-1+q^{k-3}+\cdots+1=\frac{q^k-1}{q-1}-q-1.$$ This is an almost Griesmer code. The diameter of the complementary Reed-Solomon code is $q^{k-1}-q+k-1$. Then the sum $$\Sigma_{i=0}^{k-1} \lfloor \frac{\delta({\bf C})}{q^i}\rfloor=\frac{q^k-1}{q-1}-q+k-1.$$

From complementary theorem, we have the following result.\\

{\bf Theorem 8.1} {\em  Let $k$ be a positive integer and $h$ be a non-negative integer. A complementary Reed-Solomon $[\frac{q^{k+h}-1}{q-1}-q,k+h,q^{k+h-1}-q]_q$ code is constructed. It is a minimal k-weight almost Griesmer code when $h=0$ and it is a $(k+1)$-weight almost Griesmer code when $h>0$. As an anticode, it has a difference $k-1$ to the antiGriesmer bound.}\\

In the table 4 of Section 1, we list some minimal optimal complementary Reed-Solomon code.\\

{\bf Corollary 8.1} {\em Let $q$ be a prime power. Then we construct a three-weight almost Griesmer $[q^2+1,3,q^2-q]_q$ code. Three weights are $q^2-q$, $q^2-q+1$ and $q^2-q+2$.}\\

In \cite[Theorem V.I]{HengDing}, three-weight $[p^m+1, m+1, p^{m-1}(p-1)]_p$ codes were constructed. Three weights are $p^{m-1}(p-1)$, $p^{m-1}(p-1)+1$ and $p^m$. When $m=2$, it is clear that our three-weight codes are different to their codes.\\

{\bf Corollary 8.2} {\em A complementary Reed-Solomon $[q^3+q^2+1,4,q^3-q]_q$ code is a minimal $4$-weight almost Griesmer code. Four weights are $q^3-q$, $q^3-q+1$, $q^3-q+2$ and $q^3-q+3$.}\\

Similarly, we can construct complementary code of the trivial MDS $[k,k,1]_q$ code. Then we have the following result. The first conclusion is actually a special case of projective Solomon-Stiffler codes, see \cite{Solomon}.\\

{\bf Theorem 8.2} {\em  A complementary MDS $[\frac{q^{k}-1}{q-1}-k,k,q^{k-1}-k]_q$ code is a minimal $k$-weight Griesmer code, if $k<q$.  As an anticode, it has a difference $k-1$ to the antiGriesmer bound, if $k>q$. A complementary MDS $[\frac{q^k-1}{q-1}-k,k,q^{k-1}-k]_q$ code is a minimal almost Griesmer $k$-weight code, if $q\leq k<2q-1$. }\\

{\bf Corollary 8.3} {\em Let $q$ be a prime power satisfying $q \geq 4$. Then we construct a minimal three-weight Griesmer $[q^2+q-2, 3, q^2-3]_q$ code.}\\

From Corollary 8.3, we construct optimal minimal linear $[18,3,13]_4$ code, $[28,3,22]_5$ code, $[54,3,46]_7$ code, $[70,3,61]_8$ code and $[88,3,78]_9$ code, see \cite{Grassl}.\\

\section{$l$-strongly walk-regular graphs}\label{sec-9}

In this section, based on binary three-weight projective linear codes constructed in Section 5, we present some $l$-SWRGs by applying the construction in \cite{KKSS}.\\

\begin{lemma}[\cite{ML}, Theorem 5.1]\label{L-9-1}
     Let $\mathbf{C}$ be a binary projective linear three-weight $[n, k]_2$ code with weights $0 = w_0 < w_1 < w_2 < w_3$ satisfying
        $$w_1 + w_2 + w_3 =\frac{3n}{2}.$$
     Then the coset graph $\Gamma_{\mathbf{C}^\bot}$ of the dual code of $\mathbf{C}$ is a 3-SWRG. Furthermore, if
        $$w_2 =\frac{n}{2},$$
     then $\Gamma_{\mathbf{C}^\bot}$ is an $l$-SWRG for every odd $l \geq 3$.
     Moreover, the spectrum of $\Gamma_{\mathbf{C}^\bot}$ is given by $\{(n - 2w_i)^{A_{w_i}} |0 \leq i \leq 3\}$.
     Additionally, the parameters $\lambda_l$, $\mu_l$, $\nu_l$ of $\Gamma_{\mathbf{C}^\bot}$ are related to its eigenvalues in the following way:

     1) The eigenvalues $\{(n - 2w_i) | 1 \leq i \leq 3\}$ are the roots of the equation $x^l + (\mu_l - \lambda_l)x + (\mu_l - \nu_l) = 0$;

     2) The code length $n$ satisfies $n^l + (\mu_l - \lambda_l)n + (\mu_l - \nu_l) = \mu_l v$, where $v$ is the number of vertices of $\Gamma_{\mathbf{C}^\bot}$, that is, the number of cosets of the code $\mathbf{C}^\bot$.\\
\end{lemma}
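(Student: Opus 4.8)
The plan is to follow the coset-graph construction of \cite{KKSS, ML}. First I would describe $\Gamma_{\mathbf{C}^\bot}$ concretely. Fix a generator matrix $G=(g_1,\dots,g_n)$ of $\mathbf{C}$ (so the $g_i$ are the columns, viewed as vectors in $\mathbb{F}_2^k$). Since $\mathbf{C}^\bot=\ker(y\mapsto Gy^{T})$ has dimension $n-k$, the quotient $\mathbb{F}_2^n/\mathbf{C}^\bot$ is identified with $\mathbb{F}_2^k$ via $y+\mathbf{C}^\bot\mapsto Gy^{T}$, under which the coset of the $i$-th standard basis vector maps to $g_i$. Hence $\Gamma_{\mathbf{C}^\bot}$ is exactly the Cayley graph on $\mathbb{F}_2^k$ with connection set $\{g_1,\dots,g_n\}$. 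Projectivity of $\mathbf{C}$, i.e. $d(\mathbf{C}^\bot)\geq 3$, says the $g_i$ are pairwise distinct and nonzero, so $\Gamma_{\mathbf{C}^\bot}$ is a simple $n$-regular graph on $v=2^k$ vertices, and it is connected because $\mathrm{rank}\,G=k$. Next I would compute the spectrum via characters: the eigenvalue attached to $u\in\mathbb{F}_2^k$ is $\sum_{i=1}^n(-1)^{u\cdot g_i}=n-2\,\mathrm{wt}(uG)$, and as $u$ ranges over $\mathbb{F}_2^k$ the codeword $uG$ ranges over $\mathbf{C}$, so the eigenvalues are the four values $n-2w_i$ ($0\le i\le 3$) with multiplicities $A_{w_i}(\mathbf{C})$; in particular the principal eigenvalue is $n$ with multiplicity $A_0=1$. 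This is the claimed spectrum $\{(n-2w_i)^{A_{w_i}}\}$.

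The heart of the argument is the translation of the $l$-SWRG property into a spectral condition. By definition, counting walks by powers of the adjacency matrix $A$, the graph is an $l$-SWRG with parameters $(\lambda_l,\mu_l,\nu_l)$ iff $A^l=\lambda_lA+\mu_l(J-I-A)+\nu_lI$, that is $A^l=(\lambda_l-\mu_l)A+(\nu_l-\mu_l)I+\mu_lJ$. Testing this identity on the all-ones vector gives $n^l=(\lambda_l-\mu_l)n+(\nu_l-\mu_l)+\mu_lv$, which after rearrangement is relation 2); testing it on an eigenvector of a non-principal eigenvalue $\theta$ (annihilated by $J$) gives $\theta^l=(\lambda_l-\mu_l)\theta+(\nu_l-\mu_l)$, i.e. the three numbers $\theta_i=n-2w_i$ ($i=1,2,3$) are roots of $x^l+(\mu_l-\lambda_l)x+(\mu_l-\nu_l)=0$, which is relation 1). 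Conversely, since $\Gamma$ is connected, $n$-regular, and has exactly these four eigenvalues, once $\theta_1,\theta_2,\theta_3$ are simultaneous roots of a trinomial $x^l-\alpha x-\beta$ one may set $\mu_l=(n^l-\alpha n-\beta)/v$, $\lambda_l=\alpha+\mu_l$, $\nu_l=\beta+\mu_l$ and check on each eigenspace that the matrix identity holds; the resulting numbers are then automatically nonnegative integers, being entries of $A^l$. So the existence of an $l$-SWRG structure is \emph{equivalent} to the three non-principal eigenvalues lying on a common trinomial $x^l-\alpha x-\beta$.

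It then remains to decide when this occurs. For $l=3$ the trinomial is the monic cubic $x^3-\alpha x-\beta$, whose coefficient of $x^2$ equals $-(\theta_1+\theta_2+\theta_3)$; thus $\theta_1,\theta_2,\theta_3$ are its roots for some $\alpha,\beta$ precisely when $\theta_1+\theta_2+\theta_3=0$, i.e. $3n-2(w_1+w_2+w_3)=0$, which is the hypothesis $w_1+w_2+w_3=\tfrac{3n}{2}$; this gives the first claim. Adding $w_2=\tfrac n2$ forces $\theta_2=0$ and, by the identity just used, $\theta_3=-\theta_1$; write $t=\theta_1$. Then the relevant cubic is $p(x)=x(x-t)(x+t)=x^3-t^2x$, so $x^3\equiv t^2x\pmod{p(x)}$ and by induction $x^l\equiv t^{l-1}x\pmod{p(x)}$ for every odd $l\geq 3$. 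Hence $\theta_1,\theta_2,\theta_3$ are roots of $x^l-t^{l-1}x$ (take $\alpha=t^{l-1}$, $\beta=0$), and the equivalence above yields the $l$-SWRG structure for all odd $l$, completing the proof; the spectrum and relations 1)--2) were already obtained above.

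The main obstacle I anticipate is purely organizational: verifying carefully that the matrix identity can be reconstructed from the eigenvalue conditions (using that $J$ is a degree-$3$ polynomial in $A$ because $\Gamma$ has four distinct eigenvalues, and checking that the degenerate cases where $\Gamma$ is complete or empty are excluded), and pinning down that $\lambda_l,\mu_l,\nu_l$ are genuine walk counts. The arithmetic with the trinomial, the reduction $x^l\equiv t^{l-1}x\pmod{x^3-t^2x}$, and the evaluation of the identity on $\mathbf{j}$ are all routine once this set-up is in place. This argument is essentially that of \cite{KKSS, ML}; see also \cite{Dam} for the general theory of strongly walk-regular graphs.
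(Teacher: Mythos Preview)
The paper does not supply its own proof of this lemma; it is quoted verbatim from \cite{ML} (their Theorem~5.1) and used as a black box. Your argument is a correct and complete reconstruction of the standard proof, following exactly the line of \cite{Dam,KKSS,ML}: identify $\Gamma_{\mathbf{C}^\bot}$ with the Cayley graph on $\mathbb{F}_2^k$ with connection set the columns of a generator matrix, read off the spectrum $\{n-2w_i\}$ via characters, rewrite the $l$-SWRG condition as the matrix identity $A^l=(\lambda_l-\mu_l)A+(\nu_l-\mu_l)I+\mu_lJ$, and test it on the Perron eigenvector and on the non-principal eigenspaces to obtain relations 1) and 2). The reduction of the $3$-SWRG condition to $\theta_1+\theta_2+\theta_3=0$ and of the ``all odd $l$'' condition to $\theta_2=0,\ \theta_3=-\theta_1$ via $x^l\equiv t^{l-1}x\pmod{x^3-t^2x}$ is exactly right. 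The only point you flag as needing care---that the matrix identity can be rebuilt from the eigenvalue conditions because the graph is connected, regular, with four distinct eigenvalues (so $J$ is a polynomial in $A$), and is neither complete nor empty (ruled out since a three-weight projective code cannot be the simplex code)---is indeed routine and poses no obstacle.
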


Then combining Theorem \ref{T-5-0}, Theorem \ref{T-5-1}, Theorem \ref{T-5-2} and Theorem \ref{T-5-3}, as well as Lemma \ref{L-9-1}, we have the following result.\\

\begin{theorem}
    Let $\mathbf{C}$ be a binary linear $[n,k]_2$ code. Let $l\geq 3$ be an odd positive integer. The coset graphs $\Gamma_{\mathbf{C}^\bot}$ of the dual codes of the following binary three-weight projective linear codes $\mathbf{C}$ are $l$-SWRGs.
    The spectrum of $\Gamma_{\mathbf{C}^\bot}$ is given by $$\{n^1,(n - 2w_1)^{A_{w_1}} , 0^{A_{w_2}} , -(n - 2w_1)^{A_{w_3}}\}.$$
    Specifically, when $l=3$, the parameters $(\lambda_l, \mu_l, \nu_l)$ of the 3-SWRG with $2^k$ vertices are given by $\lambda_3 = \mu_3+(n - 2w_1)^2$, $\mu_3 = \nu_3 = \frac{4nw_1(n-w_1)}{2^k}$.

    1) A family of linear $[2^{2m}-2^{m}, 2m, 2^{2m-1}-2^{m-1}-2^{\frac{m-1}{2}}]_2$ codes, where $m\geq 3$ is odd;

    2) A family of linear $[2^{3m}-2^{2m}, 3m, 2^{3m-1}-2^{2m-1}-2^{m-1}]_2$ codes, where $m\geq 2$;

    3) A family of linear $[2^{m-1}, m, 2^{m-2}-2^{\frac{m+r-4}{2}}]_2$ codes, where $\frac{m}{r}$ is odd;

    4) A family of linear $[2^{m-1}+2^{{m\over 2}+r-1}, m, 2^{m-2}]_2$ codes, where $\frac{m}{r}$ is even.
\end{theorem}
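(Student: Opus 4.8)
The plan is to verify, for each of the four families, the two numerical hypotheses of Lemma~\ref{L-9-1} and then simply read off its conclusions. To begin with, I would recall that Theorems~\ref{T-5-0}, \ref{T-5-1}, \ref{T-5-2} and \ref{T-5-3} already establish that each family consists of binary three-weight linear codes with the weight distributions recorded in Tables~\ref{tab-8-0}, \ref{tab-8}, \ref{tab-9} and \ref{tab-10}; moreover each of these codes is projective, being a complementary code obtained by deleting a set of pairwise distinct nonzero columns from a generator matrix of the binary simplex code, so that its surviving columns remain pairwise distinct and nonzero, which in the binary case is exactly the statement that its dual distance is at least $3$. Hence Lemma~\ref{L-9-1} applies as soon as the weight conditions are confirmed.

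The core of the argument is the following family-by-family check: in each case the three nonzero weights $w_1<w_2<w_3$ are symmetric about $n/2$, i.e. $w_2=\frac n2$ and $w_1+w_3=n$, which together give both the first hypothesis $w_1+w_2+w_3=\frac{3n}{2}$ and the second hypothesis $w_2=\frac n2$ of Lemma~\ref{L-9-1}. Concretely: for family~1, $n=2^{2m}-2^m$ and Table~\ref{tab-8-0} gives weights $2^{2m-1}-2^{m-1}\pm 2^{\frac{m-1}{2}}$ and the middle weight $2^{2m-1}-2^{m-1}=\frac n2$; for family~2, $n=2^{3m}-2^{2m}$ and Table~\ref{tab-8} gives weights $2^{3m-1}-2^{2m-1}\pm 2^{m-1}$ and middle weight $2^{3m-1}-2^{2m-1}=\frac n2$; for family~3, $n=2^{m-1}$ and Table~\ref{tab-9} gives weights $2^{m-2}\pm 2^{\frac{m+r-4}{2}}$ with middle weight $2^{m-2}=\frac n2$ (here $r$ denotes the parameter written $l$ in Table~\ref{tab-9}); and for family~4, $n=2^{m-1}+2^{\frac m2+r-1}$ and Table~\ref{tab-10} gives weights $2^{m-2}$, $2^{m-2}+2^{\frac m2+r-2}$, $2^{m-2}+2^{\frac m2+r-1}$, which one rewrites as $\frac n2-2^{\frac m2+r-2}$, $\frac n2$, $\frac n2+2^{\frac m2+r-2}$. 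In all four cases the weights are thus symmetric about $n/2$.

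With the hypotheses in hand, Lemma~\ref{L-9-1} immediately yields that $\Gamma_{\mathbf{C}^\bot}$ is a $3$-SWRG and, since $w_2=\frac n2$, an $l$-SWRG for every odd $l\geq 3$. Its spectrum is $\{(n-2w_i)^{A_{w_i}}:0\le i\le 3\}$, and using $w_0=0$, $n-2w_2=0$ and $n-2w_3=2w_1-n=-(n-2w_1)$ this is exactly $\{n^1,(n-2w_1)^{A_{w_1}},0^{A_{w_2}},-(n-2w_1)^{A_{w_3}}\}$. For the explicit parameters when $l=3$ I would invoke parts 1) and 2) of Lemma~\ref{L-9-1}: substituting the eigenvalue $x=0$ into $x^3+(\mu_3-\lambda_3)x+(\mu_3-\nu_3)=0$ forces $\mu_3-\nu_3=0$, hence $\mu_3=\nu_3$; the reduced equation is then $x\big(x^2+\mu_3-\lambda_3\big)=0$, so the nonzero eigenvalue $\pm(n-2w_1)$ gives $(n-2w_1)^2=\lambda_3-\mu_3$, i.e. $\lambda_3=\mu_3+(n-2w_1)^2$; and since $\Gamma_{\mathbf{C}^\bot}$ has $v=2^k$ vertices (the cosets of $\mathbf{C}^\bot$), the relation $n^3+(\mu_3-\lambda_3)n+(\mu_3-\nu_3)=\mu_3 v$ reads $n^3-(n-2w_1)^2 n=\mu_3 2^k$, whence
$$\mu_3=\frac{n\big(n^2-(n-2w_1)^2\big)}{2^k}=\frac{n\cdot 2w_1\cdot 2(n-w_1)}{2^k}=\frac{4nw_1(n-w_1)}{2^k}.$$

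I do not expect any genuine obstacle here. The only real care needed is the bookkeeping in the second step --- correctly ordering the three weights of each family and tracking the $\pm$ signs of the powers of two in Tables~\ref{tab-9} and \ref{tab-10}, together with the divisibility condition on $m/r$ that keeps the relevant exponents integral. Once the symmetry of the weights about $n/2$ is confirmed for all four families, everything else is a direct appeal to Lemma~\ref{L-9-1} and the elementary manipulation of the cubic $x^3+(\mu_3-\lambda_3)x+(\mu_3-\nu_3)=0$ above.
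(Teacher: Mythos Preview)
Your proposal is correct and follows essentially the same route as the paper: verify that each of the four families from Theorems~\ref{T-5-0}--\ref{T-5-3} is a binary three-weight projective linear code whose weights satisfy the hypotheses of Lemma~\ref{L-9-1}, and then invoke that lemma. In fact you go further than the paper's own proof, which merely asserts that the weight conditions are ``easy to verify'' and stops there; your family-by-family check of the symmetry $w_2=n/2$, $w_1+w_3=n$ and your derivation of $(\lambda_3,\mu_3,\nu_3)$ from the cubic in Lemma~\ref{L-9-1} supply exactly the details the paper omits.
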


\begin{proof}{\rm
    Based on these theorems in Section \ref{sec-5} and their corresponding weight distribution tables, it is easy to verify that the above four families of codes are all three-weight binary projective linear codes. Their three weights satisfy above equations in Lemma 9.1.  Then the conclusion is proved.\\
}
\end{proof}

\begin{example}{\rm
Let $\mathbf{C}$ be a binary three-weight projective linear code with parameters $[56,6,26]_2$, which is constructed in Theorem \ref{T-5-0}. Three weights are $w_1=26$, $w_2=28$, $w_3=30$. The weight distribution is as follows, $A_{w_1} = 7$, $A_{w_2} = 35$, $A_{w_3} = 21$. Then the coset grapy $\Gamma_{\mathbf{C}^\bot}$ is an $l$-SWRG for every $l \geq 3$. The spectrum of $\Gamma_{\mathbf{C}^\bot}$ is $\{56^1, 4^7, 0^{35}, -4^{21}\}$. When $l=3$, the parameters of the 3-SWRG $\Gamma_{\mathbf{C}^\bot}$ with 64 vertices are $(2746, 2730, 2730)$.
}
\end{example}

\section{Conclusions}\label{sec-10}

In this paper, we revisited and extended the idea of linear codes from projective linear anticodes in \cite{Farrell}. We gave an antiGriesmer bound on diameters of projective linear anticodes and gave many infinite families of codes attaining or close to the bound.  A complementary theorem was proved for $t$-weight projective linear codes.  Then from one $t$-weight projective linear code, infinitely many $(t+1)$-weight projective linear codes were constructed and their weight distributions were determined.\\

In summary, the following codes were constructed in this paper.\\

1) Many $t$-weight binary linear codes with new parameters were constructed. Many codes in these families are optimal, almost optimal or near optimal. These codes include almost Griesmer complementary Reed-Solomon codes.\\

2) We constructed many optimal or near optimal minimal linear codes. These codes include complementary MDS codes.\\

3) We gave several families of $l$-strongly walk-regular graphs for each odd $l\geq 3$, from complementary three-weight binary projective linear codes constructed in this paper.\\

\end{document}